\title{Learning Sparse Fixed-Structure Gaussian Bayesian Networks}
\author{
Arnab Bhattcharyya\\
National University of Singapore\\
\texttt{arnabb@nus.edu.sg}
\and
Davin Choo\\
National University of Singapore\\
\texttt{davin@u.nus.edu}
\and
Rishikesh Gajjala\\
Indian Institute of Science, Bangalore\\
\texttt{rishikeshg@iisc.ac.in}
\and
Sutanu Gayen\\
National University of Singapore\\
\texttt{sutanugayen@gmail.com}
\and
Yuhao Wang\\
National University of Singapore\\
\texttt{yohanna.wang0924@gmail.com}
}
\date{}
\theoremstyle{plain}
\newtheorem{theorem}{Theorem}[section]
\newtheorem{lemma}[theorem]{Lemma}
\newtheorem{proposition}[theorem]{Proposition}
\newtheorem{fact}[theorem]{Fact}
\newtheorem{definition}[theorem]{Definition}
\newtheorem{corollary}[theorem]{Corollary}
\newtheorem{assumption}[theorem]{Assumption}
\newtheorem{claim}[theorem]{Claim}
\theoremstyle{remark}
\newcommand{\eps}{\varepsilon}
\newcommand{\cC}{\mathcal{C}}
\newcommand{\cO}{\mathcal{O}}
\newcommand{\cP}{\mathcal{P}}
\newcommand{\cQ}{\mathcal{Q}}
\newcommand{\E}{\mathbb{E}}
\newcommand{\R}{\mathbb{R}}
\newcommand{\Norm}[1]{\left\lVert#1\right\rVert}
\newcommand{\Abs}[1]{\left\lvert#1\right\rvert}
\newcommand{\Paren}[1]{\left(#1\right)}
\newcommand{\Brac}[1]{\left[#1\right]}
\newcommand{\Brace}[1]{\left\{#1\right\}}
\newcommand{\wh}{\widehat}
\newcommand{\wt}{\widetilde}
\newcommand{\cau}{\mathrm{Cauchy}}
\newcommand{\med}{\mathrm{median}}
\newcommand{\tv}{\mathrm{d_{TV}}}
\newcommand{\kl}{\mathrm{d_{KL}}}
\newcommand{\cp}{\mathrm{d_{CP}}}
\newcommand{\gauss}{N}
\newcommand{\ignore}[1]{}
\newcommand*\justify{%
  \fontdimen2\font=0.4em
  \fontdimen3\font=0.2em
  \fontdimen4\font=0.1em
  \fontdimen7\font=0.1em
  \hyphenchar\font=`\-
}
\newcommand{\arnab}[1]{\todo[size=\tiny, shadow, color=green!40]{AB: #1}}
\newcommand{\sutanu}[1]{\todo[size=\tiny, shadow, color=blue!20]{SG: #1}}
\newcommand{\davin}[1]{\todo[size=\tiny, shadow, color=pink!40]{DC: #1}}
\newcommand{\rishi}[1]{\todo[size=\tiny, shadow, color=yellow!40]{RG: #1}}
\newcommand{\yuhao}[1]{\todo[size=\tiny, shadow, color=orange!40]{YW: #1}}
\renewcommand{\arnab}[1]{}\renewcommand{\sutanu}[1]{}\renewcommand{\davin}[1]{}\renewcommand{\rishi}[1]{}\renewcommand{\yuhao}[1]{}\renewcommand{\todo}[1]{}
\begin{document}

\maketitle

\begin{abstract}
Gaussian Bayesian networks (a.k.a.\ linear Gaussian structural equation models) are widely used to model causal interactions among continuous variables.
In this work, we study the problem of learning a fixed-structure Gaussian Bayesian network up to a bounded error in total variation distance.
We analyze the commonly used node-wise least squares regression \texttt{LeastSquares} and prove that it has the near-optimal sample complexity.
We also study a couple of new algorithms for the problem:
\begin{itemize}
\item \texttt{BatchAvgLeastSquares} takes the average of several batches of least squares solutions at each node, so that one can interpolate between the batch size and the number of batches. We show that \texttt{BatchAvgLeastSquares} also has near-optimal sample complexity. 
\item \texttt{CauchyEst} takes the median of solutions to several batches of linear systems at each node. We show that the algorithm specialized  to polytrees, \texttt{CauchyEstTree}, has near-optimal sample complexity.
\end{itemize}

Experimentally\footnote{Code is available at \url{https://github.com/YohannaWANG/CauchyEst}}, we show that for uncontaminated, realizable data, the \texttt{LeastSquares} algorithm performs best, but in the presence of contamination or DAG misspecification, \texttt{CauchyEst}/\texttt{CauchyEstTree} and \texttt{BatchAvgLeastSquares} respectively perform better.

\end{abstract}

\section{Introduction}
\label{sec:intro}

Linear structural equation models (SEMs) are widely used to model interactions among multiple variables, and they have found applications in diverse areas, including economics, genetics, sociology, psychology and education; see~\cite{Mueller99, Mulaik09} for pointers to the extensive literature in this area.
Gaussian Bayesian networks are a popularly used form of SEMs where:
(i) there are no hidden/unobserved variables,
(ii) each variable is a linear combination of its parents plus a Gaussian noise term, and
(iii) all noise terms are independent.
The \emph{structure} of a Bayesian network refers to the directed acyclic graph (DAG) that prescribes the parent nodes for each node in the graph.

This work studies the task of learning a Gaussian Bayesian network given its structure.
The problem is obviously quite fundamental and has been subject to extensive prior work.
The usual formulation of this problem is in terms of \emph{parameter estimation}, where one wants a consistent estimator that \emph{exactly} recovers the parameters of the Bayesian network in the limit, as the the number of samples approaches infinity.
In contrast, we consider the problem from the viewpoint of \emph{distribution learning}~\cite{KMRRSS94}.
Analogous to Valiant's famous PAC model for learning Boolean functions~\cite{Val84}, the goal here is to learn, with high probability, a distribution $\wh{\cP}$ that is close to the ground-truth distribution $\cP$, using an efficient algorithm.
In this setting, pointwise convergence of the parameters is no longer a requirement; the aim is rather to approximately learn the induced distribution.
Indeed, this relaxed objective may be achievable when the former may not be (e.g., for ill-conditioned systems) and can be the more relevant requirement for downstream inference tasks.
Diakonikolas~\cite{Diakonikolas16} surveys the current state-of-the-art in distribution learning from an algorithmic perspective. 

Consider a Gaussian Bayesian network $\cP$ with $n$ variables with parameters in the form of coefficients\footnote{We write $i \leftarrow j$ to emphasize that $X_j$ is the parent of $X_i$ in the Bayesian network.} $a_{i \leftarrow j}$'s and variances $\sigma_i$'s.
For each $i \in [n]$, the \emph{linear structural equation} for variable $X_i$, with parent indices $\pi_i \subseteq [n]$, is as follows:
\[
X_i
= \eta_i + \sum_{j \in \pi_i} a_{i \leftarrow j} X_j
, \qquad \eta_i \sim N(0, \sigma_i^2)
\]
for some unknown parameters $a_{i \leftarrow j}$'s and $\sigma_i$'s.
If a variable $X_i$ has no parents, then $X_i \sim N(0, \sigma_i^2)$ is simply an independent Gaussian.
Our goal is to efficiently learn a distribution $\cQ$ such that
\[
\tv(\cP,\cQ)\leq \eps
\]
while minimizing the number of samples we draw from $\cP$ as a function of $n$, $\eps$ and relevant structure parameters.
Here, $\tv$ denotes the total variation or statistical distance between two distributions:
\[
\tv(\cP, \cQ) = \frac{1}{2} \int_{\R^n} \Abs{\cP(x)-\cQ(x)} dx.
\]

One way to approach the problem is to simply view $\cP$ as an $n$-dimensional multivariate Gaussian.
In this case, it is known that the Gaussian $\cQ \sim N(0,\wh{\Sigma})$ defined by the empirical covariance matrix $\wh{\Sigma}$, computed with $\cO(n^2/\eps^2)$ samples from $\cP$, is $\eps$-close in TV distance to $\cP$ with constant probability~\cite[Appendix C]{ashtiani2020near}.
This sample complexity is also necessary for learning general $n$-dimensional Gaussians and hence general Gaussian Bayesian networks on $n$ variables.

We focus therefore on the setting where the structure of the network is \emph{sparse}.
\begin{assumption}
\label{assumption:d-parents}
Each variable in the Bayesian network has at most $d$ parents.
i.e.\ $\Abs{\pi_i} \leq d, \forall i \in [n]$.
\end{assumption}
Sparsity is a common and very useful assumption for statistical learning problems; see the book~\cite{HTW19} for an overview of the role of sparsity in regression. 
More specifically, in our context, the assumption of bounded in-degree is popular (e.g., see~\cite{Dasgupta97, BCD20}) and also very natural, as it reflects the belief that in the correct model, each linear structural equation involves a small number of variables\footnote{More generally, one would want to assume a bound on the ``complexity'' of each equation. In our context, as each equation is linear, the complexity is simply proportional to the in-degree.}.

\subsection{Our contributions}

\begin{enumerate}
    \item Analysis of MLE \texttt{LeastSquares} and a distributed-friendly generalization \texttt{BatchAvgLeastSquares}\\
    \hspace{0pt}\\
    The standard algorithm for parameter estimation of Gaussian Bayesian networks is to perform node-wise least squares regression.
    It is easy to see that \texttt{LeastSquares} is the maximum likelihood estimator.
    However, to the best of our knowledge, there did not exist an explicit sample complexity bound for this estimator.
    We show that the sample complexity for learning $\cP$ to within TV distance $\eps$ using \texttt{LeastSquares} requires only $\wt{\cO}(nd/\eps^2)$ samples, which is nearly optimal.
    
    We also give a generalization dubbed \texttt{BatchAvgLeastSquares} which solves multiple batches of least squares problems (on smaller systems of equations), and then returns their mean.
    As each batch is independent from the others, they can be solved separately before their solutions are combined.
    Our analysis will later show that we can essentially interpolate between ``batch size'' and ``number of batches'' while maintaining a similar total sample complexity -- \texttt{LeastSquares} is the special case of a single batch.
    Notably, we do not require any additional assumptions about the coefficients or variance terms in the analyses of \texttt{LeastSquares} and \texttt{BatchAvgLeastSquares}.

    \item New algorithms based on Cauchy random variables: \texttt{CauchyEst} and \texttt{CauchyEstTree}\\
    \hspace{0pt}\\
    We develop a new algorithm \texttt{CauchyEst}.
    At each node, \texttt{CauchyEst} solves several small linear systems of equations and takes the component-wise median of the obtained solution to obtain an estimate of the coefficients for the corresponding structural equation.
    In the special case of bounded-degree \emph{polytree} structures, where the underlying undirected Bayesian network is acyclic, we specialize the algorithm to \texttt{CauchyEstTree} for which we give theoretical guarantees.
    Polytrees are of particular interest because inference on polytree-structured Bayesian networks can be performed efficiently~\cite{KP83, pearl1986fusion}.
    On polytrees, we show that the sample complexity of \texttt{CauchyEstTree} is also $\wt{\cO}(nd/\eps^2)$.
    Somewhat surprisingly, our analysis (as the name of the algorithm reflects) involves Cauchy random variables which usually do not arise in the context of regression.

    \item Hardness results\\
    \hspace{0pt}\\
    In \cref{sec:hardness}, we show that our sample complexity upper-bound is nearly optimal in terms of the dependence on the parameters $n$, $d$, and $\epsilon$.
    In particular, we show that learning the coefficients and noises of a linear structural equation model (on $n$ variables, with in-degree at most $d$) up to an $\epsilon$-error in $\tv$-distance with probability 2/3 requires $\Omega(nd\epsilon^{-2})$ samples in general.
    We use a packing argument based on Fano's inequality to achieve this result.
    
    \item Extensive experiments on synthetic and real-world networks \\
    \hspace{0pt}\\
    We experimentally compare the algorithms studied here as well as other well-known methods for undirected Gaussian graphical model regression, and investigate how the distance between the true and learned distributions changes with the number of samples. We find that the \texttt{LeastSquares} estimator performs the best among all algorithms on uncontaminated datasets. However, \texttt{CauchyEst}, \texttt{CauchyEstTree} and \texttt{BatchMedLeastSquares} outperform \texttt{LeastSquares} and \texttt{BatchAvgLeastSquares} by a large margin when a fraction of the samples are contaminated. In non-realizable/agnostic learning case, \texttt{BatchAvgLeastSquares}, \texttt{CauchyEst}, and \texttt{CauchyEstTree} performs better than the other algorithms.
\end{enumerate}

\subsection{Outline of paper}

In \cref{sec:preliminaries}, we relate KL divergence with TV distance and explain how to decompose the KL divergence into $n$ terms so that it suffices for us to estimate the parameters for each variable independently.
We also give an overview of our two-phased recovery approach and explain how to use recovered coefficients to estimate the variances via \texttt{VarianceRecovery}.
For estimating coefficients, \cref{sec:least-squares} covers the estimators based on linear least squares (\texttt{LeastSquares} and \texttt{BatchAvgLeastSquares}) while \cref{sec:cauchyest} presents our new Cauchy-based algorithms (\texttt{CauchyEst} and \texttt{CauchyEstTree}).
To complement our algorithmic results, we provide hardness results in \cref{sec:hardness} and experimental evaluation in \cref{sec:experiments}.

For a cleaner exposition, we defer some formal proofs to \cref{sec:deferred-proofs}.

\subsection{Further related work}

Bayesian networks, both in discrete and continuous settings, were formally introduced by Pearl~\cite{pearl} in 1988 to model uncertainty in AI systems.
For the continuous case, Pearl considered the case when each node is a linear function of its parents added with an independent Gaussian noise~\cite[Chapter 7]{pearl}.
The parameter learning problem -- recovering the distribution of nodes conditioned on its parents from data -- is well-studied in practice, and maximum likelihood estimators are known for various simple settings such as when the conditional distribution is Gaussian or the variables are discrete-valued.
See for example the implementation of \texttt{fit} in the R package \texttt{bnlearn}~\cite{bnlearn}.

The focus of our paper is to give formal guarantees for the parameter learning in the PAC framework introduced by Valiant~\cite{Val84} in 1984.
Subsequently, Haussler~\cite{haussler} generalized this framework for studying parameter and density estimation problems of continuous distributions.
Dasgupta~\cite{Dasgupta97} first looked at the problem of parameter learning for fixed structure Bayesian networks in the discrete and continuous settings and gave finite sample complexity bounds for these problems based on the VC-dimensions of the hypothesis classes.
In particular, he gave an algorithm for learning the parameters of a Bayes net on $n$ binary variables of bounded in-degree in $\kl$ distance using a quadratic in $n$ samples.
Subsequently, tight (linear) sample complexity upper and lower bounds were shown for this problem~\cite{BGMV,bgpv,CDKS}.
To the best of our knowledge, a finite PAC-style bound for fixed-structure Gaussian Bayesian networks was not known previously.

The question of structure learning for Gaussian Bayesian networks has been extensively studied.
A number of works~\cite{PB14, GH17, chen2019causal, PK20, park2020identifiability, GDA20} have proposed increasingly general conditions for ensuring identifiability of the network structure from observations.
Structure learning algorithms that work for high-dimensional Gaussian Bayesian networks have also been proposed by others (e.g., see~\cite{AQ15, AGZ19, GZ20}).

\section{Preliminaries}
\label{sec:preliminaries}

In this section, we discuss why we upper bound the total variational distance using KL divergence and give a decomposition of the KL divergence into $n$ terms, one associated with each variable in the Bayesian network.
This decomposition motivates why our algorithms and analysis focus on recovering parameters for a single variable.
We also present our general two-phased recovery approach and explain how to estimate variances using recovered coefficients in \cref{sec:two-phased}.

\subsection{Notation}
\label{sec:notation}

A Bayesian network (Bayes net in short) $\cP$ is a joint distribution $\cP$ over $n$ variables $X_1, \ldots, X_n$ defined by the underlying directed acyclic graph (DAG) $G$.
The DAG $G = (V,E)$ encodes the dependence between the variables where $V = \Brace{X_1, \ldots, X_n}$ and $(X_j, X_i) \in E$ if and only if $X_j$ is a parent of $X_i$.
For any variable $X_i$, we use the set $\pi_i \subseteq [n]$ to represent the indices of $X_i$'s parents.
Under this notation, each variable $X_i$ of $\cP$ is independent of $X_i$'s non-descendants conditioned on $\pi_i$.
Therefore, using Bayes rule in the topological order of $G$, we have
\[
\cP(X_1, \ldots, X_n) = \prod_{i=1}^n \Pr_{\cP}(X_i \mid \pi_i)
\]

Without loss of generality, by renaming variables, we may assume that each variable $X_i$ only has ancestors with indices smaller than $i$.
We also define $p_i = \Abs{\pi_i}$ as the number of parents of $X_i$ and $d_{avg} = \frac{1}{n} \sum_{i=1}^n p_i$ to be average in-degree.
Furthermore, a DAG $G$ is a \emph{polytree} if the undirected version of $G$ is a acyclic.

We study the \emph{realizable} setting where our unknown probability distribution $\cP$ is Markov with respect to the given Bayesian network.
We denote the true (hidden) parameters associated with $\cP$ by $\alpha^* = (\alpha^*_1, \ldots , \alpha^*_n)$.
Our algorithms recover parameter estimates $\wh{\alpha} = (\wh{\alpha}_1, \ldots , \wh{\alpha}_n)$ such that the induced probability distribution $\cQ$ is close in total variational distance to $\cP$.
For each $i \in [n]$, $\alpha_i^* = (A_i, \sigma_i)$ is the set of ground truth parameters associated with variable $X_i$, $A_i$ is the coefficients associated to $\pi_i$, $\sigma_i^2$ is the variance of $\eta_i$, $\wh{\alpha}_i^* = (\wh{A}_i, \wh{\sigma}_i)$ is our estimate of $\alpha_i^*$.

In the course of the paper, we will often focus on the perspective a single variable of interest.
This allows us to drop a subscript for a cleaner discussion.
Let us denote such a variable of interest by $Y \in V$ and use the index $y \in [n]$.
Without loss of generality, by renaming variables, we may further assume that the parents of $Y$ are $X_1, \ldots, X_p$.
By \cref{assumption:d-parents}, we know that $p \leq d$.
We can write $Y = \eta_y + \sum_{i=1}^p a_{y \leftarrow i} X_i$ where $\eta_y \sim N(0, \sigma_y^2)$.
We use matrix $M \in \R^{p \times p}$ to denote the covariance matrix defined by the parents of $Y$, where $M_{i,j} = \E\Brac{X_i X_j}$ and $M = LL^\top$ is the Cholesky decomposition of $M$.
Under this notation, we see the vector $(X_1, \ldots, X_p) \sim N(0, M)$ is distributed as a multivariate Gaussian.
Our goal is then to produce estimates $\wh{a}_{y \leftarrow i}$ for each $a_{y \leftarrow i}$.
For notational convenience, we can group the coefficients into $A = \Brac{a_{y \leftarrow 1}, \ldots, a_{y \leftarrow p}}^\top$ and $\wh{A} = \Brac{\wh{a}_{y \leftarrow 1}, \ldots, \wh{a}_{y \leftarrow p}}^\top$.
The vector $\Delta = (\wh{A} - A)^\top$ captures the entry-wise gap between our estimates and the ground truth.

We write $[n]$ to mean $\{1, 2, \ldots, n\}$ and $\Abs{S}$ to mean the size of a set $S$.
For a matrix $M$, $M_{i,j}$ denotes its $(i,j)$-th entry.
We use $\Norm{\cdot}$ to both mean the operator/spectral norm for matrices and $L_2$-norm for vectors, which should be clear from context.
We hide absolute constant multiplicative factors and multiplicative factors poly-logarithmic in the argument using standard notations: $\cO(\cdot)$, $\Omega(\cdot)$, and $\widetilde{\cO}(\cdot)$.

\subsection{Basic facts and results}

We begin by stating some standard facts and results.

\begin{fact}
\label{fact:standard-gaussian-tail-bound}
Suppose $X \sim N(0, \sigma^2)$.
Then, for any $t > 0$,
$
\Pr\Paren{X > t} \leq \exp\Paren{-\frac{t^2}{2 \sigma^2}}
$.
\end{fact}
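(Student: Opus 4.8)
The plan is to apply the standard Chernoff bounding technique, exploiting the fact that the moment generating function (MGF) of a centered Gaussian is available in closed form. First I would recall (or verify by completing the square inside the Gaussian integral) that for $X \sim N(0, \sigma^2)$ one has $\E\Brac{e^{\lambda X}} = \exp\Paren{\lambda^2 \sigma^2 / 2}$ for every $\lambda \in \R$.

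Next, fix any $\lambda > 0$. Since $t > 0$, the event $\Brace{X > t}$ coincides with $\Brace{e^{\lambda X} > e^{\lambda t}}$, so applying Markov's inequality to the nonnegative random variable $e^{\lambda X}$ gives
\[
\Pr\Paren{X > t} \;=\; \Pr\Paren{e^{\lambda X} > e^{\lambda t}} \;\leq\; e^{-\lambda t}\, \E\Brac{e^{\lambda X}} \;=\; \exp\Paren{\frac{\lambda^2 \sigma^2}{2} - \lambda t}.
\]
Finally, I would optimize the free parameter $\lambda$. The exponent $\tfrac{\lambda^2 \sigma^2}{2} - \lambda t$ is a convex quadratic in $\lambda$ and is minimized at $\lambda^\star = t/\sigma^2$, which is strictly positive because $t > 0$ (so it is a legitimate choice for the Markov step above). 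Substituting $\lambda = \lambda^\star$ makes the exponent equal to $-t^2/(2\sigma^2)$, yielding the claimed tail bound.

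Since this is a textbook estimate, there is no genuine obstacle; the only points requiring a moment of care are verifying the MGF identity and checking that the minimizing $\lambda^\star$ is positive so that Markov's inequality applies. An alternative route is to bound the tail integral directly via the substitution $x = t + u$ together with the inequality $e^{-(2tu+u^2)/(2\sigma^2)} \leq e^{-tu/\sigma^2}$, which in fact yields the sharper Mills-ratio bound $\Pr\Paren{X > t} \leq \frac{\sigma}{\sqrt{2\pi}\, t}\, e^{-t^2/(2\sigma^2)}$. However, since that prefactor can exceed $1$ for small $t$, the MGF approach is preferable for obtaining the stated clean bound uniformly over all $t > 0$.
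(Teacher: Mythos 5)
Your proposal is correct: the Chernoff/MGF argument with the Gaussian moment generating function $\E\bigl[e^{\lambda X}\bigr] = \exp\bigl(\lambda^2\sigma^2/2\bigr)$, optimized at $\lambda^\star = t/\sigma^2$, yields exactly the stated bound $\exp\bigl(-t^2/(2\sigma^2)\bigr)$, and your check that $\lambda^\star > 0$ is the only delicate point. The paper states this as a standard fact without proof, and your derivation is precisely the canonical one it implicitly relies on; your closing remark about why the Mills-ratio bound is unsuitable here (its prefactor exceeds $1$ for small $t$) is also accurate.
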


\begin{fact}
\label{fact:linear-combination-of-gaussians}
Consider any matrix $B \in \R^{n \times m}$ with rows $B_1, \ldots, B_n \in \R^m$.
For any $i \in [m]$ and any vector $v \in \R^m$ with i.i.d.\ $N(0, \sigma^2)$ entries, we have that
$
(Bv)_i = B_i v \sim N(0, \sigma^2 \cdot \Norm{B_i}^2)
$.
\end{fact}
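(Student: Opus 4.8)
The plan is to expand the inner product $B_i v$ as an explicit linear combination of the coordinates of $v$ and then invoke closure of the Gaussian family under linear combinations of independent Gaussians. Writing the $i$-th row as $B_i = (B_{i,1}, \ldots, B_{i,m})$, I would first observe that
\[
(Bv)_i = B_i v = \sum_{j=1}^m B_{i,j} v_j,
\]
where each $v_j \sim N(0, \sigma^2)$ is drawn independently. Since a finite linear combination of independent Gaussian random variables is again Gaussian, the quantity $B_i v$ is Gaussian, and it remains only to identify its mean and variance.

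For the mean, linearity of expectation gives $\E\Brac{B_i v} = \sum_{j=1}^m B_{i,j} \E\Brac{v_j} = 0$, since each $v_j$ is centered. For the variance, I would use independence of the $v_j$ to write
\[
\Var\Paren{B_i v} = \sum_{j=1}^m B_{i,j}^2 \, \Var(v_j) = \sigma^2 \sum_{j=1}^m B_{i,j}^2 = \sigma^2 \Norm{B_i}^2,
\]
which yields $B_i v \sim N\Paren{0, \sigma^2 \Norm{B_i}^2}$ as claimed.

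There is essentially no substantive obstacle here: the only ingredient used beyond elementary first- and second-moment computations is that the Gaussian family is closed under linear combinations of independent components, which is a standard fact. If one prefers a fully self-contained argument that establishes Gaussianity and the parameters simultaneously, I would instead compute the characteristic function directly, using independence to factor it as $\prod_{j=1}^m \exp\Paren{-\tfrac{1}{2}\sigma^2 B_{i,j}^2 t^2} = \exp\Paren{-\tfrac{1}{2}\sigma^2 \Norm{B_i}^2 t^2}$, which is exactly the characteristic function of $N\Paren{0, \sigma^2 \Norm{B_i}^2}$.
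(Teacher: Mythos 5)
Your proof is correct. The paper states this as a standard fact without providing any proof, so there is nothing to diverge from: your argument — expanding $B_i v = \sum_{j=1}^m B_{i,j} v_j$ and invoking closure of independent Gaussians under linear combinations, with the characteristic-function computation as a self-contained alternative — is exactly the canonical justification the paper implicitly relies on.
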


\begin{fact}[Theorem 2.2 in \cite{Gut_2009})]
\label{fact:multivariate-transformation}
Let $X_1, \ldots, X_p \sim N(0, LL^\top)$ be $p$ i.i.d.\ $n$-dimensional multivariate Gaussians with covariance $LL^\top \in \R^{n \times n}$ (i.e.\ $L \in \R^{n \times p}$).
If $X \in \R^{p \times n}$ is the matrix formed by stacking $X_1, \ldots, X_p$ as rows of $X$, then $X = GL^\top$ where $G \in \R^{p \times p}$ is a random matrix with i.i.d.\ $N(0,1)$ entries.\footnote{The transformation stated in \cite[Theorem 2.2, page 120]{Gut_2009} is for a single multivariate Gaussian vector, thus we need to take the transpose when we stack them in rows. Note that $G$ and $G^\top$ are identically distributed.}
\end{fact}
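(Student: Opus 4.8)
The plan is to reduce the statement to the elementary fact that a linear image of a standard Gaussian is itself Gaussian with the induced covariance, and then to keep careful track of the row-versus-column bookkeeping. First I would observe that if $g \sim N(0, I_p)$ then $Lg$ is jointly Gaussian: for any $c \in \R^n$, the scalar $c^\top(Lg) = (c^\top L)g$ is a single linear combination of i.i.d.\ standard Gaussians, hence univariate Gaussian by \cref{fact:linear-combination-of-gaussians} (applied to the $1 \times p$ matrix $c^\top L$ with $\sigma^2 = 1$). Since every one-dimensional projection of $Lg$ is Gaussian, $Lg$ is multivariate Gaussian; it has mean $0$ and covariance $L\,\E[g g^\top]\,L^\top = L I_p L^\top = LL^\top$, so $Lg \sim N(0, LL^\top)$. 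This identity requires no rank assumption on $L$, so it also covers the degenerate case $p < n$.

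Next I would lift this to all $p$ samples at once. By the previous step we may realize the samples as $X_k \stackrel{d}{=} L g_k$ with $g_1, \ldots, g_p \sim N(0, I_p)$; taking the $g_k$ mutually independent reproduces the independence of $X_1, \ldots, X_p$, so the full joint law is matched. Writing each $X_k$ as a column vector, the $k$-th \emph{row} of the stacked matrix $X \in \R^{p \times n}$ is $X_k^\top = (L g_k)^\top = g_k^\top L^\top$. Assembling the row vectors $g_1^\top, \ldots, g_p^\top$ into a matrix $G \in \R^{p \times p}$, this says precisely that the $k$-th row of $X$ equals the $k$-th row of $G L^\top$, i.e.\ $X = G L^\top$, understood as an equality in distribution.

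Finally I would check that $G$ has the stated law: its rows are the independent vectors $g_k^\top$, and each $g_k$ has i.i.d.\ $N(0,1)$ coordinates, so all $p^2$ entries of $G$ are i.i.d.\ $N(0,1)$ and $G \in \R^{p \times p}$ as required. I do not expect a genuine obstacle, since the argument is a direct combination of \cref{fact:linear-combination-of-gaussians} with the characterization of a centered Gaussian by its covariance. The only points demanding care are (i) reading $X = G L^\top$ as equality in distribution rather than as an almost-sure identity, and (ii) the transpose introduced by stacking the samples as rows instead of as one long vector — exactly the two caveats already flagged in the footnote accompanying the statement.
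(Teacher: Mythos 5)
Your proof is correct, but it is worth noting that the paper does not actually prove this statement at all: it is stated as a \emph{Fact}, with the entire burden carried by the citation to \cite[Theorem 2.2, page 120]{Gut_2009} (the representation theorem asserting that a single vector $Z \sim N(0,\Sigma)$ can be written as $Z \stackrel{d}{=} Bg$ with $BB^\top = \Sigma$ and $g$ standard normal), plus the footnote handling the transpose that arises from stacking samples as rows. What you have done is supply a self-contained proof of that cited ingredient and of the stacking step: you establish $Lg \sim N(0, LL^\top)$ from \cref{fact:linear-combination-of-gaussians} together with the characterization of a (possibly degenerate) multivariate Gaussian as a vector all of whose one-dimensional projections are Gaussian, then realize the $p$ i.i.d.\ samples jointly as $Lg_1, \ldots, Lg_p$ and read off $X = GL^\top$ row by row. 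The argument is sound, and your two caveats are exactly the right ones: the identity is an equality in distribution (which suffices for every downstream use in the paper, since those proofs only compute probabilities of events determined by the joint law of $X$ and independent noise; moreover, when $L$ is the Cholesky factor of a nonsingular covariance, $G = X(L^\top)^{-1}$ even gives an almost-sure identity), and the row-stacking transpose is precisely what the paper's footnote flags. One small point to keep in mind: the step ``every projection Gaussian $\Rightarrow$ jointly Gaussian'' is definitional rather than a theorem — it is the standard projection-based definition of a multivariate Gaussian — so your proof is only as elementary as that choice of definition; under a density-based definition one would need an extra word for the rank-deficient case $p < n$, which your formulation deliberately and correctly sidesteps.
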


\begin{lemma}[Equation 2.19 in \cite{wainwright2019high}]
\label{lem:chi-square-concentration}
Let $Y = \sum_{k=1}^n Z_k^2$, where each $Z_k \sim N(0,1)$.
Then, $Y \sim \chi^2_n$ and for any $0 < t < 1$, we have
$
\Pr\Paren{ \Abs{Y/n - 1} \geq t }
\leq 2 \exp\Paren{ -nt^2/8 }
$.
\end{lemma}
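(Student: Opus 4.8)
The plan is to obtain the stated two-sided deviation bound via the standard Chernoff method applied to the centered sum $Y - n = \sum_{k=1}^n (Z_k^2 - 1)$ of i.i.d., mean-zero, sub-exponential summands. The distributional claim $Y \sim \chi^2_n$ is immediate from the definition (a sum of squares of $n$ independent standard normals), so the real content is the tail estimate, and I would focus entirely on that.

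First I would compute the moment generating function of a single summand. For $Z \sim N(0,1)$, a direct Gaussian integral gives $\E[e^{\lambda Z^2}] = (1 - 2\lambda)^{-1/2}$ for $\lambda < 1/2$, and hence $\E[e^{\lambda(Z^2 - 1)}] = e^{-\lambda}(1-2\lambda)^{-1/2}$ on that range. By independence this multiplies across coordinates, so $\E[e^{\lambda(Y-n)}] = e^{-\lambda n}(1-2\lambda)^{-n/2}$ for $\lambda < 1/2$.

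Next, for the upper tail I would apply Markov's inequality to $e^{\lambda(Y-n)}$ with $\lambda \in (0, 1/2)$, obtaining $\Pr(Y - n \geq nt) \leq \exp(n \cdot g(\lambda))$ where $g(\lambda) = -\lambda(1+t) - \tfrac{1}{2}\log(1-2\lambda)$. Minimizing $g$ over $\lambda$ is a one-variable calculus exercise: the optimizer is $\lambda^* = t/(2(1+t))$, which lies in $(0,1/2)$ for every $t > 0$, and substituting it back collapses the expression to $g(\lambda^*) = -\tfrac{1}{2}\Paren{t - \log(1+t)}$. This yields $\Pr(Y - n \geq nt) \leq \exp\Paren{-\tfrac{n}{2}(t - \log(1+t))}$, and an entirely symmetric argument with $\lambda < 0$ gives $\Pr(n - Y \geq nt) \leq \exp\Paren{-\tfrac{n}{2}(-t - \log(1-t))}$.

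Finally, I would reduce these exact rate functions to the claimed Gaussian-type exponent using the elementary inequalities $t - \log(1+t) \geq t^2/4$ and $-t - \log(1-t) \geq t^2/2$, both valid for $0 < t < 1$; the first (the binding one) follows by checking that $h(t) = t - \log(1+t) - t^2/4$ satisfies $h(0) = 0$ and $h'(t) = t(1-t)/(2(1+t)) \geq 0$ on $(0,1)$, and the second from the power series of $\log(1-t)$. Each tail is then at most $\exp(-nt^2/8)$, and a union bound supplies the factor of $2$. I expect the only genuine subtlety to be pinning down where the hypothesis $t < 1$ enters: it is exactly the threshold at which the sub-exponential tail leaves its sub-Gaussian regime, so that a quadratic exponent $nt^2/8$ (rather than one linear in $t$) still governs the bound, and the constant $1/4$ in the intermediate inequality is chosen to make this clean. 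Everything else is routine.
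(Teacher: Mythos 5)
Your proof is correct. Note, though, that the paper does not prove this lemma at all: it is imported as a known fact, cited directly to Equation 2.19 of Wainwright's book, so there is no in-paper argument to compare against. Your derivation is a valid self-contained reconstruction of the standard proof behind that citation: every step checks out, including the MGF computation $\E[e^{\lambda(Z^2-1)}]=e^{-\lambda}(1-2\lambda)^{-1/2}$, the optimized Chernoff exponents $\tfrac12\bigl(t-\log(1+t)\bigr)$ and $\tfrac12\bigl(-t-\log(1-t)\bigr)$, and the elementary inequalities $t-\log(1+t)\ge t^2/4$ and $-t-\log(1-t)\ge t^2/2$ on $(0,1)$. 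The only stylistic difference from Wainwright's treatment is that he dominates the MGF by $e^{2\lambda^2}$ for $|\lambda|\le 1/4$ (casting $Z^2-1$ as sub-exponential with parameters $(2,4)$) and then invokes the generic Bernstein-type tail bound, whereas you carry out the exact rate-function optimization and relax it at the end; your route gives the same constant $nt^2/8$ (in fact slightly better on the lower tail, $nt^2/4$) and makes the role of the hypothesis $t<1$ equally transparent.
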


\begin{lemma}[Consequence of Corollary 3.11 in \cite{bandeira2016sharp}]
\label{sec:norm-of-rectangular-gaussian-matrix}
Let $G \in \R^{n \times m}$ be a matrix with i.i.d.\ $N(0,1)$ entries where $n \leq m$.
Then, for some universal constant $C$,
$
\Pr\Paren{\Norm{G} \geq 2(\sqrt{n} + \sqrt{m})} \leq \sqrt{n} \cdot \exp\Paren{-C \cdot m}
$.
\end{lemma}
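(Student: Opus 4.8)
The statement is a high-probability upper bound on the spectral norm of a rectangular Gaussian matrix, and the natural plan is to separate it into two pieces: (i) a bound on $\E\Brac{\Norm{G}}$ that supplies the scale $\sqrt{n}+\sqrt{m}$, and (ii) a concentration-around-the-mean estimate that supplies the exponential tail. This is exactly the pair of ingredients that Corollary 3.11 of \cite{bandeira2016sharp} bundles together, so at the coarsest level the proof is a matter of instantiating that corollary for the i.i.d.\ $N(0,1)$ rectangular ensemble and then simplifying the resulting constants into the advertised form. For the scale I would reduce the rectangular case to a symmetric one via the self-adjoint dilation whose off-diagonal blocks are $G$ and $G^\top$, whose operator norm equals $\Norm{G}$; this lets one apply the symmetric comparison inequality underlying the corollary. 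The key analytic input is a Gaussian comparison (Slepian/Gordon) inequality, which yields $\E\Brac{\Norm{G}} \leq \sqrt{n} + \sqrt{m}$, and this is precisely the content I would extract from \cite{bandeira2016sharp} rather than reprove.

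For the tail I would invoke Gaussian concentration of Lipschitz functions. Viewing the $nm$ entries of $G$ as a standard Gaussian vector in $\R^{nm}$, the map $G \mapsto \Norm{G}$ is $1$-Lipschitz with respect to the Euclidean (Frobenius) metric, since $\Abs{\Norm{G} - \Norm{G'}} \leq \Norm{G - G'} \leq \Norm{G-G'}_F$. Hence $\Pr\Paren{\Norm{G} \geq \E\Brac{\Norm{G}} + t} \leq \exp\Paren{-t^2/2}$ for every $t > 0$. Combining this with the mean bound and choosing $t = \sqrt{n}+\sqrt{m}$ gives $\Pr\Paren{\Norm{G} \geq 2(\sqrt{n}+\sqrt{m})} \leq \exp\Paren{-(\sqrt{n}+\sqrt{m})^2/2}$. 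Finally I would coarsen the exponent using $(\sqrt{n}+\sqrt{m})^2 \geq m$ and absorb the trivial factor $1 \leq \sqrt{n}$ to reach the stated bound $\sqrt{n}\cdot\exp\Paren{-C m}$ with $C = 1/2$.

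The main obstacle is not the concentration step, which is routine, but pinning down the expectation bound $\E\Brac{\Norm{G}} \leq \sqrt{n}+\sqrt{m}$ with the correct constants: it rests on the Gaussian comparison inequality applied through the dilation, and this is exactly what we lean on \cite{bandeira2016sharp} for. A secondary, purely bookkeeping point is reconciling the $\sqrt{n}$ prefactor and the universal constant $C$ with the precise quantitative form in which the cited corollary is stated; since we only need an upper bound, any slack (in particular the free factor $\sqrt{n} \geq 1$) works in our favour, so the constants can be chosen generously.
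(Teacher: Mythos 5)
Your proof is correct, but it takes a genuinely different route from the paper: the paper gives no derivation at all for this lemma, stating it purely as an instantiation of Corollary 3.11 in \cite{bandeira2016sharp} (whence the $\sqrt{n}$ prefactor and the unspecified constant $C$, which are artifacts of that corollary's general form). You instead give a self-contained classical argument in the style of Davidson--Szarek: the Gordon/Slepian comparison bound $\E\Brac{\Norm{G}} \leq \sqrt{n}+\sqrt{m}$, followed by Gaussian concentration for the $1$-Lipschitz map $G \mapsto \Norm{G}$, then the coarsening $(\sqrt{n}+\sqrt{m})^2 \geq m$ and the slack factor $\sqrt{n}\geq 1$. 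All steps check out, and your route buys something the citation does not: explicit constants (you get $C = 1/2$) and the observation that the $\sqrt{n}$ prefactor is pure slack, since your bound already holds with prefactor $1$; it also avoids invoking the full strength of \cite{bandeira2016sharp}, whose contribution is sharp bounds for \emph{non-homogeneous} variance profiles, overkill for an i.i.d.\ standard Gaussian matrix. One small attribution quibble: the expectation bound $\E\Brac{\Norm{G}} \leq \sqrt{n}+\sqrt{m}$ is not really the content of Corollary 3.11 of \cite{bandeira2016sharp}; it is Gordon's theorem (via the Slepian--Gordon comparison), a much older and standard fact, so your proof is in fact entirely independent of that reference and could cite Gordon or Davidson--Szarek instead.
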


\begin{restatable}{lemma}{gtg}
\label{lem:gtg}
Let $G \in \R^{k \times d}$ be a matrix with i.i.d.\ $N(0,1)$ entries.
Then, for any constant $0 < c_1 < 1/2$ and $k \geq d / c_1^2$,
\[
\Pr\Paren{ \Norm{(G^\top G)^{-1}}_{op} \leq \frac{1}{\Paren{1 - 2 c_1}^2 k} } \geq 1 - \exp\Paren{- \frac{k c_1^2}{2}}
\]
\end{restatable}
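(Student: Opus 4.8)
The plan is to reduce the claim to a lower bound on the smallest singular value of $G$ and then invoke Gaussian concentration. First I would observe that $G^\top G \in \R^{d \times d}$ is positive semidefinite with eigenvalues $\sigma_1(G)^2 \geq \cdots \geq \sigma_d(G)^2$, where $\sigma_{\min}(G) = \sigma_d(G)$ denotes the smallest singular value of $G$. Hence $G^\top G$ is invertible exactly when $\sigma_{\min}(G) > 0$, and in that case $\Norm{(G^\top G)^{-1}}_{op} = 1/\lambda_{\min}(G^\top G) = 1/\sigma_{\min}(G)^2$. Since $c_1 < 1/2$ gives $1 - 2c_1 > 0$, the target event is exactly
\[
\Brace{ \Norm{(G^\top G)^{-1}}_{op} \leq \frac{1}{(1 - 2c_1)^2 k} } = \Brace{ \sigma_{\min}(G) \geq (1 - 2c_1)\sqrt{k} },
\]
so it suffices to lower bound $\sigma_{\min}(G)$ with the stated probability.

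Next I would apply the standard concentration bound for the smallest singular value of a tall Gaussian matrix (Davidson--Szarek): for $k \geq d$ and every $t > 0$,
\[
\Pr\Paren{ \sigma_{\min}(G) \leq \sqrt{k} - \sqrt{d} - t } \leq \exp\Paren{-\frac{t^2}{2}}.
\]
This rests on two ingredients I would state explicitly. The map $G \mapsto \sigma_{\min}(G)$ is $1$-Lipschitz with respect to the Frobenius norm (Weyl's perturbation inequality gives $\Abs{\sigma_{\min}(A) - \sigma_{\min}(B)} \leq \Norm{A - B}_{op} \leq \Norm{A - B}_F$), so viewing the $kd$ i.i.d.\ entries of $G$ as a standard Gaussian vector, Gaussian concentration of Lipschitz functions yields $\Pr\Paren{\sigma_{\min}(G) \leq \E\Brac{\sigma_{\min}(G)} - t} \leq \exp\Paren{-t^2/2}$. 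Gordon's comparison inequality then supplies the mean lower bound $\E\Brac{\sigma_{\min}(G)} \geq \sqrt{k} - \sqrt{d}$, and combining the two gives the displayed tail.

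Finally I would instantiate the parameters. The hypothesis $k \geq d / c_1^2$ gives $\sqrt{d} \leq c_1 \sqrt{k}$, and choosing $t = c_1 \sqrt{k}$ yields $\sqrt{k} - \sqrt{d} - t \geq (1 - 2c_1)\sqrt{k}$ together with $\exp\Paren{-t^2/2} = \exp\Paren{-k c_1^2 / 2}$, reproducing the claimed bound after the reduction of the first paragraph. The main obstacle is the mean estimate $\E\Brac{\sigma_{\min}(G)} \geq \sqrt{k} - \sqrt{d}$: the Lipschitz concentration step is routine, but justifying this lower bound rigorously requires Gordon's (Slepian-type) Gaussian comparison theorem, which is the technical heart of the argument. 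I note that one could instead try to work only from \cref{lem:chi-square-concentration} and \cref{sec:norm-of-rectangular-gaussian-matrix} via an $\eps$-net over the unit sphere $S^{d-1}$---lower bounding $\min_{v \in \cN} \Norm{Gv}$ pointwise (each $\Norm{Gv}^2 \sim \chi^2_k$ by \cref{fact:linear-combination-of-gaussians}) and controlling the net-to-sphere error through $\Norm{G}_{op}$---but the net cardinality $(1 + 2/\eps)^d$ contributes a factor of order $\exp\Paren{\Omega(d \log(1/c_1))}$ that degrades the exponent for small $c_1$ and does not reproduce the clean $\exp\Paren{-k c_1^2 / 2}$ rate, so I would favour the concentration route above.
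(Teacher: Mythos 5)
Your proposal is correct and follows essentially the same route as the paper: both reduce $\Norm{(G^\top G)^{-1}}_{op}$ to $1/\sigma_{\min}^2(G)$ and then invoke the standard Gaussian smallest-singular-value concentration bound (which the paper cites from \cite{rudelson2009smallest} and \cite{wainwright2019high} rather than re-deriving via Lipschitz concentration and Gordon's inequality as you sketch), with the identical parameter choice $t = c_1\sqrt{k}$ and the hypothesis $k \geq d/c_1^2$ giving $\sqrt{k}(1-c_1) - \sqrt{d} \geq (1-2c_1)\sqrt{k}$.
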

\begin{proof}
See \cref{sec:deferred-proofs}.
\end{proof}

\begin{restatable}{lemma}{Getanorm}
\label{lem:Getanorm}
Let $G \in \R^{k \times p}$ be a matrix with i.i.d.\ $N(0,1)$ entries and $\eta \in \R^{k}$ be a vector with i.i.d.\ $N(0, \sigma^2)$ entries, where $G$ and $\eta$ are independent.
Then, for any constant $c_2 > 0$,
\[
\Pr\Paren{ \Norm{G^\top \eta}_2 < 2 \sigma c_2 \sqrt{k p}}
\geq 1 - 2p \exp\Paren{-2k} - p \exp\Paren{-\frac{c_2^2}{2}}
\]
\end{restatable}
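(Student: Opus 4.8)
The plan is to bound $\Norm{G^\top\eta}_2$ coordinate by coordinate and then sum, since the two failure-probability terms each carry a factor of $p$ (respectively $2p$), which strongly signals a union bound over the $p$ columns of $G$. Writing $g_1, \ldots, g_p \in \R^k$ for the columns of $G$, the $j$-th entry of $G^\top\eta$ is exactly $g_j^\top\eta$, so $\Norm{G^\top\eta}_2^2 = \sum_{j=1}^p \Paren{g_j^\top\eta}^2$. It therefore suffices to show that, with the stated probability, every coordinate obeys $\Paren{g_j^\top\eta}^2 < 4\sigma^2 c_2^2 k$; summing the $p$ inequalities gives $\Norm{G^\top\eta}_2^2 < 4\sigma^2 c_2^2 kp$, which is the claim.

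Next I would fix a column index $j$ and expose the two sources of randomness in turn. Conditioning on the column $g_j$ (equivalently, on all of $G$), the vector $\eta$ still has i.i.d.\ $N(0,\sigma^2)$ entries, so \cref{fact:linear-combination-of-gaussians} applied to the single row $g_j^\top$ gives $g_j^\top\eta \sim N\Paren{0, \sigma^2 \Norm{g_j}^2}$. Thus, conditioned on the realized column, $g_j^\top\eta$ is a one-dimensional centered Gaussian whose variance is the random quantity $\sigma^2\Norm{g_j}^2$. To control this variance I would argue separately about the norm: since $g_j$ has i.i.d.\ $N(0,1)$ entries, $\Norm{g_j}^2 \sim \chi^2_k$, and \cref{lem:chi-square-concentration} certifies $\Norm{g_j}^2 = O(k)$ outside a ``bad norm'' event of probability at most $2\exp(-2k)$. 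On the good event the conditional variance is $O(\sigma^2 k)$, so \cref{fact:standard-gaussian-tail-bound} with threshold $t = 2\sigma c_2\sqrt{k}$ yields a conditional tail of the form $\Pr\Paren{\Abs{g_j^\top\eta} \ge 2\sigma c_2\sqrt{k}} \le \exp\Paren{-c_2^2/2}$, with the precise constant in the exponent dictated by the chosen norm threshold. Composing the two steps for a single column, $\Pr\Paren{\Abs{g_j^\top\eta} \ge 2\sigma c_2\sqrt{k}}$ is at most the conditional Gaussian tail plus the norm-failure probability, i.e.\ $\exp\Paren{-c_2^2/2} + 2\exp\Paren{-2k}$, and a union bound over $j \in [p]$ produces exactly $p\exp\Paren{-c_2^2/2} + 2p\exp\Paren{-2k}$. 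On the intersection of all good events, every coordinate satisfies $\Paren{g_j^\top\eta}^2 < 4\sigma^2 c_2^2 k$, and summing gives $\Norm{G^\top\eta}_2 < 2\sigma c_2\sqrt{kp}$.

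The main obstacle is the two-layer randomness: one cannot treat $g_j^\top\eta$ as a fixed-variance Gaussian, because its variance is itself the random quantity $\sigma^2\Norm{g_j}^2$, so the Gaussian tail must be applied only on the good-norm event and the two failure probabilities added. The delicate point is the calibration between the two scales: I must pick the threshold for $\Norm{g_j}^2$ so that the $\chi^2_k$ tail lands near $\exp\Paren{-2k}$ while the induced conditional variance is still small enough that the Gaussian tail lands near $\exp\Paren{-c_2^2/2}$, and I expect reconciling these two constants simultaneously (and checking that the union bound over both failure modes and all $p$ columns composes without hidden cross-terms) to be where the real care is needed.
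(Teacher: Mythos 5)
Your skeleton is the same as the paper's: decompose $\Norm{G^\top\eta}_2^2=\sum_{j=1}^{p}\langle g_j,\eta\rangle^2$ over the $p$ columns of $G$, control each coordinate at level $2\sigma c_2\sqrt{k}$, and union bound over the columns. Where you diverge is in how the two layers of randomness are split, and the divergence lands exactly on the point you flagged as delicate. The paper conditions the opposite way: it writes $\langle g_j,\eta\rangle=\Norm{\eta}_2\cdot\langle g_j,\eta/\Norm{\eta}_2\rangle$, controls the single norm $\Norm{\eta}_2\le 2\sigma\sqrt{k}$ once, and then uses rotation invariance: since $g_j\sim N(0,I_k)$ is independent of $\eta$, the projection $\langle g_j,\eta/\Norm{\eta}_2\rangle$ onto a unit vector is \emph{exactly} $N(0,1)$, so the $\exp(-c_2^2/2)$ tail is completely decoupled from the norm threshold. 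In your version the conditional variance $\sigma^2\Norm{g_j}^2$ is a $\chi^2_k$ variable, and the two exponents remain coupled through your choice of norm threshold.

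That coupling cannot be resolved at the stated constants, so the calibration you were worried about genuinely fails. First, \cref{lem:chi-square-concentration} is only valid for $t<1$, so the best per-column event it certifies is $\Norm{g_j}^2\le 2k$ with failure probability $2\exp(-k/8)$ --- not $2\exp(-2k)$ as you asserted. Second, and more fundamentally, getting $\Pr\Paren{\Norm{g_j}^2\ge ck}\le 2\exp(-2k)$ requires (by the chi-square rate function, $\tfrac{1}{2}(c-1-\ln c)\ge 2$) a threshold $c\approx 7$, while keeping the conditional Gaussian tail at level $2\sigma c_2\sqrt{k}$ below $\exp(-c_2^2/2)$ requires $c\le 4$; these are incompatible, so no threshold works. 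Your route therefore proves the lemma only with weaker constants, e.g.\ $1-2p\exp(-k/8)-p\exp(-c_2^2)$. You should know, however, that the paper's own proof shares this defect: its intermediate claim $\Pr\Paren{\Norm{\eta}_2\ge 2\sigma\sqrt{k}}\le 2\exp(-2k)$ is precisely the statement $\Pr\Paren{\chi^2_k\ge 4k}\le 2\exp(-2k)$, which is false already for moderate $k$ (the true exponential rate is $\tfrac{1}{2}(3-\ln 4)\approx 0.8$). Either argument proves the lemma with some absolute constant in place of the $2$ in $\exp(-2k)$, and that is all that matters downstream: in \cref{lem:least-squares} one takes $k\in\Omega\Paren{\frac{nd_{avg}}{\eps}\ln\frac{n}{\delta}}$, so any fixed positive constant in the exponent suffices.
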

\begin{proof}
See \cref{sec:deferred-proofs}.
\end{proof}

The next result gives the non-asymptotic convergence of medians of Cauchy random variables.
We use this result in the analysis of \texttt{CauchyEst}, and it may be of independent interest.

\begin{restatable}{lemma}{cauchymedianconvergence}[Non-asymptotic convergence of Cauchy median]
\label{lem:cauchy-median-convergence}
Consider a collection of $m$ i.i.d.\ $\cau(0,1)$ random variables $X_1, \ldots, X_m$.
Given a threshold $0 < \tau < 1$, we have
\[
\Pr\Paren{\med\Brace{X_1, \ldots, X_m} \not\in [-\tau, \tau]} \leq 2 \exp\Paren{-\frac{m \tau^2}{8}}
\]
\end{restatable}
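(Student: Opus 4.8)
The plan is to reduce the statement to a standard binomial tail bound, exploiting the fact that the empirical median can escape $[-\tau,\tau]$ only if an atypically large fraction of the samples lands on one side. First I would establish the deterministic implication that $\med\Brace{X_1,\ldots,X_m} > \tau$ forces at least half of the samples to exceed $\tau$. Sorting the samples as $X_{(1)} \le \cdots \le X_{(m)}$, the median is a function of the middle order statistic(s), so its exceeding $\tau$ forces $X_{(\lceil m/2 \rceil)} > \tau$, and hence every one of $X_{(\lceil m/2 \rceil)}, \ldots, X_{(m)}$ exceeds $\tau$. Treating even $m=2k$ (where $\med = (X_{(k)}+X_{(k+1)})/2 > \tau$ gives $X_{(k+1)} > \tau$) and odd $m$ separately, this yields $N_+ := \Abs{\Brace{i : X_i > \tau}} \ge m/2$ in both parities, so $\Pr\Paren{\med\Brace{X_1,\ldots,X_m} > \tau} \le \Pr\Paren{N_+ \ge m/2}$.

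Next I would compute the single-sample tail probability from the Cauchy cumulative distribution function $F(t) = \frac{1}{2} + \frac{1}{\pi}\arctan t$, which gives $p := \Pr(X_i > \tau) = \frac{1}{2} - \frac{1}{\pi}\arctan\tau$. Thus $N_+$ is distributed as $\mathrm{Binomial}(m,p)$ with $\E[N_+] = m\Paren{\frac{1}{2} - \delta}$, where $\delta := \frac{1}{\pi}\arctan\tau > 0$. The event $\Brace{N_+ \ge m/2}$ is exactly a deviation of $N_+$ above its mean by $m\delta$, since $m/2 - mp = m\delta$, so Hoeffding's inequality for sums of independent $[0,1]$-valued random variables yields $\Pr\Paren{N_+ \ge m/2} = \Pr\Paren{N_+ - \E[N_+] \ge m\delta} \le \exp\Paren{-2m\delta^2}$.

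To reach the clean exponent $m\tau^2/8$, I would lower-bound $\delta$ linearly on $(0,1)$: since $\arctan$ is concave on $[0,\infty)$ and $\arctan 1 = \pi/4$, the chord from the origin gives $\arctan\tau \ge \frac{\pi}{4}\tau$ for $\tau \in [0,1]$, hence $\delta \ge \tau/4$ and $\exp\Paren{-2m\delta^2} \le \exp\Paren{-m\tau^2/8}$. By the symmetry of the Cauchy distribution about $0$, the identical bound holds for $\Pr(\med < -\tau)$; as $\Brace{\med > \tau}$ and $\Brace{\med < -\tau}$ are disjoint, summing the two contributions produces the claimed $2\exp\Paren{-m\tau^2/8}$. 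The only delicate points are getting the order-statistic reduction correct in both parities of $m$ and selecting the concavity/chord estimate for $\arctan$ sharp enough that the constant lands exactly on $1/8$ rather than something weaker; the rest is a direct application of Hoeffding's inequality.
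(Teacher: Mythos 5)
Your proposal is correct and follows essentially the same route as the paper's proof: reduce the event $\Brace{\med \not\in [-\tau,\tau]}$ to one of the one-sided counts $\Abs{\Brace{i : X_i > \tau}}$ or $\Abs{\Brace{i : X_i < -\tau}}$ reaching $m/2$, bound the single-sample tail via the Cauchy CDF with the chord estimate $\arctan\tau \geq \frac{\pi}{4}\tau$ on $(0,1)$, apply the additive Chernoff--Hoeffding bound to get $\exp\Paren{-m\tau^2/8}$ per side, and finish with a union bound. Your explicit order-statistic treatment of the even/odd parities is just a more detailed justification of the deterministic implication the paper states in one line; there is no substantive difference.
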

\begin{proof}
See \cref{sec:deferred-proofs}.
\end{proof}

\subsection{Distance and divergence between probability distributions}
\label{sec:KL-decomposition}

Recall that we are given sample access to an unknown probability distribution $\cP$ over the values of $(X_1, \ldots, X_n) \in \R^n$ and the corresponding structure of a Bayesian network $G$ on $n$ variables.
We denote $\alpha^*$ as the true (hidden) parameters, inducing the probability distribution $\cP$, which we estimate by $\wh{\alpha}$.
In this work, we aim to recover parameters $\wh{\alpha}$ such that our induced probability distribution $\cQ$ is as close as possible to $\cP$ in total variational distance.

\begin{definition}[Total variational (TV) distance]
Given two probability distributions $\cP$ and $\cQ$ over $\R^n$, the total variational distance between them is defined as
$
\tv(\cP, \cQ)
= \sup_{A \in \R^n} \Abs{\cP(A) - \cQ(A)}
= \frac{1}{2} \int_{\R^n} \Abs{\cP - \cQ} dx
$.
\end{definition}

Instead of directly dealing with total variational distance, we will instead bound the Kullback–Leibler (KL) divergence and then appeal to the Pinsker's inequality \cite[Lemma 2.5, page 88]{tsybakov2008introduction} to upper bound $\tv$ via $\kl$.
We will later show algorithms that achieve $\kl(\cP, \cQ) \leq \eps$.

\begin{definition}[Kullback–Leibler (KL) divergence]
Given two probability distributions $\cP$ and $\cQ$ over $\R^n$, the KL divergence between them is defined as
$
\kl(\cP, \cQ) = \int_{A \in \R^n} \cP(A) \log \Paren{ \frac{\cP(A)}{\cQ(A)} } dA
$.
\end{definition}

\begin{fact}[Pinsker's inequality]
For distributions $\cP$ and $\cQ$,
$
\tv(\cP, \cQ) \leq \sqrt{\kl(\cP, \cQ) / 2}
$.
\end{fact}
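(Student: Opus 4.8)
The plan is to establish the equivalent statement $\kl(\cP, \cQ) \geq 2\,\tv(\cP, \cQ)^2$, since squaring and rearranging then recovers the claimed bound $\tv(\cP,\cQ) \le \sqrt{\kl(\cP,\cQ)/2}$. I would prove this in two moves: first reduce the $n$-dimensional inequality to a one-dimensional (two-point) inequality by coarse-graining, and then verify the two-point inequality by elementary calculus.

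For the reduction, I would take the set $A^* = \Brace{x \in \R^n : \cP(x) \geq \cQ(x)}$ that witnesses the supremum in the definition of $\tv$, so that $\tv(\cP, \cQ) = \cP(A^*) - \cQ(A^*)$. Writing $p = \cP(A^*)$ and $q = \cQ(A^*)$ (so that $p \geq q$ and $\tv(\cP,\cQ) = p - q$), I would partition $\R^n$ into $A^*$ and its complement and apply the log-sum inequality on each cell. This yields the data-processing bound
\[
\kl(\cP, \cQ) \;\geq\; p \log\Paren{\frac{p}{q}} + (1-p)\log\Paren{\frac{1-p}{1-q}},
\]
i.e.\ the full KL divergence is lower-bounded by the KL divergence of the induced Bernoulli distributions on the binary partition $\Brace{A^*, \R^n \setminus A^*}$.

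It then suffices to prove the scalar inequality $h(p) \geq 0$, where $h(p) = p\log(p/q) + (1-p)\log((1-p)/(1-q)) - 2(p-q)^2$, for $0 \leq q \leq p \leq 1$. Fixing $q$ and differentiating in $p$ gives $h'(p) = \log\frac{p(1-q)}{q(1-p)} - 4(p-q)$ and $h''(p) = \frac{1}{p(1-p)} - 4$; since $p(1-p) \leq 1/4$ we have $h''(p) \geq 0$, so $h$ is convex on $(0,1)$. As $h(q) = 0$ and $h'(q) = 0$, the point $p = q$ is the global minimiser, and hence $h \geq 0$ throughout, which closes the argument. The hard part is this last step: the coarse-graining reduction is routine bookkeeping once $A^*$ is chosen, but the two-point bound hinges on the convexity fact $h'' \geq 0$, which in turn rests on the elementary estimate $p(1-p) \leq 1/4$. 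I would expect verifying the sign of $h''$ — together with checking the boundary behaviour as $p \to 0$ or $p \to 1$ via $x\log x \to 0$ — to be the only genuinely delicate part of the proof.
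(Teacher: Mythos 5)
Your proposal is correct, but there is nothing in the paper to compare it against: the paper does not prove Pinsker's inequality at all, stating it as a black-box \emph{fact} and citing Tsybakov (Lemma 2.5) for it, and then only ever using it to pass from $\kl$ bounds to $\tv$ bounds. What you have written is the classical textbook proof, and every step checks out: the set $A^* = \Brace{x : \cP(x) \geq \cQ(x)}$ does witness the supremum defining $\tv$; the log-sum inequality applied to the partition $\Brace{A^*, \R^n \setminus A^*}$ gives exactly the data-processing reduction $\kl(\cP,\cQ) \geq p\log(p/q) + (1-p)\log\Paren{(1-p)/(1-q)}$; and the binary inequality follows from your convexity argument, since $h(q)=0$, $h'(q)=0$, and $h''(p) = \frac{1}{p(1-p)} - 4 \geq 0$ on $(0,1)$ makes $p=q$ the global minimiser. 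Two small points worth making explicit if this were to be written up: (i) the constant $1/2$ in the statement requires the \emph{natural} logarithm in the definition of $\kl$ (your derivatives implicitly assume this, and it is consistent with the paper's convention, e.g.\ in its KL decomposition); (ii) the degenerate cases $q=0<p$ or $p=1>q$ should be dispatched by noting the lower bound is then $+\infty$, which you gesture at with the $x\log x \to 0$ remark. Neither is a gap in substance; the argument is sound and self-contained, which is strictly more than the paper provides for this statement.
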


Thus, if $s(\eps)$ samples are needed to learn a distribution $\cQ$ such that $\kl(\cP, \cQ) \leq \eps$, $s(\eps^2)$ samples are needed to ensure $\tv(\cP, \cQ)\leq \eps$.

\subsection{Decomposing the KL divergence}

For a set of parameters $\alpha = (\alpha_1, \ldots, \alpha_n)$, denote $\alpha_i$ as the subset of parameters that are relevant to the variable $X_i$.
Following the approach of \cite{Dasgupta97}\footnote{\cite{Dasgupta97} analyzes the \emph{non-realizable} setting where the distribution $\cP$ may not correspond to the causal structure of the given Bayesian network. As we study the \emph{realizable} setting, we have a much simpler derivation.}, we decompose $\kl(\cP, \cQ)$ into $n$ terms that can be computed by analyzing the quality of recovered parameters for each variable $X_i$.

For notational convenience, we write $x$ to mean $(x_1, \ldots, x_n)$, $\pi_i(x)$ to mean the values given to parents of variable $X_i$ by $x$, and $\cP(x)$ to mean $\cP(X_1 = x_1, \ldots, X_n = x_n)$.
Let us define
\[
\cp(\alpha^*_i, \wh{\alpha}_i)
= \int_{x_i, \pi_i(x)} \cP(x_i, \pi_i(x)) \log
\Paren{\frac{\cP(x_i \mid \pi_i(x))}{\cQ(x_i \mid \pi_i(x))}} d x_i d \pi_i(x)    
\]
where each $\wh{\alpha}_i$ and $\alpha^*_i$ represent the parameters that relevant to variable $X_i$ from $\wh{\alpha}$ and $\alpha^*$ respectively.
By the Bayesian network decomposition of joint probabilities and marginalization, one can show that 
\[
\kl(\cP, \cQ)
= \sum_{i=1}^n \cp(\alpha^*_i, \wh{\alpha}_i)
\]
See \cref{sec:decomposition-details} for the full derivation details.

\subsection{Bounding $\cp$ for an arbitrary variable}

We now analyze $\cp(\alpha^*_i, \wh{\alpha}_i)$ with respect to the our estimates $\wh{\alpha}_i = (\wh{A}_i, \wh{\sigma}_i)$ and the hidden true parameters $\alpha^*_i = (A_i, \sigma_i)$ for any $i \in [n]$.
For derivation details, see \cref{sec:decomposition-details}.

With respect to variable $X_i$, one can derive
$
\cp(\alpha^*_i, \wh{\alpha}_i)
= \ln \Paren{\frac{\wh{\sigma}_i}{\sigma_i}} + \frac{\sigma_i^2 - \wh{\sigma}_i^2}{2 \wh{\sigma}_i^2} + \frac{\Delta_i^\top M_i \Delta_i}{2 \wh{\sigma}_i^2}
$.
Thus,
\begin{equation}
\label{eq:decomposition}
\kl(\cP, \cQ)
= \sum_{i=1}^n \cp(\alpha^*_i, \wh{\alpha}_i)
= \sum_{i=1}^n \ln \Paren{\frac{\wh{\sigma}_i}{\sigma_i}} + \frac{\sigma_i^2 - \wh{\sigma}_i^2}{2 \wh{\sigma}_i^2} + \frac{\Delta_i^\top M_i \Delta_i}{2 \wh{\sigma}_i^2}
\end{equation}
where $M_i$ is the covariance matrix associated with variable $X_i$, $\alpha_i^* = (A_i, \sigma_i)$ is the coefficients and variance associated with variable $X_i$, $\alpha_i = (\wh{A}_i, \wh{\sigma}_i)$ are the estimates for $\alpha_i^*$, and $\Delta_i = \wh{A}_i - A_i$.

\begin{proposition}[Implication of KL decomposition]
\label{prop:decomposition-implication}
Let $\eps \leq 0.17$ be a constant.
Suppose $\wh{\alpha}_i$ has the following properties for each $i \in [n]$:
\begin{equation}
\label{eq:condition1}
\tag{Condition 1}
\Abs{\Delta_i^\top M_i \Delta_i} \leq \sigma_i^2 \cdot \frac{\eps \cdot p_i}{n \cdot d_{avg}}
\end{equation}
\begin{equation}
\label{eq:condition2}
\tag{Condition 2}
\Paren{1 - \sqrt{\frac{\eps \cdot p_i}{n \cdot d_{avg}}}} \cdot \sigma_i^2 \leq \wh{\sigma}_i^2 \leq \Paren{1 + \sqrt{\frac{\eps \cdot p_i}{n \cdot d_{avg}}}} \cdot \sigma_i^2
\end{equation}
Then, $\cp(\alpha^*_i, \wh{\alpha}_i) \leq 3 \cdot \frac{\eps \cdot p_i}{n \cdot d_{avg}}$ for all $i \in [n]$.
Thus, $\kl(\cP, \cQ) = \sum_{i=1}^n \cp(\alpha^*_i, \wh{\alpha}_i) \leq 3 \eps$.\footnote{For a cleaner argument, we are bounding $\kl(\cP, \cQ) \leq 3\eps$. This is qualitatively the same as showing $\kl(\cP, \cQ) \leq \eps$ since one can repeat the entire analysis with $\eps' = \eps/3$.}
\end{proposition}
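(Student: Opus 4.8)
The plan is to prove the per-node bound $\cp(\alpha^*_i, \wh{\alpha}_i) \le 3 t_i$, where I abbreviate $t_i = \frac{\eps \cdot p_i}{n \cdot d_{avg}}$, and then sum over $i$. Two facts about $t_i$ drive everything. Since $\sum_{j=1}^n p_j = n \cdot d_{avg}$ by definition of $d_{avg}$, we have $p_i \le n\cdot d_{avg}$, so $t_i \le \eps \le 0.17$; and $\sum_{i=1}^n t_i = \eps \cdot \frac{\sum_i p_i}{n\cdot d_{avg}} = \eps$. The second fact is exactly what turns a per-node bound of $3 t_i$ into the global conclusion $\kl(\cP,\cQ) = \sum_i \cp(\alpha^*_i,\wh{\alpha}_i) \le 3\eps$, so the whole argument reduces to the single-node estimate. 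With this shorthand, the two hypotheses read $0 \le \Delta_i^\top M_i \Delta_i \le \sigma_i^2 t_i$ (using that $M_i \succeq 0$, being a covariance matrix, so the quadratic form is automatically nonnegative) and $|r_i - 1| \le \sqrt{t_i}$, where $r_i = \wh{\sigma}_i^2/\sigma_i^2$.

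First I would split $\cp(\alpha^*_i,\wh{\alpha}_i) = \ln\Paren{\wh{\sigma}_i/\sigma_i} + \frac{\sigma_i^2 - \wh{\sigma}_i^2}{2\wh{\sigma}_i^2} + \frac{\Delta_i^\top M_i \Delta_i}{2\wh{\sigma}_i^2}$ into a \emph{variance part} (the first two terms) and a \emph{coefficient part} (the last term), and handle them independently. Writing the first two terms in terms of $r_i$ gives $\frac{1}{2}\ln r_i + \frac{1}{2}\Paren{\frac{1}{r_i} - 1} = \frac{1}{2}\Paren{\ln r_i + \frac{1}{r_i} - 1}$. The key analytic step is the elementary inequality $\ln r \le r - 1$, which converts the logarithm into a rational function: $\ln r_i + \frac{1}{r_i} - 1 \le (r_i - 1) + \frac{1}{r_i} - 1 = \frac{(r_i - 1)^2}{r_i}$. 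Substituting $(r_i-1)^2 \le t_i$ in the numerator and $r_i \ge 1 - \sqrt{t_i}$ in the denominator bounds the variance part by $\frac{t_i}{2\Paren{1 - \sqrt{t_i}}}$.

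For the coefficient part I would use the first hypothesis $\Delta_i^\top M_i \Delta_i \le \sigma_i^2 t_i$ together with the lower bound $\wh{\sigma}_i^2 \ge \Paren{1 - \sqrt{t_i}}\sigma_i^2$ from the second hypothesis, giving $\frac{\Delta_i^\top M_i \Delta_i}{2\wh{\sigma}_i^2} \le \frac{\sigma_i^2 t_i}{2\Paren{1 - \sqrt{t_i}}\sigma_i^2} = \frac{t_i}{2\Paren{1 - \sqrt{t_i}}}$, the same bound as the variance part. Adding the two yields $\cp(\alpha^*_i,\wh{\alpha}_i) \le \frac{t_i}{1 - \sqrt{t_i}}$. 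Since $t_i \le \eps \le 0.17$, we have $\sqrt{t_i} \le \sqrt{0.17} < 2/3$, hence $1 - \sqrt{t_i} > 1/3$ and $\frac{1}{1 - \sqrt{t_i}} < 3$, which gives the desired $\cp(\alpha^*_i,\wh{\alpha}_i) \le 3 t_i$. Summing over $i$ and invoking $\sum_i t_i = \eps$ finishes the proof.

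The only genuinely delicate step is controlling the variance part: a direct Taylor expansion of $\ln r_i$ about $r_i = 1$ would force one to track the second-order remainder uniformly over the entire interval $|r_i - 1| \le \sqrt{t_i}$, whereas the one-line inequality $\ln r \le r - 1$ sidesteps the remainder entirely and yields a bound valid across the whole range. Everything else is bookkeeping: confirming $M_i \succeq 0$ so the coefficient term needs no absolute value, and verifying that the constant $0.17$ leaves enough slack for $\frac{1}{1 - \sqrt{t_i}} \le 3$. The sole edge case to flag is the degenerate situation $d_{avg} = 0$ (every node parentless), where $t_i$ is a formal $0/0$; there is nothing to estimate and each $\cp$ term vanishes, so it is excluded or treated as trivial exact recovery.
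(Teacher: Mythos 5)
Your proof is correct, and structurally it mirrors the paper's: the same split of $\cp(\alpha^*_i,\wh{\alpha}_i)$ into a variance part and a coefficient part, the same use of Condition 2's lower bound $\wh{\sigma}_i^2 \geq (1-\sqrt{t_i})\sigma_i^2$ in both denominators (abbreviating $t_i = \eps p_i/(n\, d_{avg})$ as you do), the same handling of the coefficient term via Condition 1, and the same summation step using $\sum_i p_i = n\, d_{avg}$. Where you genuinely depart from the paper is the key inequality controlling the log term. The paper works with $\gamma = \sigma_i^2/\wh{\sigma}_i^2$ and invokes $\gamma - 1 - \ln\gamma \leq (\gamma-1)^2$, an inequality that holds only for $\gamma \geq 0.316\ldots$, so it must first verify $\gamma \geq 1/(1+\sqrt{\eps}) \geq 1/2$ before applying it; it then bounds $(\gamma-1)^2 \leq t_i/(1-\sqrt{t_i})^2 \leq 4t_i$ using $t_i \leq 1/4$. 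You instead work with $r_i = 1/\gamma = \wh{\sigma}_i^2/\sigma_i^2$ and use the universal inequality $\ln r \leq r-1$ (valid for all $r>0$), which gives $\ln r_i + 1/r_i - 1 \leq (r_i-1)^2/r_i$ with no range restriction, and Condition 2 feeds in directly as $|r_i - 1| \leq \sqrt{t_i}$. This buys a more self-contained argument --- no preliminary check that $\gamma$ is bounded away from zero, no appeal to the inequality borrowed from the Ashtiani et al.\ reference --- and in fact a tighter per-node constant: your bound $\cp(\alpha^*_i,\wh{\alpha}_i) \leq t_i/(1-\sqrt{t_i}) \leq 1.7\, t_i$ improves on the paper's $3t_i$, with slack to spare in the final comparison against $3t_i$. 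Your observations that $M_i \succeq 0$ makes the absolute value in Condition 1 redundant, and that the degenerate case $d_{avg}=0$ is vacuous, are correct pieces of bookkeeping the paper passes over silently.
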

\begin{proof}
Consider an arbitrary fixed $i \in [n]$. Denote $\gamma = \sigma_i^2 / \wh{\sigma}_i^2$.
Observe\footnote{This inequality is also used in \cite[Lemma 2.9]{ashtiani2020near}.} that $\gamma - 1 - \ln(\gamma) \leq (\gamma-1)^2$ for $\gamma \geq 0.316\ldots$.
Since $p_i \leq n \cdot d_{avg} = \sum_i p_i$, \cref{eq:condition2} implies that $\gamma \geq 1/(1+\sqrt{\eps}) \geq 1/2$.
Then,
\begin{align*}
\ln \Paren{\frac{\wh{\sigma}_i}{\sigma_i}} + \frac{\sigma_i^2 - \wh{\sigma}_i^2}{2 \wh{\sigma}_i^2}
&= \frac{1}{2} \cdot \Paren{ \frac{\sigma_i^2}{\wh{\sigma}_i^2} - 1 - \ln \Paren{\frac{\sigma_i^2}{\wh{\sigma}_i^2}}}\\
&= \frac{1}{2} \cdot \Paren{ \gamma - 1 - \ln \Paren{\gamma}}\\
&\leq \frac{1}{2} \cdot \Paren{\gamma - 1}^2 && \text{By \ref{eq:condition2}}\\
&\leq \frac{1}{2} \cdot \Paren{\frac{1}{1 - \sqrt{\frac{\epsilon p_i}{n d_{avg}}}} - 1}^2 && \text{Since $\Paren{1 - \sqrt{\frac{\epsilon p_i}{n d_{avg}}}} \cdot \sigma_i^2 \leq \wh{\sigma}_i^2$}\\
&\leq \frac{2 \epsilon p_i}{n d_{avg}} && \text{Holds when $0 \leq \frac{\epsilon p_i}{n d_{avg}} \leq \frac{1}{4}$}
\end{align*}

Meanwhile,
\begin{align*}
\frac{\Delta_i^\top M_i \Delta_i}{2 \wh{\sigma}_i^2}
&\leq \frac{\Abs{\Delta_i^\top M_i \Delta_i}}{2 \wh{\sigma}_i^2}\\
&\leq \frac{p_i \epsilon}{n d_{avg}} \cdot \frac{\sigma_i^2}{2 \wh{\sigma}_i^2} && \text{By \ref{eq:condition1}}\\
&\leq \frac{p_i \epsilon}{2n d_{avg}} \cdot \frac{1}{1 - \sqrt{\frac{\epsilon p_i}{n d_{avg}}}} && \text{Since $\Paren{1 - \sqrt{\frac{\epsilon p_i}{n d_{avg}}}} \cdot \sigma_i^2 \leq \wh{\sigma}_i^2$}\\
&\leq \frac{\epsilon p_i}{n d_{avg}} && \text{Holds when $0 \leq \frac{\epsilon p_i}{n d_{avg}} \leq \frac{1}{4}$}
\end{align*}

Putting together, we see that $d_{CP}(\alpha^*_i, \wh{\alpha}_i) \leq \frac{3 \epsilon p_i}{n d_{avg}}$.
\end{proof}

\subsection{Two-phased recovery approach}
\label{sec:two-phased}

\cref{alg:recovery-two-phased} states our two-phased recovery approach.
We estimate the coefficients of the Bayesian network in the first phase and use them to recover the variances in the second phase.

\begin{algorithm}[htbp]
\caption{Two-phased recovery algorithm}
\label{alg:recovery-two-phased}
\begin{algorithmic}[1]
    \State \textbf{Input}: DAG $G$ and sample parameters $m_1$ and $m_2$
    \State Draw $m = m_1 + m_2$ independent samples of $(X_1, \ldots, X_n)$.
    \State $\wh{A}_1, \ldots, \wh{A}_n \leftarrow$ Run a coefficient recovery algorithm using first $m_1$ samples.
    \State $\wh{\sigma}_1^2, \ldots, \wh{\sigma}_n^2 \leftarrow$ Run \texttt{VarianceRecovery} using last $m_2$ samples and $\wh{A}_1, \ldots, \wh{A}_n$
    \State \Return $\wh{A}_1, \ldots, \wh{A}_n, \wh{\sigma}_1^2, \ldots, \wh{\sigma}_n^2$
\end{algorithmic}
\end{algorithm}

Motivated by \cref{prop:decomposition-implication}, we will estimate parameters for each variable in an independent fashion\footnote{Given the samples, parameters related to each variable can be estimated in parallel.}.
We will provide various coefficient recovery algorithms in the subsequent sections.
These algorithms will recover coefficients $\wh{A}_i$ that satisfy \ref{eq:condition1} for each variable $X_i$.
We evaluate them empirically in \cref{sec:experiments}.
For variance recovery, we use \texttt{VarianceRecovery} for each variable $Y$ by computing the empirical variance\footnote{Except in our experiments with contaminated data in Section~\ref{sec:experiments} where we use the classical median absolute devation (MAD) estimator. See \cref{sec:mad} for a description.} of $Y - X \wh{A}$ such that the recovered variance satisfies \ref{eq:condition2}.

\begin{algorithm}[htbp]
\caption{\texttt{VarianceRecovery}: Variance recovery algorithm given coefficient estimates}
\label{alg:recovery-variance}
\begin{algorithmic}[1]
    \State \textbf{Input}: DAG $G$, coefficient estimates, and $m_2 \in \cO\Paren{\frac{n d_{avg}}{\eps} \log\Paren{\frac{n}{\delta}}}$ samples
    \For{variable $Y$ with $p$ parents and coefficient estimate $\wh{A}$} \Comment{If $p=0$, then $\wh{A} = 0$.}
        \State Without loss of generality, by renaming variables, let $X_1, \ldots, X_p$ be the parents of $Y$.
        \For{$s = 1, \ldots, m_2$}
            \State Define $Y^{(s)}$ as the $s^{th}$ sample of $Y$.
            \State Define $X^{(s)} = [X^{(s)}_1, \ldots, X^{(s)}_p]$ as the $s^{th}$ sample of $X_1, \ldots, X_p$ placed in a \emph{row} vector.
            \State Define $Z^{(s)} = \Paren{Y^{(s)} - X^{(s)} \wh{A}}^2$.
        \EndFor
        \State Estimate $\wh{\sigma}_y^2 = \frac{1}{m_2} \sum_{i=1}^{m_2} Z^{(s)}$
    \EndFor
\end{algorithmic}
\end{algorithm}

To analyze \texttt{VarianceRecovery}, we first prove guarantees for an arbitrary variable and then take union bound over $n$ variables.

\begin{lemma}
\label{lem:recovery-variance-single}
Consider \cref{alg:recovery-variance}.
Fix any arbitrary variable of interest $Y$ with $p$ parents, parameters $(A, \sigma_y)$, and associated covariance matrix $M$.
Suppose we have coefficient estimates $\wh{A}$ such that $\Abs{\Delta^\top M \Delta} \leq \sigma_y^2 \cdot \frac{p \eps}{n d_{avg}}$.
Suppose $0 \leq \eps \leq 3 - 2\sqrt{2} \leq 0.17$.
With $k = \frac{32n d_{avg}}{\eps p} \log\Paren{\frac{2}{\delta}}$ samples, we recover variance estimate $\wh{\sigma}_y$ such that
\[
\Pr\Paren{\Paren{1 - \sqrt{\frac{\eps p}{n d_{avg}}}} \cdot \sigma_y^2 \leq \wh{\sigma}_y^2 \leq \Paren{1 + \sqrt{\frac{\eps p}{n d_{avg}}}} \cdot \sigma_y^2} \geq 1 - \delta
\]
\end{lemma}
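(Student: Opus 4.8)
The plan is to reduce the statement to the chi-square concentration bound of \cref{lem:chi-square-concentration}. The starting observation is that, using the structural equation $Y = \eta_y + \sum_{i=1}^p a_{y \leftarrow i} X_i = \eta_y + XA$ (with $X = (X_1, \ldots, X_p)$ a row vector), the residual computed by the algorithm simplifies to $Y^{(s)} - X^{(s)} \wh{A} = \eta_y^{(s)} - X^{(s)} \Delta$, where $\Delta = \wh{A} - A$. Since $\eta_y^{(s)} \sim N(0, \sigma_y^2)$ is independent of the parent vector $X^{(s)} \sim N(0, M)$, this residual is a sum of two independent Gaussians and is therefore itself Gaussian, $W_s := \eta_y^{(s)} - X^{(s)} \Delta \sim N(0, \tilde{\sigma}^2)$ with $\tilde{\sigma}^2 = \sigma_y^2 + \Delta^\top M \Delta$ (the variance of $X^{(s)} \Delta$ being exactly $\Delta^\top M \Delta$). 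This is the key conceptual step: it identifies $Z^{(s)} = W_s^2$ as $\tilde{\sigma}^2$ times a squared standard Gaussian, so that the whole problem collapses to scalar chi-square concentration.

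Next I would use the independence of the $k$ samples to conclude $\frac{1}{\tilde{\sigma}^2} \sum_{s=1}^k Z^{(s)} = \sum_{s=1}^k (W_s / \tilde{\sigma})^2 \sim \chi^2_k$, so that $\wh{\sigma}_y^2 / \tilde{\sigma}^2$ is precisely the normalized chi-square variable of \cref{lem:chi-square-concentration}. Writing $\beta = \sqrt{\eps p / (n d_{avg})}$ and applying that lemma with threshold $t = \beta/2$ (which satisfies $0 < t < 1$ since $\beta \leq \sqrt{\eps}$), the substitution $k = \frac{32 n d_{avg}}{\eps p} \log(2/\delta)$ makes the exponent $k t^2 / 8$ equal to $\log(2/\delta)$, giving failure probability at most $2 \exp(-\log(2/\delta)) = \delta$. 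On the complementary event, $(1 - t) \tilde{\sigma}^2 \leq \wh{\sigma}_y^2 \leq (1 + t) \tilde{\sigma}^2$.

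It then remains to convert these bounds (phrased via $\tilde{\sigma}^2$) into the target bounds (phrased via $\sigma_y^2$). The hypothesis $0 \leq \Delta^\top M \Delta \leq \sigma_y^2 \beta^2$, with nonnegativity coming from $M \succeq 0$, yields $\sigma_y^2 \leq \tilde{\sigma}^2 \leq (1 + \beta^2) \sigma_y^2$. The lower bound is then immediate, since $(1 - \beta/2) \tilde{\sigma}^2 \geq (1 - \beta/2) \sigma_y^2 \geq (1 - \beta) \sigma_y^2$. The upper bound requires $(1 + \beta/2)(1 + \beta^2) \leq 1 + \beta$, i.e.\ $\beta(1 + \beta/2) \leq \tfrac{1}{2}$. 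I expect this final algebraic calibration to be the main (if minor) obstacle: because $p \leq n d_{avg}$ we have $\beta \leq \sqrt{\eps}$, and the assumed threshold $\eps \leq 3 - 2\sqrt{2} = (\sqrt{2} - 1)^2$ forces $\beta \leq \sqrt{2} - 1$, which is exactly the positive root of $\beta^2 + 2\beta - 1 = 0$; since $\beta \mapsto \beta(1 + \beta/2)$ is increasing, this is precisely what closes the inequality and completes the proof.
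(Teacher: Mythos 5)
Your proposal is correct and follows essentially the same route as the paper: both identify the residual $Y^{(s)} - X^{(s)}\wh{A}$ as $N(0, \sigma_y^2 + \Delta^\top M\Delta)$, reduce to the $\chi^2_k$ concentration bound with threshold $\frac{1}{2}\sqrt{\eps p/(n d_{avg})}$, and close with the same algebraic calibration (the paper's ``two observations'' are exactly your inequality $\beta(1+\beta/2)\le \tfrac12$, which is where the hypothesis $\eps \le 3-2\sqrt{2}$ enters in both arguments).
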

\begin{proof}
We first argue that $\wh{\sigma}_y^2 \sim \Paren{\sigma_y^2 + \Delta^\top M \Delta} \cdot \chi^2_{k}$, then apply standard concentration bounds for $\chi^2$ random variables (see \cref{lem:chi-square-concentration}).

For any sample $s \in [k]$, we see that
$
Y^{(s)} - X^{(s)} \wh{A}
= X^{(s)} A + \eta_y^{(s)} - X^{(s)} \wh{A}
= \eta_y^{(s)} - X^{(s)} \Delta
$,
where $\Delta = \wh{A} - A \in \R^p$ is an unknown constant vector (because we do not actually know $A$).
For fixed $\Delta$, we see that $X^{(s)} \Delta \sim N(0, \Delta^\top M \Delta)$.
Since $\eta_y^{(s)} \sim N(0, \sigma_y^2)$ and $X^{(s)}$ are independent, we have that $Y^{(s)} - X^{(s)} \wh{A} \sim N(0, \sigma_y^2 + \Delta^\top M \Delta)$.
So, for any sample $s \in [k]$, $Z^{(s)} = (Y^{(s)} - X^{(s)} \wh{A})^2 \sim \Paren{\sigma_y^2 + \Delta^\top M \Delta} \cdot \chi^2_{1}$.
Therefore, $\wh{\sigma}_y = \frac{1}{k} \sum_{s=1}^k Z^{(s)} \sim (\sigma_y^2 + \Delta^\top M \Delta)/k \cdot \chi^2_k$.
Let us define
\[
\gamma = \frac{\wh{\sigma}_y^2}{\sigma_y^2} \cdot \Paren{\frac{1}{1 + \frac{\Delta^\top M \Delta}{\sigma_y^2}}}
\sim \frac{\chi^2_k}{k}
\]

Since $p \leq n d_{avg}$, if $\eps \leq 3 - 2\sqrt{2}$, then $\frac{\eps p}{n d_{avg}} \leq 3 - 2\sqrt{2} \leq 3 + 2\sqrt{2}$.
We first make two observations:
\begin{enumerate}
    \item For $0 \leq \frac{\eps p}{n d_{avg}} \leq 3 - 2\sqrt{2}$,
    $
    \Paren{1 + \sqrt{\frac{\eps p}{n d_{avg}}}} \cdot \Paren{\frac{1}{1 + \frac{\Delta^\top M \Delta}{\sigma_y^2}}}
    \geq 1 + \sqrt{\frac{\eps p}{4n d_{avg}}}
    $.
    \item For $0 \leq \frac{\eps p}{n d_{avg}} \leq 3 + 2\sqrt{2}$,
    $
    \Paren{1 - \sqrt{\frac{\eps p}{n d_{avg}}}} \cdot \Paren{\frac{1}{1 + \frac{\Delta^\top M \Delta}{\sigma_y^2}}}
    \leq 1 - \sqrt{\frac{\eps p}{4n d_{avg}}}
    $.
\end{enumerate}

Using \cref{lem:chi-square-concentration} with the above discussion, we have
\begin{align*}
&\; \Pr\Paren{\frac{\wh{\sigma}_y^2}{\sigma_y^2} \geq 1 + \sqrt{\frac{\eps p}{n d_{avg}}} \text{\quad or \quad} \frac{\wh{\sigma}_y^2}{\sigma_y^2} \leq 1 - \sqrt{\frac{\eps p}{n d_{avg}}}}\\
= &\; \Pr\Paren{\gamma \geq \Paren{1 + \sqrt{\frac{\eps p}{n d_{avg}}}} \cdot \Paren{\frac{1}{1 + \frac{\Delta^\top M \Delta}{\sigma_y^2}}} \text{\quad or \quad} \gamma \leq \Paren{1 - \sqrt{\frac{\eps p}{n d_{avg}}}} \cdot \Paren{\frac{1}{1 + \frac{\Delta^\top M \Delta}{\sigma_y^2}}}}\\
\leq &\; \Pr\Paren{\gamma \geq 1 + \sqrt{\frac{\eps p}{4n d_{avg}}} \text{\quad or \quad} \gamma \leq 1 - \sqrt{\frac{\eps p}{4n d_{avg}}}}\\
= &\; \Pr\Paren{\Abs{\gamma-1} \geq \sqrt{\frac{\eps p}{4n d_{avg}}}}\\
\leq &\; 2 \exp\Paren{ -\frac{k \eps p}{32 n d_{avg}} }
\end{align*}

The claim follows by setting $k = \frac{32n d_{avg}}{\eps p} \log\Paren{\frac{2}{\delta}}$.
\end{proof}

\begin{corollary}[Guarantees of \texttt{VarianceRecovery}]
\label{thm:recovery-variance}
Consider \cref{alg:recovery-variance}.
Suppose $0 \leq \eps \leq 3 - 2\sqrt{2} \leq 0.17$ and we have coefficient estimates $\wh{A}_i$ such that $\Abs{\Delta_i^\top M_i \Delta_i} \leq \sigma_i^2 \cdot \frac{\eps p_i}{n d_{avg}}$ for all $i \in [n]$.
With $m_2 \in \cO\Paren{\frac{n d_{avg}}{\eps} \log\Paren{\frac{n}{\delta}}}$ samples, we recover variance estimate $\wh{\sigma}_i$ such that
\[
\Pr\Paren{\forall i \in [n], \Paren{1 - \sqrt{\frac{\eps p_i}{n d_{avg}}}} \cdot \sigma_i^2 \leq \wh{\sigma}_i^2 \leq \Paren{1 + \sqrt{\frac{\eps p_i}{n d_{avg}}}} \cdot \sigma_i^2} \geq 1 - \delta
\]
The total running time is $\cO\Paren{\frac{n^2 d_{avg}^2}{\eps} \log \Paren{\frac{1}{\delta}}}$.
\end{corollary}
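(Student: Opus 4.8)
The plan is to derive this corollary from the single-variable guarantee \cref{lem:recovery-variance-single} by a union bound over the $n$ nodes. First I would instantiate \cref{lem:recovery-variance-single} once for each variable $X_i$, setting the per-variable failure probability to $\delta/n$. The hypothesis $\Abs{\Delta_i^\top M_i \Delta_i} \leq \sigma_i^2 \cdot \frac{\eps p_i}{n d_{avg}}$ is precisely the precondition of that lemma, so each instantiation applies verbatim: for variable $i$ with $p_i$ parents, the lemma certifies \ref{eq:condition2} with probability at least $1 - \delta/n$ once it is given $k_i = \frac{32 n d_{avg}}{\eps p_i}\log\Paren{\frac{2n}{\delta}}$ samples.

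The crux of the argument is to verify that the \emph{single} shared pool of $m_2$ samples simultaneously dominates every $k_i$. Since $k_i$ decreases in $p_i$, the binding node is the one of smallest positive in-degree, where $p_i \geq 1$; hence $\max_i k_i \leq \frac{32 n d_{avg}}{\eps}\log\Paren{\frac{2n}{\delta}} \in \cO\Paren{\frac{n d_{avg}}{\eps}\log\Paren{\frac{n}{\delta}}}$, which is exactly the claimed $m_2$. Feeding \emph{more} than $k_i$ samples into node $i$ only sharpens its tail: re-reading the concentration bound $2\exp\Paren{-\frac{m_2 \eps p_i}{32 n d_{avg}}}$ from the proof of \cref{lem:recovery-variance-single}, it is at most $2\exp\Paren{-\frac{k_i \eps p_i}{32 n d_{avg}}} = \delta/n$ whenever $m_2 \geq k_i$, so each per-node guarantee survives with the shared budget.

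I would then union bound the $n$ bad events. Reusing the same samples across variables makes these events statistically dependent, but this is harmless, since the union bound only needs each event to fail with probability at most $\delta/n$ individually; thus \ref{eq:condition2} holds for all $i$ simultaneously with probability at least $1 - n \cdot \frac{\delta}{n} = 1 - \delta$. For the running time, each residual $Z^{(s)} = \Paren{Y^{(s)} - X^{(s)}\wh{A}}^2$ costs $\cO(p_i)$ per sample, so node $i$ takes $\cO(p_i m_2)$ time; summing over nodes gives $\cO\Paren{m_2 \sum_i p_i} = \cO\Paren{m_2 \cdot n d_{avg}} = \cO\Paren{\frac{n^2 d_{avg}^2}{\eps}\log\Paren{\frac{n}{\delta}}}$, matching the stated bound.

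The step I expect to require the most care is neither the union bound nor the budget bookkeeping, but the degenerate root nodes. For a variable with $p_i = 0$ the interval in \ref{eq:condition2} collapses to the single point $\sigma_i^2$, which cannot be hit exactly from finitely many samples, and correspondingly its target contribution $\frac{\eps p_i}{n d_{avg}}$ in \cref{prop:decomposition-implication} is $0$. I would handle such nodes separately: with $\wh{A}=0$ the residual is simply $Y$ itself, so a constant-factor-larger sample count controls the pure variance term $\ln\Paren{\frac{\wh\sigma_i}{\sigma_i}} + \frac{\sigma_i^2-\wh\sigma_i^2}{2\wh\sigma_i^2}$ directly, and I would absorb their error into a mildly relaxed per-node budget (e.g.\ $\frac{\eps(p_i+1)}{\sum_j(p_j+1)}$) so that they contribute negligibly to the overall KL budget.
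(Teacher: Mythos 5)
Your proposal follows the paper's proof essentially verbatim: instantiate \cref{lem:recovery-variance-single} with per-node failure probability $\delta' = \delta/n$, note that the shared budget $m_2$ dominates every node's requirement $k_i = \frac{32 n d_{avg}}{\eps p_i}\log\Paren{\frac{2n}{\delta}}$ (maximized at the smallest positive in-degree), union bound over the $n$ nodes, and charge $\cO(p_i \cdot m_2)$ time per node so that the total is $\cO(m_2 \cdot n d_{avg})$.

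Your closing paragraph about root nodes is not pedantry; it identifies a point the paper's own proof silently skips. When $p_i = 0$, the paper's inequality $m_2 \geq \max_{i} \frac{32 n d_{avg}}{\eps p_i}\log\Paren{\cdot}$ fails (the right-hand side is infinite), and \ref{eq:condition2} degenerates to demanding $\wh{\sigma}_i^2 = \sigma_i^2$ exactly, an event of probability zero for the empirical variance. Since every DAG has at least one source node, this case always occurs. Your remedy, reallocating the KL budget proportionally to $p_i + 1$ so that a root node receives an $\frac{\eps}{n(d_{avg}+1)}$ share and controlling its pure variance term $\ln\Paren{\frac{\wh{\sigma}_i}{\sigma_i}} + \frac{\sigma_i^2 - \wh{\sigma}_i^2}{2\wh{\sigma}_i^2}$ directly, costs only a constant factor in $m_2$ and keeps the downstream application of \cref{prop:decomposition-implication} intact. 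In this respect your write-up is more careful than the paper's proof.
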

\begin{proof}
For each $i \in [n]$, apply \cref{lem:recovery-variance-single} with $\delta' = \delta/n$ and $m_2 = \frac{32n d_{avg}}{\eps} \log\Paren{\frac{2}{\delta}} \geq \max_{i \in [n]} \frac{32n d_{avg}}{\eps p_i} \log\Paren{\frac{2}{\delta}}$, then take the union bound over all $n$ variables.

The computational complexity for a variable with $p$ parents is $\cO(m_2 \cdot p)$.
Since $\sum_{i=1}^n p_i = n d_{avg}$, the total runtime is $\cO(m_2 \cdot n \cdot d_{avg})$.
\end{proof}

In \cref{sec:hardness}, we show that the sample complexity is nearly optimal in terms of the dependence on $n$ and $\eps$.
We remark that we use \emph{one} batch of samples to use for all the nodes; this is possible as we can obtain high-probability bounds on the error events at each node.

\section{Coefficient estimators based on linear least squares}
\label{sec:least-squares}

In this section, we provide algorithms \texttt{LeastSquares} and \texttt{BatchAvgLeastSquares} for recovering the coefficients in a Bayesian network using linear least squares.
As discussed in \cref{sec:two-phased}, we will recover the coefficients for each variable such that \ref{eq:condition1} is satisfied.
To do so, we estimate the coefficients associated with each individual variable using independent samples.
At each node, \texttt{LeastSquares} computes an estimate by solving the linear least squares problem with respect to a collection of sample observations.
In \cref{sec:batchleastsquares}, we generalize this approach via \texttt{BatchAvgLeastSquares} by allowing any interpolation between ``batch size'' and ``number of batches'' -- \texttt{LeastSquares} is a special case of a single batch.
Since each solution to batch can be computed independently before their results are combined, \texttt{BatchAvgLeastSquares} facilitates parallelism.

\subsection{Vanilla least squares}
\label{sec:vanillaleastsquares}

Consider an arbitrary variable $Y$ with $p$ parents.
Using $m_1$ independent samples, we form matrix $X \in \R^{m_1 \times p}$, where the $r^{th}$ \emph{row} consists of sample values $X_1^{(r)}, \ldots, X_p^{(r)}$, and the \emph{column} vector $B = [Y^{(1)}, \ldots, Y^{(m_1)}]^\top \in \R^{m_1}$.
Then, we define $\wh{A} = (X^\top X)^{-1} X^\top B$ as the solution to the least squares problem $X \wh{A} = B$.
The pseudocode of \texttt{LeastSquares} is given in \cref{alg:least-squares} and \cref{thm:two-phased-leastsquares} states its guarantees.

\begin{algorithm}[htbp]
\caption{\texttt{LeastSquares}: Coefficient recovery algorithm for general Bayesian networks}
\label{alg:least-squares}
\begin{algorithmic}[1]
    \State \textbf{Input}: DAG $G$ and $m_1 \in \cO\Paren{\frac{n d_{avg}}{\eps} \cdot \ln\Paren{\frac{n}{\delta}}}$ samples
    \For{variable $Y$ with $p \geq 1$ parents}
        \State Without loss of generality, by renaming variables, let $X_1, \ldots, X_p$ be the parents of $Y$.
        \State Form matrix $X \in \R^{m_1 \times p}$, where the $r^{th}$ row consists of sample values $X_1^{(r)}, \ldots, X_p^{(r)}$
        \State Form column vector $B = [Y^{(1)}, \ldots, Y^{(m_1)}]^\top \in \R^{m_1}$
        \State Define $\wh{A} = (X^{\top}X)^{-1}X^{\top} B$ as the solution to the least squares problem $X \wh{A} = B$
    \EndFor
\end{algorithmic}
\end{algorithm}

\begin{theorem}[Distribution learning using \texttt{LeastSquares}]
\label{thm:two-phased-leastsquares}
Let $\eps, \delta \in (0,1)$.
Suppose $G$ is a fixed directed acyclic graph on $n$ variables with degree at most $d$.
Given $\cO\Paren{\frac{n d_{avg}}{\eps^2} \log\Paren{\frac{n}{\delta}}}$ samples from an unknown Bayesian network $\cP$ over $G$, if we use \texttt{LeastSquares} for coefficient recovery in \cref{alg:recovery-two-phased}, then with probability at least $1 - \delta$, we recover a Bayesian network $\cQ$ over $G$ such that $\tv(\cP, \cQ) \leq \eps$ in $\cO\Paren{\frac{n^2 d_{avg}^2 d}{\eps^2} \log \Paren{\frac{1}{\delta}}}$ time.\footnote{In particular, this gives a $\kl(\cP,\cQ)\le \epsilon$ guarantee for learning centered multivariate Gaussians using $\widetilde{O}(n^2\epsilon^{-1})$ samples. See e.g.~\cite{ashtiani2020near} for an analogous $\kl(\cQ,\cP)\le \epsilon$ guarantee.}
\end{theorem}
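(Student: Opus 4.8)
The plan is to combine the KL-decomposition machinery already developed with a per-node analysis of the least squares estimator. By Pinsker's inequality, to guarantee $\tv(\cP,\cQ) \leq \eps$ it suffices to guarantee $\kl(\cP,\cQ) \leq 2\eps^2$, so I would run the entire two-phased recovery (\cref{alg:recovery-two-phased}) with internal accuracy parameter $\eps' = \Theta(\eps^2)$. \cref{prop:decomposition-implication} then reduces the whole task to establishing, for every node $i$, the two per-node guarantees it requires: the coefficient bound $\Abs{\Delta_i^\top M_i \Delta_i} \leq \sigma_i^2 \cdot \eps' p_i/(n\, d_{avg})$ and the matching multiplicative variance bound. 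The latter is handed to us directly by \cref{thm:recovery-variance} once the coefficient bound holds, consuming $m_2 \in \cO\Paren{\frac{n d_{avg}}{\eps'}\log\Paren{\frac{n}{\delta}}}$ of the samples. Hence the heart of the argument is to show that \texttt{LeastSquares} meets the coefficient bound at every node with high probability.

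For a single node $Y$ with $p \leq d$ parents, I would first write $B = XA + \eta$ where $\eta \in \R^{m_1}$ has i.i.d.\ $N(0,\sigma_y^2)$ entries, so that the estimator error is $\Delta = \wh{A} - A = (X^\top X)^{-1} X^\top \eta$. The crucial simplification uses \cref{fact:multivariate-transformation}: since the parent rows are i.i.d.\ $N(0,M)$ with $M = LL^\top$, we may write $X = G L^\top$ with $G \in \R^{m_1 \times p}$ having i.i.d.\ $N(0,1)$ entries. Substituting and cancelling the Cholesky factor gives $(X^\top X)^{-1}X^\top = L^{-\top}(G^\top G)^{-1}G^\top$, whence
\[
\Delta^\top M \Delta = \Norm{L^\top \Delta}^2 = \Norm{(G^\top G)^{-1} G^\top \eta}^2 .
\]
The key point is that $L$ --- and therefore the conditioning of $M$ --- disappears entirely, which is exactly why no assumption on the coefficients or variances is needed.

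It then remains to bound the right-hand side using the two concentration lemmas. Writing $\Norm{(G^\top G)^{-1}G^\top\eta} \leq \Norm{(G^\top G)^{-1}}_{op}\Norm{G^\top\eta}$, I would apply \cref{lem:gtg} with a constant $c_1$ (say $c_1 = 1/4$) to get $\Norm{(G^\top G)^{-1}}_{op} \leq \cO(1/m_1)$, and \cref{lem:Getanorm} with $c_2 = \Theta\Paren{\sqrt{\log(n/\delta)}}$ to get $\Norm{G^\top\eta} \leq \cO\Paren{\sigma_y\sqrt{m_1 p \log(n/\delta)}}$, each failing with probability at most $\delta/\mathrm{poly}(n)$ (the requirement $m_1 \geq p/c_1^2$ of \cref{lem:gtg} is met comfortably). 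Combining yields $\Delta^\top M \Delta \leq \cO\Paren{\sigma_y^2 p \log(n/\delta)/m_1}$, so choosing $m_1 \in \cO\Paren{\frac{n d_{avg}}{\eps'}\log\Paren{\frac{n}{\delta}}} = \cO\Paren{\frac{n d_{avg}}{\eps^2}\log\Paren{\frac{n}{\delta}}}$ meets the coefficient bound $\Abs{\Delta^\top M \Delta} \leq \sigma_y^2 \eps' p/(n\, d_{avg})$ at node $Y$. A union bound over the $n$ nodes supplies the per-node hypothesis of \cref{prop:decomposition-implication}, and a further union over the two phases gives overall failure probability at most $\delta$ and total sample complexity $m = m_1 + m_2 = \cO\Paren{\frac{n d_{avg}}{\eps^2}\log\Paren{\frac{n}{\delta}}}$. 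For the running time, each node with $p_i$ parents costs $\cO(m_1 p_i^2)$ to form and invert $X^\top X$; summing and using $p_i \leq d$ and $\sum_i p_i = n\, d_{avg}$ gives $\cO(m_1 \cdot d \cdot n d_{avg}) = \cO\Paren{\frac{n^2 d_{avg}^2 d}{\eps^2}\log\Paren{\frac{1}{\delta}}}$.

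The main obstacle, and the one genuinely new piece, is the per-node coefficient analysis --- specifically recognizing that $\Delta^\top M \Delta$ collapses to $\Norm{(G^\top G)^{-1}G^\top\eta}^2$ so that the unknown and possibly ill-conditioned $M$ cancels, and then controlling this quantity by the correct product of the two Gaussian-matrix concentration bounds with constants chosen to make both $c_1$ constant and $c_2^2 = \Theta(\log(n/\delta))$ absorb the union bound. The remaining bookkeeping --- the Pinsker rescaling $\eps \mapsto \eps^2$, the union bounds, and the time accounting --- is routine once this reduction is in place.
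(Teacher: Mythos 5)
Your proposal is correct and follows essentially the same route as the paper's own proof: the identical reduction via \cref{prop:decomposition-implication} with the Pinsker rescaling $\eps \mapsto \Theta(\eps^2)$, the identical key identity $L^\top \Delta = (G^\top G)^{-1} G^\top \eta$ obtained by cancelling the Cholesky factor via \cref{fact:multivariate-transformation}, and the identical pair of concentration bounds (\cref{lem:gtg} with constant $c_1$ and \cref{lem:Getanorm} with $c_2^2 = \Theta(\log(n/\delta))$), followed by the same union bound over nodes and the same runtime accounting. This is precisely the content of \cref{lem:least-squares-single}, \cref{lem:least-squares}, and the paper's proof of \cref{thm:two-phased-leastsquares}.
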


Our analysis begins by proving guarantees for an arbitrary variable.

\begin{lemma}
\label{lem:least-squares-single}
Consider \cref{alg:least-squares}.
Fix an arbitrary variable $Y$ with $p$ parents, parameters $(A, \sigma_y)$, and associated covariance matrix $M$.
With $k \geq \frac{4c_2^2}{(1 - c_1)^4} \cdot \frac{nd_{avg}}{\eps}$ samples, for any constants $0 < c_1 < 1/2$ and $c_2 > 0$, we recover the coefficients $\wh{A}$ such that
\[
\Pr\Paren{ \Abs{\Delta^\top M \Delta} \geq \sigma_y^2 \cdot \frac{p\eps}{nd_{avg}} }
\leq \exp\Paren{ - \frac{k c_1^2}{2} } + 2p \exp\Paren{ -2 k } + p \exp\Paren{-\frac{c_2^2}{2}}
\]
\end{lemma}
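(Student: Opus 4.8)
The plan is to reduce the weighted quadratic form $\Delta^\top M \Delta$ to a clean expression involving only a standard Gaussian matrix and the noise vector, and then invoke \cref{lem:gtg} and \cref{lem:Getanorm}. First I would write out the least squares estimate explicitly. Stacking the $k$ samples, the response vector satisfies $B = XA + \eta$, where $\eta \in \R^k$ has i.i.d.\ $N(0, \sigma_y^2)$ entries and is independent of the design matrix $X$. Substituting into $\wh{A} = (X^\top X)^{-1} X^\top B$ immediately gives $\Delta = \wh{A} - A = (X^\top X)^{-1} X^\top \eta$.

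Next I would exploit the structural fact that each row of $X$ is an independent draw from $N(0, M)$ with $M = LL^\top$. By \cref{fact:multivariate-transformation}, applied with $k$ samples of the $p$-dimensional parent vector, we may write $X = G L^\top$, where $G \in \R^{k \times p}$ has i.i.d.\ $N(0,1)$ entries and remains independent of $\eta$. The key computation is then to simplify $\Delta = (L G^\top G L^\top)^{-1} L G^\top \eta = (L^\top)^{-1} (G^\top G)^{-1} G^\top \eta$, so that $L^\top \Delta = (G^\top G)^{-1} G^\top \eta$. Since $M = LL^\top$, this yields the clean identity
\[
\Delta^\top M \Delta = \Norm{L^\top \Delta}^2 = \Norm{(G^\top G)^{-1} G^\top \eta}^2 .
\]
This is the crux of the argument: the factors of $L$ cancel, eliminating all dependence on the possibly ill-conditioned covariance $M$ and leaving a quantity governed only by the isotropic Gaussian matrix $G$ and the independent noise $\eta$.

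Finally I would bound $\Norm{(G^\top G)^{-1} G^\top \eta} \leq \Norm{(G^\top G)^{-1}}_{op} \cdot \Norm{G^\top \eta}_2$ and control the two factors separately. \cref{lem:gtg} (with $d$ there replaced by $p$) gives $\Norm{(G^\top G)^{-1}}_{op} \leq \frac{1}{(1 - 2c_1)^2 k}$ except with probability $\exp(-k c_1^2 / 2)$, and \cref{lem:Getanorm} gives $\Norm{G^\top \eta}_2 < 2 \sigma_y c_2 \sqrt{kp}$ except with probability $2p \exp(-2k) + p \exp(-c_2^2/2)$; here the independence of $G$ and $\eta$ is exactly the hypothesis required by \cref{lem:Getanorm}. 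A union bound over the two failure events produces the probability bound in the statement. On the complementary good event, $\Delta^\top M \Delta \leq \Paren{\frac{2 \sigma_y c_2 \sqrt{kp}}{(1-2c_1)^2 k}}^2 = \frac{4 c_2^2 \sigma_y^2 p}{(1-2c_1)^4 k}$, which is at most $\sigma_y^2 \cdot \frac{p \eps}{n d_{avg}}$ precisely once $k \geq \frac{4 c_2^2}{(1 - 2c_1)^4} \cdot \frac{n d_{avg}}{\eps}$, matching the hypothesized sample bound up to the exact form of the constant (the stated $(1-c_1)^4$ is recovered by a harmless reparametrization of $c_1$ in \cref{lem:gtg}).

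The main thing to verify carefully is the precondition $k \geq p / c_1^2$ needed to apply \cref{lem:gtg}: since $p \leq n d_{avg}$ (as $p = p_y$ is one summand of $\sum_i p_i = n d_{avg}$) and $\eps \leq 1$, the assumed lower bound forces $k \geq \frac{4 c_2^2}{(1-2c_1)^4}\, p$, so one only needs the constants to satisfy $\frac{4 c_2^2}{(1-2c_1)^4} \geq \frac{1}{c_1^2}$ for the precondition to hold. Beyond this bookkeeping, I expect the only genuine subtlety to be tracking the independence of $\eta$ from $G$ so that, conditioned on $G$, the vector $G^\top \eta$ retains the Gaussian structure assumed by \cref{lem:Getanorm}; everything else is the algebraic cancellation above followed by routine substitution of the two concentration bounds.
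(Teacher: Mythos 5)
Your proof is correct and follows essentially the same route as the paper's own: the identity $L^\top\Delta = (G^\top G)^{-1}G^\top\eta$ obtained via \cref{fact:multivariate-transformation}, the submultiplicative bound $\Norm{L^\top\Delta} \leq \Norm{(G^\top G)^{-1}}\cdot\Norm{G^\top\eta}$, and a union bound over the failure events of \cref{lem:gtg} and \cref{lem:Getanorm}. The two points you flag are in fact slack in the paper itself -- its derivation likewise produces $(1-2c_1)^4$ while the statement says $(1-c_1)^4$, and it never verifies the precondition $k \geq p/c_1^2$ of \cref{lem:gtg} -- so your explicit handling of both is, if anything, more careful than the original.
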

\begin{proof}
Since $\Abs{\Delta^\top M \Delta} = \Abs{\Delta^\top LL^\top \Delta} = \Norm{L^\top \Delta}^2$, it suffices to bound $\Norm{L^\top \Delta}$.

Without loss of generality, the parents of $Y$ are $X_1, \ldots, X_p$.
Define $X \in \R^{k \times p}$, $B \in \R^k$, and $\wh{A} \in \R^p$ as in \cref{alg:least-squares}.
Let $\eta = [\eta_y^{(1)}, \ldots, \eta_y^{(k)}] \in \R^k$ be the instantiations of Gaussian $\eta_y$ in the $k$ samples.
By the structural equations, we know that $B = XA + \eta$.
So,
\[
\wt{A} 
= (X^\top X)^{-1} X^\top B
= (X^\top X)^{-1} X^\top (XA + \eta)
= A + (X^\top X)^{-1} X^\top \eta
\]

By \cref{fact:multivariate-transformation}, we can express $X = GL^\top$ where matrix $G \in \R^{k \times p}$ is a random matrix with i.i.d.\ $N(0,1)$ entries.
Since $\Delta = \wh{A}-A$, we see that $\Delta = (L^\top)^{-1}(G^\top G)^{-1} G^\top \eta$.
Rearranging, we have $L^\top \Delta = (G^\top G)^{-1} G^\top \eta$ and so $\| L^\top \Delta \| \leq \| (G^\top G)^{-1} \| \cdot \| G^\top \eta \|$.
Combining \cref{lem:gtg} and \cref{lem:Getanorm}, which bound $\| (G^\top G)^{-1} \|$ and $\| G^\top \eta \|$ respectively, we get
\begin{equation}
\label{eq:generic-bound}
\Pr\Paren{ \| L^\top \Delta \| > \dfrac{2 \sigma_y c_2 \sqrt{ p}}{\Paren{1 - 2 c_1}^2 \sqrt{k}}}
\leq \exp\Paren{ - \frac{k c_1^2}{2} } + 2p \exp\Paren{ -2 k } + p \exp\Paren{-\frac{c_2^2}{2}}    
\end{equation}
for any constants $0 < c_1 < 1/2$ and $c_2 > 0$.
The claim follows by setting $k = \frac{4c_2^2}{(1 - c_1)^4} \cdot \frac{nd_{avg}}{\eps}$.
\end{proof}

We can now establish \ref{eq:condition1} of \cref{prop:decomposition-implication} for \texttt{LeastSquares}.

\begin{lemma}
\label{lem:least-squares}
Consider \cref{alg:least-squares}.
With $m_1 \in \cO\Paren{\frac{n d_{avg}}{\eps} \cdot \ln\Paren{\frac{n}{\delta}}}$ samples, we recover the coefficients $\wh{A}_1, \ldots, \wh{A}_n$ such that
\[
\Pr\Paren{ \forall i \in [n], \Abs{\Delta_i^\top M_i \Delta_i} \geq \sigma_i^2 \cdot \frac{\eps p_i}{n d_{avg}} }
\leq \delta
\]
The total running time is $\cO\Paren{\frac{n^2 d_{avg}^2 d}{\eps} \ln\Paren{\frac{1}{\delta}}}$.
\end{lemma}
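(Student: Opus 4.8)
The plan is to instantiate the single-variable guarantee of Lemma~\ref{lem:least-squares-single} at every node, choose the free constants $c_1$ and $c_2$ appropriately, and then sweep away the failure probability with a union bound over the $n$ variables. First I would fix a candidate sample count $m_1 = \Theta\Paren{\frac{n d_{avg}}{\eps} \ln\Paren{\frac{n}{\delta}}}$, which is the quantity we are trying to justify. For each variable $X_i$ with $p_i \geq 1$ parents, Lemma~\ref{lem:least-squares-single} guarantees that, provided $m_1 \geq \frac{4c_2^2}{(1-c_1)^4}\cdot\frac{n d_{avg}}{\eps}$, the bad event $\Abs{\Delta_i^\top M_i \Delta_i} \geq \sigma_i^2 \cdot \frac{\eps p_i}{n d_{avg}}$ has probability at most
\[
\exp\Paren{-\tfrac{m_1 c_1^2}{2}} + 2 p_i \exp\Paren{-2 m_1} + p_i \exp\Paren{-\tfrac{c_2^2}{2}}.
\]
The variables with $p_i = 0$ are free: there are no coefficients to estimate, $\Delta_i = 0$, and the bad event simply cannot occur, so they contribute nothing to the union bound.

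The main technical point is that the per-node failure probability must be driven below $\delta/n$ so that the union bound over (at most) $n$ terms yields total failure at most $\delta$. The first two exponential terms are easy: since $m_1$ scales like $\ln(n/\delta)$ and $c_1$ is a fixed constant (say $c_1 = 1/4$), the term $\exp(-m_1 c_1^2/2)$ is already $\cO(\delta/n)$ up to adjusting the hidden constant in $m_1$, and $2p_i \exp(-2m_1)$ is even smaller since $p_i \leq d \leq n$ is dominated by the exponential. The delicate term is $p_i \exp(-c_2^2/2)$, because $c_2$ appears both in the sample-complexity requirement $m_1 \geq \frac{4c_2^2}{(1-c_1)^4}\cdot\frac{n d_{avg}}{\eps}$ and in this tail. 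I would therefore set $c_2 = \Theta\Paren{\sqrt{\ln(n/\delta)}}$, so that $p_i \exp(-c_2^2/2) \leq n \exp(-c_2^2/2) \leq \delta/n$ after absorbing constants; this is exactly the choice that forces the sample-complexity bound to read $m_1 = \Theta\Paren{\frac{n d_{avg}}{\eps} \ln\Paren{\frac{n}{\delta}}}$, since plugging $c_2^2 = \Theta(\ln(n/\delta))$ into the requirement multiplies the base rate $\frac{n d_{avg}}{\eps}$ by the logarithmic factor. This interplay between $c_2$ and $m_1$ is the crux of the argument: the logarithmic factor in the final sample complexity is precisely the price of union-bounding the $p_i \exp(-c_2^2/2)$ tail across $n$ nodes.

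Having established that each node fails with probability at most, say, $\delta/n$, a union bound over the $n$ variables gives $\Pr\Paren{\exists i \in [n] : \Abs{\Delta_i^\top M_i \Delta_i} \geq \sigma_i^2 \cdot \frac{\eps p_i}{n d_{avg}}} \leq \delta$, which is the complement of the claimed statement. Finally, for the running time, I would note that for a variable with $p$ parents the dominant cost is forming and inverting the $p \times p$ system $X^\top X$ together with the matrix products: this is $\cO(m_1 p + p^3)$, and since $p \leq d$ and $\sum_i p_i = n d_{avg}$, summing $\cO(m_1 p_i)$ over all nodes gives $\cO(m_1 \cdot n d_{avg})$ while the cubic inversion terms contribute $\cO(n d^2 \cdot d) = \cO(n d^3)$; substituting $m_1 = \Theta\Paren{\frac{n d_{avg}}{\eps} \ln\Paren{\frac{n}{\delta}}}$ yields the stated $\cO\Paren{\frac{n^2 d_{avg}^2 d}{\eps} \ln\Paren{\frac{1}{\delta}}}$ bound, with the extra factor of $d$ coming from the per-node linear-algebra overhead. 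I expect the only genuine subtlety to be the bookkeeping that ties $c_2$ to the logarithmic overhead; everything else is a routine union bound and a runtime tally.
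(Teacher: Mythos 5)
Your probabilistic argument is correct and follows essentially the same route as the paper: instantiate \cref{lem:least-squares-single} at each node with $c_1$ a fixed constant and $c_2 = \Theta\Paren{\sqrt{\ln(n/\delta)}}$, observe that this choice of $c_2$ is exactly what forces the $\ln(n/\delta)$ factor into $m_1$ via the requirement $m_1 \geq \frac{4c_2^2}{(1-c_1)^4}\cdot\frac{nd_{avg}}{\eps}$, and union bound over the $n$ variables. (The paper takes $c_1 = 1/4$, $c_2 = \sqrt{2\ln(3n/\delta)}$, $m_1 = \frac{32 n d_{avg}}{\eps}\ln\Paren{\frac{3n}{\delta}}$; your bound $p_i \exp(-c_2^2/2) \leq n\exp(-c_2^2/2)$ is in fact slightly more careful than the paper's, which silently drops the factor $p_i$.)

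The gap is in your runtime tally, which is part of the lemma's claim. You charge $\cO(m_1 p + p^3)$ per node, but forming the Gram matrix $X^\top X$ with $X \in \R^{m_1 \times p}$ costs $\cO(m_1 p^2)$, not $\cO(m_1 p)$: there are $p^2$ entries, each an inner product of length-$m_1$ vectors. This is precisely where the factor $d$ in the stated bound comes from. The paper's accounting is $\cO(p^2 m_1)$ per node, so the total is
\[
\cO\Paren{m_1 \sum_{i=1}^n p_i^2} \leq \cO\Paren{m_1 \cdot d \cdot \sum_{i=1}^n p_i} = \cO\Paren{m_1 \cdot n \cdot d_{avg} \cdot d},
\]
which upon substituting $m_1$ gives the claimed $\cO\Paren{\frac{n^2 d_{avg}^2 d}{\eps}\ln\Paren{\frac{1}{\delta}}}$. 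Under your own accounting, the total would be $\cO(m_1 \cdot n d_{avg} + n d^3)$, which after substitution is $\cO\Paren{\frac{n^2 d_{avg}^2}{\eps}\ln\Paren{\frac{n}{\delta}} + nd^3}$ and does \emph{not} produce the factor $d$ you then assert appears ``from the per-node linear-algebra overhead''; the cubic inversion terms are additive and are not the source of that factor (and are not even always dominated by the stated bound, e.g.\ when $d_{avg} \ll d$). So the final step of your runtime argument does not follow from your own numbers; correcting the per-node cost to $\cO(m_1 p^2)$ fixes it and recovers the paper's bound.
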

\begin{proof}By setting $c_1 = 1/4$, $c_2 = \sqrt{2 \ln\Paren{3n/\delta}}$, and $k = \frac{32n d_{avg}}{\eps} \ln\Paren{\frac{3n}{\delta}} \geq \frac{4c_2^2}{(1 - c_1)^4} \cdot \frac{nd_{avg}}{\eps}$ in \cref{lem:least-squares-single}, we have
\[
\Pr\Paren{ \Abs{\Delta_i^\top M_i \Delta_i} \geq \sigma_i^2 \cdot \frac{p_i \eps}{n d_{avg}} }
\leq \exp\Paren{ - \frac{k c_1^2}{2} } + p \exp\Paren{ -2 k } + p \exp\Paren{-\frac{c_2^2}{2}}
\leq \frac{\delta}{3n} + \frac{\delta}{3n} + \frac{\delta}{3n}
= \frac{\delta}{n}
\]
for any $i \in [n]$.
The claim holds by a union bound over all $n$ variables.

The computational complexity for a variable with $p$ parents is $\cO(p^2 \cdot m_1)$.
Since $\max_{i \in [n]} p_i \leq d$ and $\sum_{i=1}^n p_i = n d_{avg}$, the total runtime is $\cO(m_1 \cdot n \cdot d_{avg} \cdot d)$.
\end{proof}

\cref{thm:two-phased-leastsquares} follows from combining the guarantees of \texttt{LeastSquares} and \texttt{VarianceRecovery} (given in \cref{lem:least-squares} and \cref{thm:recovery-variance} respectively) via \cref{prop:decomposition-implication}.

\begin{proof}[Proof of \cref{thm:two-phased-leastsquares}]
We will show sample and time complexities before giving the proof for the $\tv$ distance.

Let $m_1 \in \cO\Paren{\frac{n d_{avg}}{\eps} \cdot \ln\Paren{\frac{n}{\delta}}}$ and $m_2 \in \cO\Paren{\frac{n d_{avg}}{\eps} \log\Paren{\frac{n}{\delta}}}$.
Then, the total number of samples needed is $m = m_1 + m_2 \in \cO\Paren{\frac{n d_{avg}}{\eps} \log\Paren{\frac{n}{\delta}}}$.
\texttt{LeastSquares} runs in $\cO\Paren{\frac{n^2 d_{avg}^2 d}{\eps} \ln\Paren{\frac{1}{\delta}}}$ time while \texttt{VarianceRecovery} runs in $\cO\Paren{\frac{n^2 d_{avg}^2}{\eps} \log \Paren{\frac{1}{\delta}}}$ time.
Therefore, the overall running time is $\cO\Paren{\frac{n^2 d_{avg}^2 d}{\eps} \log \Paren{\frac{1}{\delta}}}$.

By \cref{lem:least-squares}, \texttt{LeastSquares} recovers coefficients $\wh{A}_1, \ldots, \wh{A}_n$ such that
\[
\Pr\Paren{ \forall i \in [n], \Abs{\Delta_i^\top M_i \Delta_i} \geq \sigma_i^2 \cdot \frac{\eps p_i}{n d_{avg}} }
\leq \delta
\]

By \cref{thm:recovery-variance} and using the recovered coefficients from \texttt{LeastSquares}, \texttt{VarianceRecovery} recovers variance estimates $\wh{\sigma}_i^2$ such that
\[
\Pr\Paren{\forall i \in [n], \Paren{1 - \sqrt{\frac{\eps p_i}{n d_{avg}}}} \cdot \sigma_i^2 \leq \wh{\sigma}_i^2 \leq \Paren{1 + \sqrt{\frac{\eps p_i}{n d_{avg}}}} \cdot \sigma_i^2} \geq 1 - \delta
\]

As our estimated parameters satisfy \ref{eq:condition1} and \ref{eq:condition2}, \cref{prop:decomposition-implication} tells us that $\kl(\cP,\cQ) \leq 3 \eps$.
Thus, $\tv(\cP,\cQ) \leq \sqrt{\kl(\cP,\cQ) / 2} \leq \sqrt{3 \eps/2}$.
The claim follows by setting $\eps' = \sqrt{3 \eps/2}$ throughout.
\end{proof}

\subsection{Interpolating between batch size and number of batches}
\label{sec:batchleastsquares}

We now discuss a generalization of \texttt{LeastSquares}.
In a nutshell, for each variable with $p \geq 1$ parents, \texttt{BatchAvgLeastSquares} solves $b \geq 1$ batches of linear systems made up of $k > p$ samples and then uses the \emph{mean} of the recovered solutions as an estimate for the coefficients.
Note that one can interpolate between different values of $k$ and $b$, as long as $k \geq p$ (so that the batch solutions are correlated to the true parameters).
The pseudocode of \texttt{BatchAvgLeastSquares} is provided in \cref{alg:batch-least-squares} and the guarantees are given in \cref{thm:two-phased-leastsquares}.

\begin{algorithm}[htbp]
\caption{\texttt{BatchAvgLeastSquares}: Coefficient recovery for general Bayesian networks}
\label{alg:batch-least-squares}
\begin{algorithmic}[1]
    \State \textbf{Input}: DAG $G$ and $m_1 = \cO\Paren{\frac{nd_{avg}}{\eps} \Paren{d+ \ln\Paren{\frac{n}{\eps \delta}}}}$ samples \Comment{$k \in \Omega\Paren{d+ \ln\Paren{\frac{n}{\eps\delta}}}$, $b = \frac{m_1}{k}$}
    \For{variable $Y$ with $p \geq 1$ parents}
        \State Without loss of generality, by renaming variables, let $X_1, \ldots, X_p$ be the parents of $Y$.
        \For{$s = 1, \ldots, b$}
            \State Form matrix $X \in \R^{k \times p}$, where the $r^{th}$ row consists of sample values $X_1^{(s,r)}, \ldots, X_p^{(s,r)}$
            \State Form column vector $B = [Y^{(s,1)}, \ldots, Y^{(s,k)}]^\top \in \R^{k}$
            \State Define $\wt{A}^{(s)} = (X^{\top}X)^{-1}X^{\top} B$ as the solution to the least squares problem $X \wt{A}^{(s)} = B$
        \EndFor
        \State Define $\wh{A} = \frac{1}{b} \sum_{s=1}^b \wt{A}^{(s)}$
    \EndFor
\end{algorithmic}
\end{algorithm}

In \cref{sec:experiments}, we also experimented on a variant of \texttt{BatchAvgLeastSquares} dubbed \texttt{BatchMedLeastSquares}, where $\wh{A}$ is defined to be the coordinate-wise \emph{median} of the $\wt{A}^{(s)}$ vectors. However, in the theoretical analysis below, we only analyze \texttt{BatchAvgLeastSquares}.

\begin{theorem}[Distribution learning using \texttt{BatchAvgLeastSquares}]
\label{thm:two-phased-batchleastsquares}
Let $\eps, \delta \in (0,1)$.
Suppose $G$ is a fixed directed acyclic graph on $n$ variables with degree at most $d$.
Given $\cO\Paren{\frac{nd_{avg}}{\eps^2} \Paren{d+ \ln\Paren{\frac{n}{\eps \delta}}}}$ samples from an unknown Bayesian network $\cP$ over $G$, if we use \texttt{BatchAvgLeastSquares} for coefficient recovery in \cref{alg:recovery-two-phased}, then with probability at least $1 - \delta$, we recover a Bayesian network $\cQ$ over $G$ such that $\tv(\cP, \cQ) \leq \eps$ in $\cO\Paren{\frac{n^2 d_{avg}^2 d}{\eps^2} \Paren{d + \ln\Paren{\frac{n}{\eps \delta}}}}$ time. 
\end{theorem}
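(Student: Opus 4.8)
The plan is to follow the same three-step template used to prove \cref{thm:two-phased-leastsquares}: first establish a coefficient-recovery guarantee that \ref{eq:condition1} holds for every variable under \texttt{BatchAvgLeastSquares}, then invoke \texttt{VarianceRecovery} (\cref{thm:recovery-variance}) to obtain \ref{eq:condition2}, and finally feed both into the KL-decomposition bound of \cref{prop:decomposition-implication} to conclude $\kl(\cP,\cQ) \leq 3\eps$, from which $\tv(\cP,\cQ) \leq \sqrt{3\eps/2}$ follows by Pinsker's inequality and a rescaling of $\eps$. The only genuinely new work is a batch analogue of \cref{lem:least-squares}; the variance-recovery and KL-decomposition steps can be reused essentially verbatim once the coefficient bound is in hand.

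For the coefficient step, I would fix a variable $Y$ with $p$ parents and, as in \cref{lem:least-squares-single}, write each batch solution as $\wt{A}^{(s)} = A + (G^{(s)\top}G^{(s)})^{-1}G^{(s)\top}\eta^{(s)}$ using \cref{fact:multivariate-transformation}, where $G^{(s)} \in \R^{k\times p}$ has i.i.d.\ $N(0,1)$ entries and $\eta^{(s)} \in \R^{k}$ is the independent batch noise. Averaging over batches and multiplying by $L^\top$ gives
\[
L^\top \Delta = \frac{1}{b}\sum_{s=1}^b (G^{(s)\top}G^{(s)})^{-1}G^{(s)\top}\eta^{(s)},
\]
and since $\Abs{\Delta^\top M \Delta} = \Norm{L^\top\Delta}^2$ it suffices to control this averaged vector. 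The key observation — and the crux of the whole argument — is that, conditioned on the $b$ design matrices $G^{(1)},\dots,G^{(b)}$, each summand $(G^{(s)\top}G^{(s)})^{-1}G^{(s)\top}\eta^{(s)}$ is exactly $N\Paren{0,\ \sigma_y^2 (G^{(s)\top}G^{(s)})^{-1}}$, because within a batch the parents are independent of the noise. Hence the average is a centered Gaussian with covariance $\Sigma = \frac{\sigma_y^2}{b^2}\sum_{s=1}^b (G^{(s)\top}G^{(s)})^{-1}$, and this is precisely where the averaging buys variance reduction: a naive triangle-inequality bound over batches would lose it entirely.

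I would then condition on the favorable event that every batch is well-conditioned. Applying \cref{lem:gtg} to each $G^{(s)}$ (valid once $k = \Omega(d)$) and taking a union bound over the $b$ batches shows $\Norm{(G^{(s)\top}G^{(s)})^{-1}}_{op} \leq \cO(1/k)$ simultaneously for all $s$, except with probability $b\exp(-\Omega(k))$; demanding this be small is what forces $k = \Omega\Paren{d + \ln(n/(\eps\delta))}$. On this event $\Norm{\Sigma}_{op} \leq \cO\Paren{\sigma_y^2/(bk)} = \cO\Paren{\sigma_y^2/m_1}$, so writing $\Norm{L^\top\Delta}^2 = \Norm{\Sigma^{1/2}g}^2 \leq \Norm{\Sigma}_{op}\cdot\Norm{g}^2$ with $g \sim N(0,I_p)$ and controlling $\Norm{g}^2 \sim \chi^2_p$ via \cref{lem:chi-square-concentration} yields $\Abs{\Delta^\top M\Delta} \leq \cO\Paren{\sigma_y^2 p/m_1}$ with high probability. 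Choosing $m_1 = \cO\Paren{\frac{nd_{avg}}{\eps}\Paren{d+\ln(n/(\eps\delta))}}$ with $kb = m_1$ makes the right-hand side at most $\sigma_y^2\cdot\frac{p\eps}{nd_{avg}}$, establishing \ref{eq:condition1}; a final union bound over the $n$ variables gives the all-variables guarantee, with the failure probability absorbed into the logarithmic term.

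The remaining pieces are routine. Condition~\ref{eq:condition2} comes from \cref{thm:recovery-variance} with $m_2 = \cO\Paren{\frac{nd_{avg}}{\eps}\log(n/\delta)}$ samples exactly as in the \texttt{LeastSquares} proof, the two conditions plug into \cref{prop:decomposition-implication}, and Pinsker converts the KL bound into the stated $\tv$ bound after replacing $\eps$ by $\Theta(\eps^2)$ — this is what turns $1/\eps$ into $1/\eps^2$ in the final sample complexity. For the running time, each of the $b$ batches at a $p$-parent node costs $\cO(p^2 k)$ to form and solve, i.e.\ $\cO(p^2 m_1)$ per node; summing over nodes with $\sum_i p_i = n d_{avg}$ and $\max_i p_i \leq d$ gives the claimed $\cO\Paren{\frac{n^2 d_{avg}^2 d}{\eps^2}\Paren{d+\ln(n/(\eps\delta))}}$ bound. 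The main obstacle I anticipate is cleanly handling the two independent sources of randomness — the designs $G^{(s)}$ and the noises $\eta^{(s)}$ — and the conditional-Gaussian reduction above is exactly the device that decouples them and exposes the $1/m_1$ variance reduction from averaging.
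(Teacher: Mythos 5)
Your proposal follows the paper's own skeleton exactly (coefficient bound for \ref{eq:condition1}, \cref{thm:recovery-variance} for \ref{eq:condition2}, \cref{prop:decomposition-implication} plus Pinsker and an $\eps$-rescaling, and the same runtime accounting), and your decomposition $L^\top\Delta = \frac{1}{b}\sum_s ((G^{(s)})^\top G^{(s)})^{-1}(G^{(s)})^\top\eta^{(s)}$ with conditioning on the designs is also the paper's starting point. Where you genuinely diverge is in how the conditional Gaussian is controlled. The paper works coordinate-by-coordinate: it bounds $\Norm{\Paren{((G^{(s)})^\top G^{(s)})^{-1}(G^{(s)})^\top}_i} \leq \Norm{((G^{(s)})^\top G^{(s)})^{-1}}\cdot\Norm{G^{(s)}}$, which requires \emph{both} \cref{lem:gtg} and the rectangular-Gaussian-norm bound (\cref{sec:norm-of-rectangular-gaussian-matrix}), then applies a scalar Gaussian tail bound to each of the $p$ entries and union bounds. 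You instead use the exact identity $BB^\top = (G^\top G)^{-1}$ for $B = (G^\top G)^{-1}G^\top$, so the conditional covariance of the average is precisely $\Sigma = \frac{\sigma_y^2}{b^2}\sum_s ((G^{(s)})^\top G^{(s)})^{-1}$, and only \cref{lem:gtg} is needed to get $\Norm{\Sigma}_{op} \leq \cO(\sigma_y^2/(bk))$. This is cleaner and less lossy than the paper's two-factor norm bound, and it dispenses with the Bandeira lemma entirely.

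There is one flaw in your finish, though it is easily repaired. You invoke \cref{lem:chi-square-concentration} to control $\Norm{g}^2 \sim \chi^2_p$, but that lemma only covers relative deviations $t < 1$, so it yields at best $\Pr\Paren{\Norm{g}^2 \geq 2p} \leq 2\exp(-p/8)$ — vacuous when $p$ is small (e.g.\ $p = 1$), and in any case far from the per-variable failure probability $\delta/n$ you need for the union bound over nodes. The fix is to use a chi-square tail bound valid at large deviations, e.g.\ the Laurent--Massart bound $\Pr\Paren{\chi^2_p \geq p + 2\sqrt{pt} + 2t} \leq e^{-t}$ with $t = \ln(n/\delta)$, giving $\Norm{g}^2 \leq \cO\Paren{p + \ln(n/\delta)}$; alternatively, bound the $p$ coordinates of $g$ individually by Gaussian tails and union bound, which essentially recovers the paper's coordinate-wise argument. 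Either way the resulting requirement $bk \gtrsim \frac{nd_{avg}}{\eps}\cdot\frac{p+\ln(n/\delta)}{p} \leq \frac{nd_{avg}}{\eps}\Paren{1+\ln(n/\delta)}$ sits comfortably inside the stated $m_1 \in \cO\Paren{\frac{nd_{avg}}{\eps}\Paren{d+\ln\Paren{\frac{n}{\eps\delta}}}}$, and your constraint $k \in \Omega\Paren{d + \ln\Paren{\frac{n}{\eps\delta}}}$ from the per-batch union bound is exactly the paper's; so the theorem's sample and time complexities are unaffected.
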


Our approach for analyzing \texttt{BatchAvgLeastSquares} is the same as our approach for \texttt{LeastSquares}: we prove guarantees for an arbitrary variable and then take union bound over $n$ variables. At a high-level, for each node $Y$, for every fixing of the randomness in generating $X_1, \dots, X_p$, we show that each $\wt{A}^{(s)}$ is a gaussian. Since the $b$ iterations are independent, $\frac1b \sum_s \wt{A}^{(s)}$ is also a gaussian. Its variance is itself a random variable but can be bounded with high probability using concentration inequalities.

\begin{lemma}
\label{lem:BatchAvgLeastSquares-single}
Consider \cref{alg:batch-least-squares}.
Fix any arbitrary variable of interest $Y$ with $p$ parents, parameters $(A, \sigma_y)$, and associated covariance matrix $M$.
With $k > C_k \cdot \Paren{p + \ln\Paren{\frac{n}{\eps\delta}}}$ and $kb = C_{kb} \cdot \Paren{\frac{nd_{avg}}{\eps} \Paren{d+ \ln\Paren{\frac{n}{\eps\delta}}}}$, for some universal constants $C_k$ and $C_{kb}$, we recover coefficients estimates $\wh{A}$ such that
\[
\Pr\Paren{ \Abs{\Delta^\top M \Delta} \leq \sigma_y^2 \cdot \frac{\eps p}{n d_{avg}}} \geq 1 - \delta
\]
\end{lemma}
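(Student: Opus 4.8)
The plan is to mimic the analysis of \texttt{LeastSquares} in \cref{lem:least-squares-single}, but to exploit that averaging $b$ independent batch solutions shrinks the variance of the estimator. As before, since $\Abs{\Delta^\top M \Delta} = \Norm{L^\top \Delta}^2$, it suffices to control $\Norm{L^\top \Delta}$. For each batch $s \in \{1, \ldots, b\}$ with design matrix $X^{(s)} \in \R^{k \times p}$ and noise $\eta^{(s)} \in \R^{k}$, the same manipulation as in \cref{lem:least-squares-single} gives $\wt{A}^{(s)} - A = (X^{(s)\top} X^{(s)})^{-1} X^{(s)\top} \eta^{(s)}$, and invoking \cref{fact:multivariate-transformation} to write $X^{(s)} = G^{(s)} L^\top$ with $G^{(s)}$ having i.i.d.\ $N(0,1)$ entries, we obtain $L^\top (\wt{A}^{(s)} - A) = (G^{(s)\top} G^{(s)})^{-1} G^{(s)\top} \eta^{(s)}$.

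First I would condition on the design randomness, i.e.\ on the matrices $G^{(1)}, \ldots, G^{(b)}$ (equivalently $X^{(1)}, \ldots, X^{(b)}$). Since the batches use disjoint samples, the $\eta^{(s)} \sim N(0, \sigma_y^2 I_k)$ remain mutually independent even after this conditioning, so each $L^\top (\wt{A}^{(s)} - A)$ is a linear image of a Gaussian and hence $L^\top (\wt{A}^{(s)} - A) \sim N\Paren{0, \sigma_y^2 (G^{(s)\top} G^{(s)})^{-1}}$. By independence across batches, their average is again Gaussian, $L^\top \Delta = \frac{1}{b} \sum_{s=1}^b L^\top (\wt{A}^{(s)} - A) \sim N(0, \Sigma)$, where $\Sigma = \frac{\sigma_y^2}{b^2} \sum_{s=1}^b (G^{(s)\top} G^{(s)})^{-1}$. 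This is exactly the step flagged in the high-level sketch: the estimator is Gaussian with a \emph{random} covariance $\Sigma$ determined by the designs.

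Next I would bound $\Sigma$ with high probability. Applying \cref{lem:gtg} to each batch (which is where the requirement $k \gtrsim p + \ln(\cdot)$ enters, so that $k \geq p / c_1^2$) and taking a union bound over the $b$ batches, each $\Norm{(G^{(s)\top} G^{(s)})^{-1}}_{op}$ is at most $\cO(1/k)$, so that $\Norm{\Sigma}_{op} \leq \cO\Paren{\sigma_y^2 / (bk)}$ on this good event. Conditioned on this spectral bound, $\Norm{L^\top \Delta}^2$ is a quadratic form in a standard Gaussian with covariance $\Sigma$, which I would control via chi-square concentration (\cref{lem:chi-square-concentration}): bounding $\Norm{L^\top \Delta}^2 \leq \Norm{\Sigma}_{op} \Norm{Z}^2$ for $Z \sim N(0, I_p)$ and using $\Norm{Z}^2 \lesssim p + \ln(\cdot)$ with high probability. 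This yields $\Abs{\Delta^\top M \Delta} = \Norm{L^\top \Delta}^2 \leq \cO\Paren{\sigma_y^2 (p + \ln(\cdot)) / (bk)}$.

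Finally I would choose the constants $C_k$ and $C_{kb}$ so that the right-hand side is at most $\sigma_y^2 \cdot \frac{\eps p}{n d_{avg}}$: the constraint $k > C_k (p + \ln(\frac{n}{\eps\delta}))$ keeps the factor $(p+\ln)/k$ bounded and lets \cref{lem:gtg} apply per batch, while forcing $k \gtrsim d + \ln(\frac{n}{\eps\delta})$ globally (to accommodate the maximum-degree node) is what injects the extra $d$ into the total budget $kb = C_{kb} \frac{n d_{avg}}{\eps}(d + \ln(\frac{n}{\eps\delta}))$; the resulting $b = \Theta(n d_{avg}/\eps)$ provides the $1/b$ shrinkage from averaging that pushes the error below the target. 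The last bookkeeping task is to sum the failure probabilities — the $b$ per-batch events from \cref{lem:gtg} and the single quadratic-form event — and verify that the total is at most $\delta$, with the $\ln(\frac{n}{\eps\delta})$ slack absorbing the later union bound over the $n$ nodes in \cref{prop:decomposition-implication}. I expect the main obstacle to be precisely this \emph{random-covariance} structure: because $\Sigma$ depends on the designs, one cannot directly invoke a fixed-covariance tail bound, and must instead carefully decouple the two sources of randomness (designs versus noise) and balance how $k$ and $b$ split the budget so that both the spectral control of $\Sigma$ and the quadratic-form concentration succeed simultaneously.
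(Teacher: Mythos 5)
Your proposal is correct and follows the same overall strategy as the paper's proof (condition on the design matrices, observe that the averaged estimator is conditionally Gaussian with a random covariance, control that covariance via \cref{lem:gtg}, then apply a tail bound), but it differs in two substantive execution details, both of which actually streamline the argument. First, you identify the exact conditional law of each batch, $L^\top(\wt{A}^{(s)}-A) \sim N\bigl(0, \sigma_y^2 (G^{(s)\top}G^{(s)})^{-1}\bigr)$, so the covariance of the average is $\Sigma = \frac{\sigma_y^2}{b^2}\sum_s (G^{(s)\top}G^{(s)})^{-1}$ and it suffices to bound $\Norm{\Sigma}$ using \cref{lem:gtg} alone. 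The paper instead bounds the per-coordinate variance via the row norms $\Norm{\bigl(((G^{(s)})^\top G^{(s)})^{-1}(G^{(s)})^\top\bigr)_i} \leq \Norm{((G^{(s)})^\top G^{(s)})^{-1}}\cdot\Norm{G^{(s)}}$, which forces it to invoke a second spectral estimate (\cref{sec:norm-of-rectangular-gaussian-matrix}) for $\Norm{G^{(s)}}$; your route makes that lemma unnecessary, since $\mathrm{Cov}\bigl((G^\top G)^{-1}G^\top \eta\bigr) = \sigma_y^2 (G^\top G)^{-1}G^\top G (G^\top G)^{-1} = \sigma_y^2(G^\top G)^{-1}$ collapses exactly. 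Second, for the final tail bound the paper works coordinate-by-coordinate (Gaussian tail per entry of $L^\top\Delta$, then a union bound over the $p$ entries), whereas you bound the full quadratic form $\Norm{L^\top\Delta}^2 \leq \Norm{\Sigma}\,\Norm{Z}^2$ with $Z \sim N(0,I_p)$ and use chi-square concentration. Both yield the same $\cO(\sigma_y^2(p+\ln(\cdot))/(bk))$ error and the same budget arithmetic.

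One caveat on the chi-square step: the bound you need, $\Norm{Z}^2 \lesssim p + \ln(1/\delta')$ with probability $1-\delta'$, is a standard fact (a Laurent--Massart/Bernstein-type bound), but it does \emph{not} follow from \cref{lem:chi-square-concentration} as stated in the paper, since that lemma only covers relative deviations $t < 1$ and hence cannot give failure probability below $2\exp(-p/8)$ --- useless when $p$ is small and $\ln(1/\delta') \gg p$. You should either cite the stronger standard bound explicitly, or fall back to a per-coordinate Gaussian tail on $(L^\top\Delta)_i$ (using $\Sigma_{ii} \leq \Norm{\Sigma}$), which is exactly the paper's fix for the same issue. This is a citation-level patch, not a gap in the logic.
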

\begin{proof}
Without loss of generality, the parents of $Y$ are $X_1, \ldots, X_p$.
For $s \in [n]$, define $X^{(s)} \in \R^{k \times p}$, $B^{(s)} \in \R^k$, and $\wt{A}^{(s)} \in \R^p$ as the quantities involved in the $s^{th}$ batch of \cref{alg:batch-least-squares}.
Let $\eta^{(s)} = [\eta_y^{(s,1)}, \ldots, \eta_y^{(s,k)}] \in \R^k$ be the instantiations of Gaussian $\eta_y$ in the $k$ samples for the $s^{th}$ batch.
By the structural equations, we know that $B^{(s)} = X^{(s)} A + \eta^{(s)}$.
So,
\begin{align*}
\wt{A}^{(s)}
&= ((X^{(s)})^\top X^{(s)})^{-1} (X^{(s)})^\top B\\
&= ((X^{(s)})^\top X^{(s)})^{-1} (X^{(s)})^\top (X^{(s)} A + \eta^{(s)})\\
&= A + ((X^{(s)})^\top X^{(s)})^{-1} (X^{(s)})^\top \eta^{(s)}
\end{align*}

By \cref{fact:multivariate-transformation}, we can express $X^{(s)} = G^{(s)} L^\top$ where matrix $G^{(s)} \in \R^{k \times p}$ is a random matrix with i.i.d.\ $N(0,1)$ entries.
So, we see that
\[
L^\top \Delta
= \frac{1}{b} \sum_{s=1}^b ((G^{(s)})^\top G^{(s)})^{-1} (G^{(s)})^\top \eta^{(s)}
\]

For any $i \in [p]$, \cref{fact:linear-combination-of-gaussians} tells us that
\[
\Paren{L^\top \Delta}_i
= \frac{1}{b} \sum_{s=1}^b \Paren{((G^{(s)})^\top G^{(s)})^{-1} (G^{(s)})^\top \eta^{(s)}}_i
\sim N\Paren{0, \frac{\sigma_y^2}{b^2} \sum_{s=1}^b \Norm{\Paren{((G^{(s)})^\top G^{(s)})^{-1} (G^{(s)})^\top}_i}^2 }
\]

We can upper bound each $\Norm{\Paren{((G^{(s)})^\top G^{(s)})^{-1} (G^{(s)})^\top}_i}$ term as follows:
\[
\Norm{\Paren{((G^{(s)})^\top G^{(s)})^{-1} (G^{(s)})^\top}_i}
\leq \Norm{((G^{(s)})^\top G^{(s)})^{-1} (G^{(s)})^\top}
\leq \Norm{((G^{(s)})^\top G^{(s)})^{-1}} \cdot \Norm{G^{(s)}}
\]

When $k \geq 4p$, \cref{lem:gtg} tells us that
$
\Pr\Paren{ \Norm{((G^{(s)})^\top G^{(s)})^{-1}} \geq \frac{4}{k} } \leq \exp\Paren{- \frac{k}{32}}
$.
Meanwhile, \cref{sec:norm-of-rectangular-gaussian-matrix} tells us that
$
\Pr\Paren{\Norm{G^{(s)}} \geq 2(\sqrt{k} + \sqrt{p})} \leq \sqrt{p} \cdot \exp\Paren{-C \cdot k}
$
for some universal constant $C$.
Let $\mathcal{E}$ be the event that $\Norm{\Paren{((G^{(s)})^\top G^{(s)})^{-1} (G^{(s)})^\top}_i} < \frac{8 (\sqrt{k} + \sqrt{p})}{k}$ for any $s \in [b]$.
Applying union bound with the conclusions from \cref{lem:gtg} and \cref{sec:norm-of-rectangular-gaussian-matrix}, we have
\begin{align*}
\Pr\Paren{\overline{\mathcal{E}}}
&= \Pr\Paren{\exists s \in [b], \Norm{\Paren{((G^{(s)})^\top G^{(s)})^{-1} (G^{(s)})^\top}_i} \geq \frac{8 (\sqrt{k} + \sqrt{p})}{k}}\\
&\leq b \cdot \exp\Paren{- \frac{k}{32}} + b \cdot \sqrt{p} \cdot \exp\Paren{-C \cdot k}
\end{align*}

Conditioned on event $\mathcal{E}$, standard Gaussian tail bounds (e.g.\ see \cref{fact:standard-gaussian-tail-bound}) give us
\begin{align*}
\Pr\Paren{\Abs{L^\top \Delta}_i > \sigma_y \cdot \sqrt{\frac{\eps}{n d_{avg}}} \mathrel{\Big|} \mathcal{E}}
&\leq \exp\Paren{-\frac{\sigma_y^2 \cdot \frac{\eps}{n d_{avg}}}{2 \cdot \frac{\sigma_y^2}{b^2} \sum_{s=1}^b \Norm{\Paren{((G^{(s)})^\top G^{(s)})^{-1} (G^{(s)})^\top}_i}^2}}\\
&\leq \exp\Paren{- \frac{\eps \cdot b \cdot k^2}{128 \cdot n \cdot d_{avg} \cdot (\sqrt{k} + \sqrt{p})^2}}\\
&\leq \exp\Paren{- \frac{\eps \cdot b \cdot k}{512 \cdot n \cdot d_{avg}}}
\end{align*}
where the second last inequality is because of event $\mathcal{E}$ and the last inequality is because $(\sqrt{k} + \sqrt{p})^2 \leq (2\sqrt{k})^2 = 4k$, since $k \geq p$.

Thus, applying a union bound over all $p$ entries of $L^\top \Delta$ and accounting for $\Pr(\overline{\mathcal{E}})$, we have
\begin{align*}
&\; \Pr\Paren{\Norm{L^\top \Delta} > \sigma_y \cdot \sqrt{\frac{\eps p}{n d_{avg}}}}\\
\leq &\; \Pr\Paren{\Norm{L^\top \Delta} > \sigma_y \cdot \sqrt{\frac{\eps p}{n d_{avg}}} \mathrel{\Big|} \mathcal{E}} + \Pr\Paren{\mathcal{\overline{E}}}\\
\leq &\; p \cdot \exp\Paren{- \frac{\eps \cdot b \cdot k}{512 \cdot n \cdot d_{avg}}} + b \cdot \exp\Paren{- \frac{k}{32}} + b \cdot \sqrt{p} \cdot \exp\Paren{-C \cdot k}
\end{align*}
for some universal constant $C$.

The claim follows by observing that $\Abs{\Delta^\top M \Delta} = \Abs{\Delta^\top LL^\top \Delta} = \Norm{L^\top \Delta}^2$ and applying the assumptions on $k$ and $b$.
\end{proof}

We can now establish \ref{eq:condition1} of \cref{prop:decomposition-implication} for \texttt{BatchAvgLeastSquares}.

\begin{lemma}[Coefficient recovery guarantees of \texttt{BatchAvgLeastSquares}]
\label{lem:recovery-BatchAvgLeastSquares}
Consider \cref{alg:batch-least-squares}.
With $m_1 \in \cO\Paren{\frac{nd_{avg}}{\eps} \Paren{d + \ln\Paren{\frac{n}{\eps \delta}}}}$ samples, where $k \in \Omega\Paren{d + \ln\Paren{\frac{n}{\eps\delta}}}$ and $b = \frac{m_1}{k}$, we recover coefficient estimates $\wh{A}_1, \ldots, \wh{A}_n$ such that
\[
\Pr\Paren{ \forall i \in [n], \Abs{\Delta_i^\top M_i \Delta_i} \geq \sigma_i^2 \cdot \frac{\eps p_i}{n d_{avg}} } \leq \delta
\]
The total running time is $\cO(m_1 \cdot n \cdot d_{avg} \cdot d)$.
\end{lemma}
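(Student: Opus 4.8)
The plan is to lift the single-variable guarantee of \cref{lem:BatchAvgLeastSquares-single} to all $n$ variables simultaneously via a union bound, in direct analogy with how \cref{lem:least-squares} is obtained from \cref{lem:least-squares-single}. Concretely, I would apply \cref{lem:BatchAvgLeastSquares-single} to each variable $X_i$ with the per-variable failure probability set to $\delta' = \delta/n$, so that summing the $n$ failure events yields the desired overall bound of $\delta$. The core of the argument is therefore not a fresh calculation but a verification that a single common choice of batch parameters $(k,b)$ satisfies, for every variable at once, the two hypotheses of \cref{lem:BatchAvgLeastSquares-single}.

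For the first hypothesis, since \cref{assumption:d-parents} guarantees $p_i \leq d$ for all $i$, choosing $k \in \Omega\Paren{d + \ln\Paren{\frac{n}{\eps\delta}}}$ with a sufficiently large hidden constant satisfies $k > C_k\Paren{p_i + \ln\Paren{\frac{n}{\eps\delta'}}}$ for every $i$ at once; here I would use that replacing $\delta$ by $\delta' = \delta/n$ changes the logarithmic factor only from $\ln\Paren{\frac{n}{\eps\delta}}$ to $\ln\Paren{\frac{n^2}{\eps\delta}} \leq 2\ln\Paren{\frac{n}{\eps\delta}}$ (valid since $\eps\delta \leq 1$), so the $n$ already baked into the logarithm of \cref{lem:BatchAvgLeastSquares-single} absorbs the union-bound penalty up to a constant factor. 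The second hypothesis $kb = C_{kb}\Paren{\frac{nd_{avg}}{\eps}\Paren{d + \ln\Paren{\frac{n}{\eps\delta}}}}$ is a single global target that is independent of $p_i$ (it controls the dominant tail exponent $\eps bk/(n d_{avg})$), so setting $m_1 = kb \in \cO\Paren{\frac{nd_{avg}}{\eps}\Paren{d + \ln\Paren{\frac{n}{\eps\delta}}}}$ and $b = m_1/k$ makes the same pair $(k,b)$ admissible for all variables. With these choices \cref{lem:BatchAvgLeastSquares-single} gives $\Pr\Paren{\Abs{\Delta_i^\top M_i \Delta_i} \geq \sigma_i^2 \cdot \frac{\eps p_i}{n d_{avg}}} \leq \delta/n$ for each $i$, and a union bound over the $n$ variables closes the probability claim.

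The step that requires the most care is checking that all three tail terms appearing in the single-variable bound, namely $p\exp\Paren{-\frac{\eps bk}{512\, n d_{avg}}}$, $b\exp\Paren{-\frac{k}{32}}$, and $b\sqrt{p}\exp\Paren{-Ck}$, are each at most $\delta/(3n)$ under the stated $(k,b)$. The first is handled by the global target on $kb$ above. For the latter two I would observe that $b = m_1/k \leq m_1$ is polynomially bounded in $n, d, 1/\eps, 1/\delta$, so $\ln b = \cO\Paren{\ln\Paren{\frac{n}{\eps\delta}}}$ (using $d \leq n$), and hence the linear-in-$k$ exponents $k/32$ and $Ck$ dominate $\ln(b\sqrt{p})$ once the hidden constant in $k \in \Omega\Paren{d + \ln\Paren{\frac{n}{\eps\delta}}}$ is chosen large enough; this is the usual mild circular dependence between $k$ and $\ln b$, resolved by fixing constants. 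Finally, for the running time I would tally the cost per variable: each of the $b$ batches solves a $k \times p_i$ least-squares system in $\cO(k p_i^2)$ time (since $k \geq p_i$), giving $\cO(bk p_i^2) = \cO(m_1 p_i^2)$ per variable; summing over all variables using $p_i \leq d$ and $\sum_i p_i = n d_{avg}$ yields the claimed $\cO(m_1 \cdot n \cdot d_{avg} \cdot d)$.
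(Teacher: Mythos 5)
Your proposal is correct and follows essentially the same route as the paper's own proof: apply \cref{lem:BatchAvgLeastSquares-single} with per-variable failure probability $\delta' = \delta/n$, union bound over the $n$ variables, and bound the runtime per variable by $\cO(b k p_i^2) = \cO(m_1 p_i^2)$, summing via $p_i \leq d$ and $\sum_i p_i = n d_{avg}$. The only difference is that you spell out the uniformity checks on $(k,b)$ and the absorption of the union-bound penalty into the logarithm, which the paper leaves implicit.
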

\begin{proof}
For each $i \in [n]$, apply \cref{lem:BatchAvgLeastSquares-single} with $\delta' = \delta/n$, then take the union bound over all $n$ variables.

The computational complexity for a variable with $p$ parents is $\cO(b \cdot p^2 \cdot k) = \cO(p^2 \cdot m_1)$.
Since $\max_{i \in [n]} p_i \leq d$ and $\sum_{i=1}^n p_i = n d_{avg}$, the total runtime is $\cO(m_1 \cdot n \cdot d_{avg} \cdot d)$.
\end{proof}

\cref{thm:two-phased-batchleastsquares} follows from combining the guarantees of \texttt{BatchAvgLeastSquares} and \texttt{VarianceRecovery} (given in \cref{lem:recovery-BatchAvgLeastSquares} and \cref{thm:recovery-variance} respectively) via \cref{prop:decomposition-implication}.

\begin{proof}[Proof of \cref{thm:two-phased-batchleastsquares}]
We will show sample and time complexities before giving the proof for the $\tv$ distance.

Let $m_1 \in \cO\Paren{\frac{nd_{avg}}{\eps} \Paren{d + \ln\Paren{\frac{n}{\eps \delta}}}}$ and $m_2 \in \cO\Paren{\frac{n d_{avg}}{\eps} \log\Paren{\frac{n}{\delta}}}$.
Then, the total number of samples needed is $m = m_1 + m_2 \in \cO\Paren{\frac{nd_{avg}}{\eps} \Paren{d+ \ln\Paren{\frac{n}{\eps \delta}}}}$.
\texttt{BatchAvgLeastSquares} runs in $\cO(m_1 n d_{avg} d)$ time while \texttt{VarianceRecovery} runs in $\cO\Paren{\frac{n^2 d_{avg}^2}{\eps} \log \Paren{\frac{1}{\delta}}}$ time.
Therefore, the overall running time is $\cO\Paren{\frac{n^2 d_{avg}^2 d}{\eps} \Paren{d + \ln\Paren{\frac{n}{\eps \delta}}}}$.

By \cref{lem:recovery-BatchAvgLeastSquares}, \texttt{BatchAvgLeastSquares} recovers coefficients $\wh{A}_1, \ldots, \wh{A}_n$ such that
\[
\Pr\Paren{ \forall i \in [n], \Abs{\Delta_i^\top M_i \Delta_i} \geq \sigma_i^2 \cdot \frac{\eps p_i}{n d_{avg}} }
\leq \delta
\]

By \cref{thm:recovery-variance} and using the recovered coefficients from \texttt{BatchAvgLeastSquares}, \texttt{VarianceRecovery} recovers variance estimates $\wh{\sigma}_i^2$ such that
\[
\Pr\Paren{\forall i \in [n], \Paren{1 - \sqrt{\frac{\eps p_i}{n d_{avg}}}} \cdot \sigma_i^2 \leq \wh{\sigma}_i^2 \leq \Paren{1 + \sqrt{\frac{\eps p_i}{n d_{avg}}}} \cdot \sigma_i^2} \geq 1 - \delta
\]

As our estimated parameters satisfy \ref{eq:condition1} and \ref{eq:condition2}, \cref{prop:decomposition-implication} tells us that $\kl(\cP,\cQ) \leq 3 \eps$.
Thus, $\tv(\cP,\cQ) \leq \sqrt{\kl(\cP,\cQ) / 2} \leq \sqrt{3 \eps/2}$.
The claim follows by setting $\eps' = \sqrt{3 \eps/2}$ throughout.
\end{proof}

\section{Coefficient recovery algorithm based on Cauchy random variables}
\label{sec:cauchyest}

In this section, we provide novel algorithms \texttt{CauchyEst} and \texttt{CauchyEstTree} for recovering the coefficients in polytree Bayesian networks.
We will show that \texttt{CauchyEstTree} has near-optimal sample complexity, and later in \cref{sec:experiments}, we will see that both these algorithms outperform \texttt{LeastSquares} and \texttt{BatchAvgLeastSquares} on randomly generated Bayesian networks.
Of technical interest, our analysis involves Cauchy random variables, which are somewhat of a rarity in statistical learning.
As in \texttt{LeastSquares} and \texttt{BatchAvgLeastSquares}, \texttt{CauchyEst} and \texttt{CauchyEstTree} use independent samples to recover the coefficients associated to each individual variable in an independent fashion.

Consider an arbitrary variable $Y$ with $p$ parents.
The intuition is as follows: if $\eta_y = 0$, then one can form a linear system of equations using $p$ samples to solve for the coefficients $a_{y \leftarrow i}$ \emph{exactly} for each $i \in \pi(y)$.
Unfortunately, $\eta_y$ is non-zero in general.
Instead of exactly recovering $A$, we partition the $m_1$ independent samples into $k = \lfloor m_1/p \rfloor$ batches involving $p$ samples and form intermediate estimates $\wt{A}^{(1)}, \ldots, \wt{A}^{(k)}$ by solving a system of linear equation for each batch (see \cref{alg:batch}).
Then, we ``combine'' these intermediate estimates to obtain our estimate $\wh{A}$.

\begin{algorithm}[htbp]
\caption{Batch coefficient recovery algorithm for variable with $p$ parents}
\label{alg:batch}
\begin{algorithmic}[1]
    \State \textbf{Input}: DAG $G$, a variable $Y$ with $p$ parents, and $p$ samples
    \State Without loss of generality, by renaming variables, let $X_1, \ldots, X_p$ be the parents of $Y$.
    \State Form matrix $X \in \R^{p \times p}$ where the $r^{th}$ \emph{row} equals $[X_1^{(r)}, \ldots, X_p^{(r)}]$, corresponding to $Y^{(r)}$.
    \State Define $\wt{A} = \Brac{\wh{a}_{y \leftarrow 1}, \ldots, \wh{a}_{y \leftarrow p}}^\top$ as \emph{any} solution to $X \wt{A} = \Brac{Y^{(1)}, \ldots, Y^{(p)}}^\top$.
    \State \Return $\wt{A}$
\end{algorithmic}
\end{algorithm}

Consider an arbitrary copy of recovered coefficients $\wt{A}$.
Let $\Delta = \Brac{\Delta_1, \ldots, \Delta_p}^\top = \wt{A} - A$ be a vector measuring the gap between these recovered coefficients and the ground truth, where $\Delta_i = \wt{a}_{y \leftarrow i} - a_{y \leftarrow i}$.
\cref{lem:term-wise-cauchy} gives a condition where a vector is term-wise Cauchy.
Using this, \cref{lem:L-Delta-Cauchy} shows that each entry of the vector $L^\top \Delta$ is distributed according to $\sigma_y \cdot \cau(0,1)$, although the entries may be correlated with each other in general.

\begin{lemma}
\label{lem:term-wise-cauchy}
Consider the matrix equation $AB = E$ where $A \in \R^{n \times n}$, $B \in \R^{n \times 1}$, and $E \in R^{n \times 1}$ such that entries of $A$ and $E$ are independent Gaussians, elements in each \emph{column} of $A$ have the same variance, and all entries in $E$ have the same variance.
That is, $A_{\cdot, j} \sim N(0, \sigma_i^2)$ and $E_i \sim N(0, \sigma_{n+1}^2)$.
Then, for all $i \in [n]$, we have that $B_i \sim \frac{\sigma_{n+1}}{\sigma_i} \cdot \cau(0,1)$.
\end{lemma}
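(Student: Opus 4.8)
We want to show that if $AB = E$ where $A \in \R^{n \times n}$ and $E \in \R^{n \times 1}$ have independent Gaussian entries with the column-variance and common-$E$-variance structure described, then each component $B_i \sim \frac{\sigma_{n+1}}{\sigma_i} \cdot \cau(0,1)$. Since $B = A^{-1} E$, by Cramer's rule we have $B_i = \det(A^{(i)}) / \det(A)$, where $A^{(i)}$ is the matrix $A$ with its $i$-th column replaced by $E$.

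**The plan.**

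The plan is to use Cramer's rule and exploit the fact that the ratio of two particular determinants is a ratio of two Gaussians, which is a Cauchy random variable. First I would fix the index $i$ and write $B_i = \det(A^{(i)})/\det(A)$. The key structural observation is that both $\det(A)$ and $\det(A^{(i)})$ share all columns except the $i$-th: $A$ has its genuine $i$-th column $A_{\cdot, i}$, while $A^{(i)}$ has $E$ in that slot. Expanding both determinants along the $i$-th column (Laplace/cofactor expansion), we get $\det(A) = \sum_{r=1}^n (-1)^{r+i} A_{r,i} \, C_r$ and $\det(A^{(i)}) = \sum_{r=1}^n (-1)^{r+i} E_r \, C_r$, where the cofactors $C_r$ are identical in both expansions because they depend only on the shared columns $\{A_{\cdot, j} : j \neq i\}$.

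**Reducing to a ratio of Gaussians.**

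Now I would condition on all columns of $A$ other than the $i$-th, i.e.\ condition on the cofactors $C_1, \ldots, C_n$ (which are fixed functions of these shared columns). Conditioned on this, the numerator $\det(A^{(i)}) = \sum_r (-1)^{r+i} E_r C_r$ is a fixed linear combination of the independent $N(0, \sigma_{n+1}^2)$ entries of $E$, and the denominator $\det(A) = \sum_r (-1)^{r+i} A_{r,i} C_r$ is a fixed linear combination of the independent $N(0, \sigma_i^2)$ entries of the $i$-th column of $A$. By \cref{fact:linear-combination-of-gaussians}, the numerator is distributed as $N(0, \sigma_{n+1}^2 \cdot \sum_r C_r^2)$ and the denominator as $N(0, \sigma_i^2 \cdot \sum_r C_r^2)$, with the \emph{same} weight vector $((-1)^{r+i} C_r)_r$ and hence the same $\sum_r C_r^2$ factor. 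Crucially, the $i$-th column of $A$ and the vector $E$ are independent of each other and of the conditioning, so numerator and denominator are independent Gaussians. The ratio of two independent centered Gaussians with standard deviations $s_{\text{num}}$ and $s_{\text{den}}$ is $\frac{s_{\text{num}}}{s_{\text{den}}} \cdot \cau(0,1)$; here the common $\sqrt{\sum_r C_r^2}$ factor cancels, leaving $\frac{\sigma_{n+1}}{\sigma_i} \cdot \cau(0,1)$. Since this conditional law does not depend on the conditioning (the $\sum_r C_r^2$ having cancelled), integrating out recovers the same unconditional law for $B_i$.

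**Main obstacle.**

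I expect the main subtlety to be the independence and cancellation argument in the conditioning step: one must verify that conditioning on the shared columns makes the cofactors $C_r$ deterministic, that the remaining randomness in the numerator (the entries of $E$) and denominator (the entries of $A_{\cdot, i}$) are genuinely independent of each other, and that the common weight $\sqrt{\sum_r C_r^2}$ indeed cancels so that the conditional distribution is free of the conditioning variables. One should also handle the measure-zero event $\sum_r C_r^2 = 0$ (equivalently $A$ singular), which occurs with probability zero for continuous Gaussian entries and can be safely ignored. Once these points are checked, the standard fact that the ratio of independent zero-mean Gaussians is Cauchy closes the argument.
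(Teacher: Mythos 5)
Your proposal takes essentially the same route as the paper's proof: Cramer's rule, cofactor expansion of the determinant along the $i$-th column, and the observation that numerator and denominator are centered Gaussians with the \emph{same} weight vector of cofactors, so their ratio is $\frac{\sigma_{n+1}}{\sigma_i}\cdot\cau(0,1)$. If anything, your write-up is more careful than the paper's: you make explicit the conditioning on the columns $j \neq i$ (which fixes the cofactors and gives genuine independence of numerator and denominator before integrating out), and your variances correctly carry the factor $\sum_r C_r^2$, whereas the paper's displayed variances omit the squares on the cofactors.
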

\begin{proof}
As the event that $A$ is singular has measure zero, we can write $B = A^{-1} E$.
By Cramer's rule,
\[
A^{-1} = \frac{1}{\det(A)} \cdot {\rm adj}(A) = \frac{1}{\det(A)} \cdot C^\top
\]
where $\det(A)$ is the determinant of $A$, ${\rm adj}(A)$ is the adjugate/adjoint matrix of $A$, and $C$ is the cofactor matrix of $A$.
Recall that the $\det(A)$ can defined with respect to elements in $C$: For any \emph{column} $i \in [n]$,
\[
\det(A) = A_{1,i} \cdot C_{1,i} + A_{2,i} \cdot C_{2,i} + \ldots + A_{n,i} \cdot C_{n,i}
\]
So, $\det(A) \sim N\Paren{0, \sigma_i^2 \Paren{C_{1,i} + \ldots + C_{n,i}}}$.
Thus, for any $i \in [n]$,
\[
B_i
= \Paren{\frac{1}{\det(A)} C^\top E}_i
\sim \frac{N\Paren{0, \sigma_{n+1}^2 \Paren{C_{1,i} + \ldots + C_{n,i}}}}{N\Paren{0, \sigma_{i}^2 \Paren{C_{1,i} + \ldots + C_{n,i}}}}
= \frac{\sigma_{n+1}}{\sigma_i} \cdot \cau(0,1)
\]
\end{proof}

\begin{lemma}
\label{lem:L-Delta-Cauchy}
Consider a batch estimate $\wt{A}$ from \cref{alg:batch}.
Then, $L^\top \Delta$ is entry-wise distributed as $\sigma_y \cdot \cau(0,1)$, where $\Delta = \wt{A} - A$.
Note that the entries of $L^\top \Delta$ may be correlated in general.
\end{lemma}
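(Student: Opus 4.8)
The plan is to reduce the statement to \cref{lem:term-wise-cauchy} by rewriting the batch equation in terms of a standard Gaussian system. First I would observe that, because \cref{alg:batch} solves $X\wt{A} = B$ where $B = \Brac{Y^{(1)}, \ldots, Y^{(p)}}^\top$ and the structural equation gives $B = XA + \eta$ with $\eta = \Brac{\eta_y^{(1)}, \ldots, \eta_y^{(p)}}^\top$, the gap vector satisfies $X\Delta = \eta$, since $X(\wt{A} - A) = (XA + \eta) - XA = \eta$. Here $\eta$ has i.i.d.\ $N(0,\sigma_y^2)$ entries and, crucially, is independent of $X$ because $\eta_y$ is the independent noise term in the structural equation for $Y$.

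Next I would whiten the design matrix. Since the parent vector $(X_1, \ldots, X_p) \sim N(0, M)$ with $M = LL^\top$, \cref{fact:multivariate-transformation} lets me write $X = GL^\top$ where $G \in \R^{p \times p}$ has i.i.d.\ $N(0,1)$ entries. Substituting into $X\Delta = \eta$ yields $G(L^\top \Delta) = \eta$, so setting $B' = L^\top \Delta$ I obtain the square system $GB' = \eta$.

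The final step is to match this to the hypotheses of \cref{lem:term-wise-cauchy}, with $G$ playing the role of the coefficient matrix, $B' = L^\top\Delta$ the role of the unknown vector, and $\eta$ the role of the right-hand side. Every entry of $G$ is $N(0,1)$, so each column shares the common variance $1$; every entry of $\eta$ is $N(0,\sigma_y^2)$, so the right-hand side plays the role of $\sigma_{n+1}^2 = \sigma_y^2$; and joint independence of all $p^2 + p$ Gaussian entries follows from the i.i.d.\ sampling together with the independence of $G$ and $\eta$ noted above. \cref{lem:term-wise-cauchy} then gives, for each $i \in [p]$, that $(L^\top\Delta)_i \sim \frac{\sigma_y}{1}\cdot\cau(0,1) = \sigma_y \cdot \cau(0,1)$, which is the claim. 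Because \cref{lem:term-wise-cauchy} only controls the marginal of each coordinate and the shared denominator $\det(G)$ couples the coordinates, the entries of $L^\top\Delta$ need not be independent, accounting for the caveat in the statement.

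I do not expect a genuine obstacle here, as the result is essentially a repackaging of \cref{lem:term-wise-cauchy}. The only point requiring care is verifying that lemma's independence hypothesis: one must check that the whitening transform produces a $G$ that is independent of $\eta$ (which holds since $\eta_y$ is independent of the parents of $Y$) and that $G$ has i.i.d.\ standard normal entries, so the per-column common-variance condition holds trivially. Once these are in place, the Cramér's-rule computation already carried out in \cref{lem:term-wise-cauchy} delivers the conclusion directly.
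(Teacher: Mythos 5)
Your proposal is correct and follows essentially the same route as the paper's proof: derive $X\Delta = \eta$, whiten via \cref{fact:multivariate-transformation} to get $G(L^\top\Delta) = \eta$, and apply \cref{lem:term-wise-cauchy} with $A = G$, $B = L^\top\Delta$, $E = \eta$. Your added verification that $G$ and $\eta$ are independent (and your explanation of the correlation caveat via the shared $\det(G)$) is a welcome bit of extra care that the paper leaves implicit.
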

\begin{proof}
Observe that each row of $X$ is an independent sample drawn from a multivariate Gaussian $N(0, M)$.
By denoting $\eta = \Brac{\eta_{y}^{(1)}, \ldots, \eta_{y}^{(p)}}^\top$ as the $p$ samples of $\eta_y$, we can write $X \wt{A} = X A + \eta$ and thus $X \Delta = \eta$ by rearranging terms.
By \cref{fact:multivariate-transformation}, we can express $X = GL^\top$ where matrix $G \in \R^{p \times p}$ is a random matrix with i.i.d.\ $N(0,1)$ entries.
By substituting $X = GL^\top$ into $X \Delta = \eta$, we have $L^\top \Delta = G^{-1} \eta$.\footnote{Note that event that $G$ is singular has measure 0.}

By applying \cref{lem:term-wise-cauchy} with the following parameters: $A = G, B = L^\top \Delta, E = \eta$, we conclude that each entry of $L^\top \Delta$ is distributed as $\sigma_y \cdot \cau(0,1)$.
However, note that \emph{these entries are generally correlated}.
\end{proof}

If we have direct access to the matrix $L$, then one can do the following (see \cref{alg:cauchyest}): for each coordinate $i \in [p]$, take \emph{medians}\footnote{The typical strategy of averaging independent estimates does not work here as the variance of a Cauchy variable is unbounded.
} of $\Paren{L^\top \Brac{\wt{a}_{y \leftarrow 1}, \ldots, \wt{a}_{y \leftarrow n}}^\top}_i$ to form $\texttt{MED}_i$ and then estimate $\Brac{\wh{a}_{y \leftarrow 1}, \ldots, \wh{a}_{y \leftarrow 1}} = (L^\top)^{-1} [\texttt{MED}_1, \ldots, \texttt{MED}_n]^\top$.
By the convergence of Cauchy random variables to their median, one can show that each $\wh{a}_{y \leftarrow i}$ converges to the true coefficient $a_{y \leftarrow i}$ as before.
Unfortunately, we do not have $L$ and can only hope to estimate it with some matrix $\wh{L}$ using the \emph{empirical} covariance matrix $\wh{M}$.

\begin{algorithm}[htbp]
\caption{\texttt{CauchyEst}: Coefficient recovery algorithm for general Bayesian networks}
\label{alg:cauchyest}
\begin{algorithmic}[1]
    \State \textbf{Input}: DAG $G$ and $m$ samples
    \For{variable $Y$ with $p \geq 1$ parents} \Comment{By \cref{assumption:d-parents}, $p \leq d$}
        \State Without loss of generality, by renaming variables, let $X_1, \ldots, X_p$ be the parents of $Y$.
        \State Let $\wh{M}$ be the empirical covariance matrix with respect to $X_1, \ldots, X_p$.
        \State Compute the Cholesky decomposition $\wh{M} = \wh{L} \wh{L}^\top$ of $\wh{M}$.
        \For{$s = 1, \ldots, \lfloor m/p \rfloor$}
        	\State Using $p$ samples and \cref{alg:batch}, compute a batch estimate $\wt{A}^{(s)}$.
        \EndFor
        \State For each $i \in [n]$, define $\texttt{MED}_i = \med \{ (\wh{L}^\top \wt{A}^{(1)})_i, \ldots, (\wh{L}^\top \wt{A}^{(\lfloor m/p \rfloor)})_i\}$.
        \State \Return
        $
        \wh{A}
        = \Brac{\wh{a}_{y \leftarrow 1}, \ldots, \wh{a}_{y \leftarrow n}}^\top
        = ( (\wh{L}^\top)^{-1} \Brac{\texttt{MED}_1, \ldots, \texttt{MED}_n}^\top)^\top
        $.
    \EndFor
\end{algorithmic}
\end{algorithm}

\subsection{Special case of polytree Bayesian networks}

If the Bayesian network is a polytree, then $L$ is diagonal.
In this case, we specialize \texttt{CauchyEst} to \texttt{CauchyEstTree} and are able to give theoretical guarantees.
We begin with simple corollary which tells us that the $i^{th}$ entry of $\Delta$ is distributed according to $\sigma_y / \sigma_i \cdot \cau(0,1)$.

\begin{corollary}
\label{cor:L-Delta-Cauchy-tree}
Consider a batch estimate $\wt{A}$ from \cref{alg:batch}.
If the Bayesian network is a polytree, then $\Delta_i = (\wt{A} - A)_i \sim \frac{\sigma_y}{\sigma_i} \cdot \cau(0,1)$.
\end{corollary}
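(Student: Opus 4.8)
The plan is to combine \cref{lem:L-Delta-Cauchy}, which already gives that each coordinate of $L^\top \Delta$ is marginally distributed as $\sigma_y \cdot \cau(0,1)$, with the structural fact that the Cholesky factor $L$ degenerates to a diagonal matrix on a polytree. Granting diagonality, the corollary reduces to a one-line coordinate extraction, so essentially all of the work goes into establishing that $L$ is diagonal.

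First I would show that the parents $X_1, \ldots, X_p$ of $Y$ are mutually independent. Because the skeleton of a polytree is a forest, for any two parents $X_i, X_j$ the two edges into $Y$ give the undirected path $X_i - Y - X_j$, and acyclicity of the skeleton makes this the \emph{unique} undirected path between them. On this path $Y$ is a collider, and since the marginal law of the parents conditions on nothing, the collider blocks the (only) path; d-separation then yields that $X_i$ and $X_j$ are independent. As the parents are jointly Gaussian and mean zero, pairwise independence gives $M_{i,j} = \E[X_i X_j] = 0$ for $i \neq j$, so $M$ is diagonal with $M_{i,i} = \Var(X_i)$. The Cholesky factor of a positive-definite diagonal matrix is itself diagonal, namely $L = L^\top = \operatorname{diag}(\sigma_1, \ldots, \sigma_p)$ where I write $\sigma_i := \sqrt{M_{i,i}}$. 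Hence $(L^\top \Delta)_i = \sigma_i \Delta_i$, and \cref{lem:L-Delta-Cauchy} says this quantity is distributed as $\sigma_y \cdot \cau(0,1)$; dividing by the deterministic constant $\sigma_i$ and using the scaling property $c \cdot \cau(0,1) \sim \cau(0,c)$ gives $\Delta_i \sim \frac{\sigma_y}{\sigma_i} \cdot \cau(0,1)$.

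The only non-routine step is the first: recognizing that the polytree assumption forces the parents of a common node to be marginally independent via the blocked-collider argument; the remainder is elementary linear algebra together with the Cauchy scaling identity. One subtlety worth flagging is that the $\sigma_i$ appearing in the statement is the marginal standard deviation $\sqrt{\Var(X_i)} = \sqrt{M_{i,i}}$ of the parent $X_i$, which in general differs from its noise parameter whenever $X_i$ has ancestors; this matches the role played by the per-column variances in \cref{lem:term-wise-cauchy}. Indeed, one could alternatively derive the corollary by applying \cref{lem:term-wise-cauchy} directly to the batch equation $X \Delta = \eta$, whose columns become independent Gaussians with variances $M_{i,i}$ precisely once the parents are known to be independent.
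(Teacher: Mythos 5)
Your proof is correct, and it takes a mildly different route from the paper's. The paper's own proof does not pass through \cref{lem:L-Delta-Cauchy} or the Cholesky factor at all: it rederives the batch equation $X\Delta = \eta$, asserts that in a polytree the parents of $Y$ are uncorrelated---so that, by joint Gaussianity, the entries of $X$ are independent and the $i$-th column is i.i.d.\ $N(0, \Var(X_i))$---and then applies \cref{lem:term-wise-cauchy} directly with $A = X$, $B = \Delta$, $E = \eta$. That is exactly the alternative you sketch in your closing sentence, so your ``alternative'' is the paper's primary argument, and your primary argument (diagonality of $M$, hence of $L$, followed by coordinate extraction from \cref{lem:L-Delta-Cauchy} and Cauchy rescaling) is a genuine variant. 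Your variant is somewhat more modular: it isolates the polytree-specific content in the single linear-algebraic fact that $L$ is diagonal and reuses the general-DAG lemma, whereas the paper reruns the reduction to \cref{lem:term-wise-cauchy} from scratch. You also supply two things the paper leaves implicit: the d-separation/blocked-collider justification that parents of a common child in a polytree are independent (the paper simply states they ``are not correlated''), and the observation that the $\sigma_i$ in the statement must be read as the marginal standard deviation $\sqrt{M_{i,i}}$ of the parent $X_i$ rather than its noise parameter whenever $X_i$ has ancestors---a reading consistent with how the paper's proof uses $\sigma_i$.
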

\begin{proof}
Observe that each row of $X$ is an independent sample drawn from a multivariate Gaussian $N(0, M)$.
By denoting $\eta = \Brac{\eta_{y}^{(1)}, \ldots, \eta_{y}^{(p)}}^\top$ as the $p$ samples of $\eta_y$, we can write $X \wt{A} = X A + \eta$ and thus $X \Delta = \eta$ by rearranging terms.
Since the parents of any variable in a polytree are not correlated, each element in the $i^{th}$ column of $X$ is a $N(0, \sigma_i^2)$ Gaussian random variable.

By applying \cref{lem:term-wise-cauchy} with the following parameters: $A = X, B = \Delta E = \eta$, we conclude that $\Delta_i = (\wt{A} - A)_i \sim \frac{\sigma_y}{\sigma_i} \cdot \cau(0,1)$.
\end{proof}

For each $i \in \pi(y)$, we combine the $k$ independently copies of $\wt{a}_{y \leftarrow i}^{(1)}, \ldots, \wt{a}_{y \leftarrow i}^{(k)}$ using the \emph{median}.
For arbitrary sample $s \in [k]$ and parent index $i \in \pi(y)$, observe that $\Delta_i^{(s)} = \wt{a}_{y \leftarrow i}^{(s)} - a_{y \leftarrow i}$.
Since $a_{y \leftarrow i}$ is just an unknown \emph{constant},
\[
\wh{a}_{y \leftarrow i}
= \med_{s \in [k]} \Brace{\wt{a}_{y \leftarrow i}^{(s)}}
= \med_{s \in [k]} \Brace{\Delta_i^{(s)}} + a_{y \leftarrow i}
\]
Since each $\Delta_i^{(s)}$ term is i.i.d.\ distributed as $\sigma_y \cdot \cau(0,1)$, the term $\med_{s \in [k]} \Brace{\Delta_i^{(s)}}$ converges to 0 with sufficiently large $k$, and thus $\wh{a}_{y \leftarrow i}$ converges to the true coefficient $a_{y \leftarrow i}$.

The goal of this section is to prove \cref{thm:two-phased-cauchyesttree} given \texttt{CauchyEstTree}, whose pseudocode we provide in \cref{alg:cauchyesttree}.

\begin{algorithm}[htbp]
\caption{\texttt{CauchyEstTree}: Coefficient recovery algorithm for polytree Bayesian networks}
\label{alg:cauchyesttree}
\begin{algorithmic}[1]
    \State \textbf{Input}: A polytree $G$ and $m_1 \in \cO\Paren{\frac{n d_{avg} d}{\eps} \log\Paren{\frac{n}{\delta}}}$ samples
    \For{variable $Y$ with $p \geq 1$ parents} \Comment{By \cref{assumption:d-parents}, $p \leq d$}
        \State Without loss of generality, by renaming variables, let $X_1, \ldots, X_p$ be the parents of $Y$.
        \For{$s = 1, \ldots, \lfloor m_1/p \rfloor$}
            \State Using $p$ samples and \cref{alg:batch}, compute a batch estimate $\wt{A}^{(s)}$.
        \EndFor
        \State For each $i \in \pi(y)$, define estimate $\wh{a}_{y \leftarrow i} = \med \Brace{\wt{a}_{y \leftarrow i}^{(1)}, \ldots, \wt{a}_{y \leftarrow i}^{(\lfloor m_1/p \rfloor)}}$.
        \State \Return $\wh{A} = [\wh{a}_{y \leftarrow 1}, \ldots, \wh{a}_{y \leftarrow p}]^\top$
    \EndFor
\end{algorithmic}
\end{algorithm}

\begin{theorem}[Distribution learning using \texttt{CauchyEstTree}]
\label{thm:two-phased-cauchyesttree}
Let $\eps, \delta \in (0,1)$.
Suppose $G$ is a fixed directed acyclic graph on $n$ variables with degree at most $d$.
Given $\cO\Paren{\frac{nd_{avg}d}{\eps} \log\Paren{\frac{n}{\eps \delta}}}$ samples from an unknown Bayesian network $\cP$ over $G$, if we use \texttt{CauchyEstTree} for coefficient recovery in \cref{alg:recovery-two-phased}, then with probability at least $1 - \delta$, we recover a Bayesian network $\cQ$ over $G$ such that $\tv(\cP, \cQ) \leq \eps$ in $\cO\Paren{\frac{n^2 d_{avg}^2 d^{\omega-1}}{\eps} \log\Paren{\frac{n}{\delta}}}$ time.
\end{theorem}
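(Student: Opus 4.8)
The plan is to reuse the two-phase template of \cref{thm:two-phased-leastsquares,thm:two-phased-batchleastsquares}: first certify \ref{eq:condition1} at every node for the coefficients \texttt{CauchyEstTree} returns, then run \texttt{VarianceRecovery} to obtain \ref{eq:condition2} via \cref{thm:recovery-variance}, and finally invoke \cref{prop:decomposition-implication} to get $\kl(\cP,\cQ)\le 3\eps$, whence Pinsker gives $\tv(\cP,\cQ)\le\sqrt{3\eps/2}$ and a rescaling of $\eps$ finishes. Because the recovery routine is \texttt{CauchyEstTree}, the graph $G$ is the polytree it takes as input, and the whole argument leans positively on one structural consequence of this: for any node $Y$ with parents $X_1,\dots,X_p$, any connection between two parents other than the collider $X_i\to Y\leftarrow X_j$ would close an undirected cycle, so the parents are $d$-separated by the empty set and hence marginally independent. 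Consequently the parent covariance $M$ is diagonal, $M_{ii}=\ell_i^2=\E[X_i^2]$, with Cholesky factor $L=\mathrm{diag}(\ell_1,\dots,\ell_p)$.

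First I would fix an arbitrary node $Y$ with $p\le d$ parents and set $k=\lfloor m_1/p\rfloor$. By \cref{cor:L-Delta-Cauchy-tree}, each batch $s\in[k]$ yields $\Delta_i^{(s)}=\wt a_{y\leftarrow i}^{(s)}-a_{y\leftarrow i}\sim\tfrac{\sigma_y}{\ell_i}\cau(0,1)$ independently across $s$, the parent standard deviation $\ell_i=\sqrt{M_{ii}}$ being the quantity denoted $\sigma_i$ there; writing $\Delta_i^{(s)}=\tfrac{\sigma_y}{\ell_i}C_i^{(s)}$ with $C_i^{(s)}\sim\cau(0,1)$, scale-equivariance of the median gives the returned error $\Delta_i=\med_s\{\Delta_i^{(s)}\}=\tfrac{\sigma_y}{\ell_i}\,\mu_i$, where $\mu_i=\med_s\{C_i^{(s)}\}$. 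The diagonal form of $M$ now produces the decisive cancellation of the unknown scales $\ell_i$:
\[
\Delta^\top M\Delta=\sum_{i=1}^p \ell_i^2\,\Delta_i^2=\sigma_y^2\sum_{i=1}^p \mu_i^2 .
\]
Thus it suffices to force each $\Abs{\mu_i}\le\tau$ with $\tau=\sqrt{\eps/(n d_{avg})}<1$, since then $\sum_i\mu_i^2\le p\tau^2=\eps p/(n d_{avg})$, which is exactly \ref{eq:condition1}. Applying \cref{lem:cauchy-median-convergence} coordinate-wise gives $\Pr(\Abs{\mu_i}>\tau)\le 2\exp(-k\tau^2/8)$, and a union bound over the $p\le d$ coordinates bounds the per-node failure probability by $2p\exp\Paren{-k\eps/(8n d_{avg})}$. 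Choosing $k=\Theta\Paren{\tfrac{n d_{avg}}{\eps}\log\tfrac{np}{\delta}}$, i.e.\ $m_1=pk\in\cO\Paren{\tfrac{n d_{avg}d}{\eps}\log\tfrac{n}{\eps\delta}}$ using $p\le d\le n$, makes this at most $\delta/n$; a union bound over the $n$ nodes then establishes \ref{eq:condition1} everywhere with probability $\ge 1-\delta$ -- the \texttt{CauchyEstTree} analogue of \cref{lem:least-squares}.

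The remainder is bookkeeping parallel to the earlier theorems. Feeding these coefficients into \cref{thm:recovery-variance} with $m_2\in\cO\Paren{\tfrac{n d_{avg}}{\eps}\log\tfrac{n}{\delta}}$ samples certifies \ref{eq:condition2} for all nodes, the total $m_1+m_2$ is dominated by $m_1$ and matches the stated sample bound, and \cref{prop:decomposition-implication} closes the $\tv$ guarantee as above. For the running time, at a node of in-degree $p$ each of the $\lfloor m_1/p\rfloor$ batches solves one $p\times p$ linear system in $\cO(p^\omega)$ time, costing $\cO(m_1 p^{\omega-1})$ per node; summing with $p_i\le d$ and $\sum_i p_i=n d_{avg}$ (so $\sum_i p_i^{\omega-1}\le d^{\omega-2}\sum_i p_i=d^{\omega-2}n d_{avg}$) gives $\cO(m_1\, n d_{avg}\, d^{\omega-2})=\cO\Paren{\tfrac{n^2 d_{avg}^2 d^{\omega-1}}{\eps}\log\tfrac{n}{\delta}}$, as claimed.

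The crux -- and the step I would scrutinize most -- is the scale cancellation $\Delta^\top M\Delta=\sigma_y^2\sum_i\mu_i^2$, since it is precisely what lets \texttt{CauchyEstTree} control $\Delta^\top M\Delta=\Norm{L^\top\Delta}^2$ while only ever taking medians in the standard coordinate basis. The polytree hypothesis is exactly what makes this possible: diagonality of $M$ turns \cref{cor:L-Delta-Cauchy-tree} into a clean per-coordinate $\sigma_y/\ell_i$ Cauchy scaling, so that bounding each $\mu_i$ directly bounds the quadratic form -- rather than the merely entry-wise-Cauchy-but-correlated vector $L^\top\Delta$ of \cref{lem:L-Delta-Cauchy}. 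A secondary point to verify is that \cref{lem:cauchy-median-convergence} is invoked with $\tau<1$, which holds since $\eps<1\le n d_{avg}$.
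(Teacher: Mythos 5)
Your proposal is correct and follows essentially the same route as the paper: a per-node Cauchy-median bound (the paper's \cref{lem:cauchyesttree-single}, built on \cref{cor:L-Delta-Cauchy-tree}/\cref{lem:L-Delta-Cauchy} and \cref{lem:cauchy-median-convergence}) establishing \ref{eq:condition1}, a union bound over the $n$ nodes, \texttt{VarianceRecovery} via \cref{thm:recovery-variance} for \ref{eq:condition2}, and \cref{prop:decomposition-implication} plus Pinsker with the same $\eps' = \sqrt{3\eps/2}$ rescaling and the same $\cO(m_1\, n\, d_{avg}\, d^{\omega-2})$ runtime accounting. If anything, you are slightly more explicit than the paper in two spots it glosses: you spell out why the polytree hypothesis makes $M$ diagonal so that the coordinate-wise median commutes with $L^\top$ (turning $\Delta^\top M \Delta$ into $\sigma_y^2\sum_i \mu_i^2$), and you include the union bound over the $p \le d$ coordinates within a node, which the paper absorbs into its choice of constants.
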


Note that for polytrees, $d_{avg}$ is just a constant.
As before, we will first prove guarantees for an arbitrary variable and then take union bound over $n$ variables.

\begin{lemma}
\label{lem:cauchyesttree-single}
Consider \cref{alg:cauchyesttree}.
Fix an arbitrary variable of interest $Y$ with $p$ parents, parameters $(A, \sigma_y)$, and associated covariance matrix $M$.
With $k = \frac{8 n d_{avg}}{\eps} \log\Paren{\frac{2}{\delta}}$ samples, we recover coefficient estimates $\wh{A}$ such that
\[
\Pr\Paren{ \Abs{\Delta^\top M \Delta} \leq \sigma_y^2 \cdot \frac{\eps p}{n d_{avg}} } \geq 1 - \delta
\]
\end{lemma}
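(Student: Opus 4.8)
The plan is to exploit the special structure of polytrees: the covariance matrix $M$ of the parents of $Y$ is \emph{diagonal}. Indeed, in a polytree two parents of a common child $Y$ are uncorrelated (in the skeleton they are joined only through $Y$), so $X_1,\dots,X_p$ are mutually independent, giving $M=\mathrm{diag}(M_{1,1},\dots,M_{p,p})$ with $M_{i,i}=\Var(X_i)$ and a diagonal Cholesky factor $L$ with $L_{i,i}=\sqrt{M_{i,i}}$. The quadratic form then decouples completely across coordinates, $\Delta^\top M\Delta=\sum_{i=1}^p M_{i,i}\,\Delta_i^2$, so it suffices to control each $\Delta_i$ in isolation rather than reasoning about the (possibly correlated) joint vector.

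Next I would combine \cref{cor:L-Delta-Cauchy-tree} with the translation- and positive-scale-equivariance of the median. For each batch $s$ the corollary gives $\Delta_i^{(s)}=\wt{a}_{y\leftarrow i}^{(s)}-a_{y\leftarrow i}\sim \frac{\sigma_y}{\sqrt{M_{i,i}}}\cdot\cau(0,1)$, and across the $k=\lfloor m_1/p\rfloor$ batches these are i.i.d.\ because the batches use disjoint samples. Writing $\Delta_i^{(s)}=\frac{\sigma_y}{\sqrt{M_{i,i}}}\,C_i^{(s)}$ with $C_i^{(s)}\sim\cau(0,1)$, the estimate satisfies $\Delta_i=\med_{s}\Brace{\wt{a}_{y\leftarrow i}^{(s)}}-a_{y\leftarrow i}=\med_{s}\Brace{\Delta_i^{(s)}}=\frac{\sigma_y}{\sqrt{M_{i,i}}}\,\med_{s}\Brace{C_i^{(s)}}$. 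Hence $M_{i,i}\Delta_i^2=\sigma_y^2\,\Paren{\med_{s}\Brace{C_i^{(s)}}}^2$, the unknown variances $M_{i,i}$ cancel, and I obtain the clean identity
\[
\Delta^\top M\Delta=\sigma_y^2\sum_{i=1}^p \Paren{\med_{s}\Brace{C_i^{(s)}}}^2 .
\]

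It then suffices to make each coordinate's Cauchy-median small: if $\Abs{\med_{s}\Brace{C_i^{(s)}}}\le\sqrt{\eps/(n d_{avg})}$ for every $i\in[p]$, the sum is at most $\eps p/(n d_{avg})$, yielding exactly the target bound $\Abs{\Delta^\top M\Delta}\le\sigma_y^2\cdot\eps p/(n d_{avg})$. I would apply \cref{lem:cauchy-median-convergence} with $\tau=\sqrt{\eps/(n d_{avg})}\in(0,1)$ to the $k$ i.i.d.\ standard Cauchy draws, getting per-coordinate failure $2\exp\Paren{-k\tau^2/8}=2\exp\Paren{-k\eps/(8 n d_{avg})}$, and then union bound over the $p\le d$ coordinates. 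Setting $k=\frac{8 n d_{avg}}{\eps}\log\Paren{2/\delta}$ drives the per-coordinate failure to $\delta$; the union bound contributes an extra factor of $p$, which is absorbed by using $\log(2p/\delta)$ in place of $\log(2/\delta)$ (harmless, since $p\le d$ only inflates the logarithmic factor in the final $\wt{\cO}$ sample complexity).

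The step that carries the argument is the exact cancellation enabled by the diagonality of $M$: it is precisely what makes the unknown parent variances disappear and reduces the whole lemma to the parameter-free concentration of a median of standard Cauchys. The only genuinely delicate point to articulate is that within a single batch the coordinates $C_i^{(s)}$ need not be independent — \cref{cor:L-Delta-Cauchy-tree} controls only the marginal law of each entry — but this is immaterial here: the median is taken \emph{across} batches, where independence does hold, and the union bound over coordinates requires only the marginal distribution of each $\med_{s}\Brace{C_i^{(s)}}$.
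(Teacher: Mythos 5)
Your proposal is correct and takes essentially the same route as the paper's own proof: diagonality of $M$ (equivalently of its Cholesky factor $L$) for polytrees, \cref{cor:L-Delta-Cauchy-tree} to identify each batch error $\Delta_i^{(s)}$ as a scaled standard Cauchy, median scale-equivariance to cancel the unknown parent variances, and \cref{lem:cauchy-median-convergence} with $\tau=\sqrt{\eps/(n d_{avg})}$ applied coordinate-wise. If anything you are slightly more careful than the paper, whose proof bounds each of the $p$ entries with failure probability $\delta$ and sums them without accounting for the union bound over coordinates; your observation that this costs only a factor $p$ inside the logarithm (using $\log(2p/\delta)$, harmless since $p \leq d$) fixes that small gap.
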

\begin{proof}
Since $M = LL^\top$, it suffices to bound $\| L^\top \Delta \|$.
\cref{lem:L-Delta-Cauchy} tells us that each entry of the vector $L^\top \Delta$ is the median of $k$ copies of $\cau(0,1)$ random variables multiplied by $\sigma_y$.
Setting $k = \frac{8 n d_{avg}}{\eps} \log\Paren{\frac{2}{\delta}}$ and $0 < \tau = \sqrt{\eps / (n d_{avg})} < 1$ in \cref{lem:cauchy-median-convergence}, we see that
\[
\Pr\Paren{\text{median of $k$ i.i.d.\ $\cau(0,1)$ random variables} \not\in \Brac{-\sqrt{\frac{\eps}{n d_{avg}}}, \sqrt{\frac{\eps}{n d_{avg}}}}}
\leq \delta
\]
That is, each entry of $L^\top \Delta$ has absolute value at most $\sigma_y \cdot \sqrt{\frac{\eps}{n d_{avg}}}$.
By summing across all $p$ entries of $L^\top \Delta$, we see that
\[
| \Delta^\top M \Delta |
= | \Delta^\top LL^\top \Delta |
= \| L^\top \Delta \|^2
\leq p \cdot \sigma_y^2 \cdot \frac{\eps}{n d_{avg}}
= \sigma_y^2 \cdot \frac{\eps p}{n d_{avg}}
\]
\end{proof}

We can now establish \ref{eq:condition1} of \cref{prop:decomposition-implication} for \texttt{CauchyEstTree}.

\begin{lemma}
\label{lem:cauchyesttree}
Consider \cref{alg:cauchyesttree}.
Suppose the Bayesian network is a polytree.
With $m_1 \in \cO\Paren{\frac{n d_{avg} d}{\eps} \log\Paren{\frac{n}{\delta}}}$ samples, we recover coefficient estimates $\wh{A}_1, \ldots, \wh{A}_n$ such that
\[
\Pr\Paren{ \forall i \in [n], \Abs{\Delta_i^\top M_i \Delta_i} \geq \sigma_i^2 \cdot \frac{\eps p_i}{n d_{avg}} } \leq \delta
\]
The total running time is $\cO\Paren{\frac{n^2 d_{avg}^2 d^{\omega-1}}{\eps} \log\Paren{\frac{n}{\delta}}}$ where $\omega$ is the matrix multiplication exponent.
\end{lemma}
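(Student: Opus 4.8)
The plan is to lift the single-variable guarantee of \cref{lem:cauchyesttree-single} to all $n$ variables simultaneously by a union bound, and then account for the total running time. This follows the same template used for \texttt{LeastSquares} in \cref{lem:least-squares}.

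First I would invoke \cref{lem:cauchyesttree-single} separately for each variable $X_i$ with failure parameter $\delta' = \delta/n$. That lemma guarantees that a variable with $p_i$ parents satisfies \ref{eq:condition1} once the number of \emph{batch estimates} combined by the median is at least $k = \frac{8 n d_{avg}}{\eps} \log\Paren{\frac{2n}{\delta}}$. The key bookkeeping step is that each batch in \cref{alg:batch} consumes $p_i$ raw samples, so a budget of $m_1$ samples yields $\lfloor m_1 / p_i \rfloor$ batch estimates for $X_i$. To guarantee $\lfloor m_1 / p_i \rfloor \geq k$ for \emph{every} variable it suffices to take $m_1 \geq d \cdot k$, since $p_i \leq d$ by \cref{assumption:d-parents}; this produces the claimed $m_1 \in \cO\Paren{\frac{n d_{avg} d}{\eps} \log\Paren{\frac{n}{\delta}}}$. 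The extra factor of $d$ relative to \texttt{LeastSquares} arises precisely because each Cauchy batch uses $p_i$ samples rather than pooling all samples into a single regression. A union bound over the $n$ per-variable failure events, each of probability at most $\delta/n$, then yields total failure probability at most $\delta$, establishing \ref{eq:condition1} for all variables at once.

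For the running time I would account per variable and then sum. For a variable with $p_i$ parents, each batch solves a $p_i \times p_i$ linear system (\cref{alg:batch}) at cost $\cO(p_i^{\omega})$, and there are $\lfloor m_1/p_i \rfloor$ such batches, contributing $\cO(m_1 \cdot p_i^{\omega - 1})$ in total; the coordinate-wise medians over the $\lfloor m_1/p_i \rfloor$ batch estimates cost only $\cO(m_1)$ and are dominated. Summing over all variables gives total time $\cO\Paren{m_1 \sum_{i=1}^n p_i^{\omega-1}}$. Using $p_i \leq d$ together with $\sum_{i=1}^n p_i = n d_{avg}$, I would bound $\sum_i p_i^{\omega-1} \leq d^{\omega-2} \sum_i p_i = d^{\omega-2} \cdot n d_{avg}$ (valid for $\omega \geq 2$). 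Substituting $m_1 = \cO\Paren{\frac{n d_{avg} d}{\eps} \log\Paren{\frac{n}{\delta}}}$ then yields the stated runtime $\cO\Paren{\frac{n^2 d_{avg}^2 d^{\omega-1}}{\eps} \log\Paren{\frac{n}{\delta}}}$.

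The argument is essentially routine once \cref{lem:cauchyesttree-single} is in hand, and I do not anticipate any deep obstacle. The one point that requires care is the per-batch sample accounting: unlike the least-squares estimators, \texttt{CauchyEstTree} partitions its samples into batches of size $p$ rather than reusing all of them in a single solve, so the number of median samples is $\lfloor m_1/p \rfloor$ rather than $m_1$. Tracking this correctly is exactly what produces both the additional factor of $d$ in the sample complexity and the $d^{\omega-1}$ dependence in the runtime through the $\sum_i p_i^{\omega-1}$ estimate.
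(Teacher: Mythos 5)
Your proof is correct and follows essentially the same route as the paper's: invoke \cref{lem:cauchyesttree-single} per variable with failure parameter $\delta/n$, take a union bound, and account for the runtime via $\cO(p_i^{\omega-1} \cdot m_1)$ per variable together with $p_i \leq d$ and $\sum_i p_i = n d_{avg}$. If anything, your bookkeeping is slightly more careful than the paper's, which implicitly conflates the number of batch estimates with the number of raw samples, whereas you make explicit that each batch consumes $p_i$ samples and that this conversion is exactly what introduces the extra factor of $d$ in the sample complexity.
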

\begin{proof}
For each $i \in [n]$, apply \cref{lem:cauchyesttree-single} with $\delta' = \delta/n$ and $m_1 = \frac{8 n d_{avg}}{\eps} \log\Paren{\frac{2n}{\delta}}$, then take the union bound over all $n$ variables.

The runtime of \cref{alg:batch} is the time to find the inverse of a $p \times p$ matrix, which is $\cO(p^{\omega})$ for some $2< \omega <3$.
Therefore, the computational complexity for a variable with $p$ parents is $\cO(p^{\omega-1} \cdot m_1)$.
Since $\max_{i \in [n]} p_i \leq d$ and $\sum_{i=1}^n p_i = n d_{avg}$, the total runtime is $\cO(m_1 \cdot n \cdot d_{avg} \cdot d^{\omega-2})$.
\end{proof}

We are now ready to prove \cref{thm:two-phased-cauchyesttree}.

\cref{thm:two-phased-cauchyesttree} follows from combining the guarantees of \texttt{CauchyEstTree} and \texttt{VarianceRecovery} (given in \cref{lem:cauchyesttree} and \cref{thm:recovery-variance} respectively) via \cref{prop:decomposition-implication}.

\begin{proof}[Proof of \cref{thm:two-phased-cauchyesttree}]
We will show sample and time complexities before giving the proof for the $\tv$ distance.

Let $m_1 \in \cO\Paren{\frac{n d_{avg} d}{\eps} \log\Paren{\frac{n}{\delta}}}$ and $m_2 \in \cO\Paren{\frac{n d_{avg}}{\eps} \log\Paren{\frac{n}{\delta}}}$.
Then, the total number of samples needed is $m = m_1 + m_2 \in \cO\Paren{\frac{nd_{avg}d}{\eps} \log\Paren{\frac{n}{\eps \delta}}}$.
\texttt{CauchyEstTree} runs in $\cO\Paren{\frac{n^2 d_{avg}^2 d^{\omega-1}}{\eps} \log\Paren{\frac{n}{\delta}}}$ time while \texttt{VarianceRecovery} runs in $\cO\Paren{\frac{n^2 d_{avg}^2}{\eps} \log \Paren{\frac{1}{\delta}}}$ time, where $\omega$ is the matrix multiplication exponent.
Therefore, the overall running time is $\cO\Paren{\frac{n^2 d_{avg}^2 d^{\omega-1}}{\eps} \log\Paren{\frac{n}{\delta}}}$.

By \cref{lem:cauchyesttree}, \texttt{CauchyEstTree} recovers coefficients $\wh{A}_1, \ldots, \wh{A}_n$ such that
\[
\Pr\Paren{ \forall i \in [n], \Abs{\Delta_i^\top M_i \Delta_i} \geq \sigma_i^2 \cdot \frac{\eps p_i}{n d_{avg}} }
\leq \delta
\]

By \cref{thm:recovery-variance} and using the recovered coefficients from \texttt{CauchyEstTree}, \texttt{VarianceRecovery} recovers variance estimates $\wh{\sigma}_i^2$ such that
\[
\Pr\Paren{\forall i \in [n], \Paren{1 - \sqrt{\frac{\eps p_i}{n d_{avg}}}} \cdot \sigma_i^2 \leq \wh{\sigma}_i^2 \leq \Paren{1 + \sqrt{\frac{\eps p_i}{n d_{avg}}}} \cdot \sigma_i^2} \geq 1 - \delta
\]

As our estimated parameters satisfy \ref{eq:condition1} and \ref{eq:condition2}, \cref{prop:decomposition-implication} tells us that $\kl(\cP,\cQ) \leq 3 \eps$.
Thus, $\tv(\cP,\cQ) \leq \sqrt{\kl(\cP,\cQ) / 2} \leq \sqrt{3 \eps/2}$.
The claim follows by setting $\eps' = \sqrt{3 \eps/2}$ throughout.
\end{proof}

\section{Hardness for learning Gaussian Bayesian networks}
\label{sec:hardness}

In this section, we present our hardness results.
We first show a tight lower-bound for the simpler case of learning Gaussian product distributions in total variational distance (\cref{thm:tv-lowerbound}).
Next, we show a tight lower-bound for learning Gaussian Bayes nets with respect to total variation distance. (\cref{thm:kl-lowerbound}).
In both cases, our hardness applies to the problems of learning the covariance matrix of a centered multivariate Gaussian, which is equivalent to recovering the coefficients and noises of the underlying linear structural equation.

We will need the following fact about the variation distance between multivariate Gaussians and a Frobenius norm $\Norm{\cdot}_F$ between the covariance matrices.

\begin{fact}[\cite{devroye2018total}]
\label{fact:tvgauss}
There exists two universal constants $\frac{1}{100}\le c_1\le c_2\le \frac{3}{2}$, such that for any two covariance matrices $\Sigma_1$ and $\Sigma_2$, 
\[
c_1\le \frac{\tv(\gauss(0,\Sigma_1),\gauss(0,\Sigma_2))}{\Norm{\Sigma^{-1}_1\Sigma_2-I}_F} \le c_2.
\]
\end{fact}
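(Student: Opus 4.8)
The plan is to prove the two-sided comparison by exploiting the affine invariance of total variation distance to reduce to a product of one-dimensional Gaussians, and then to bound the resulting $\tv$ from the two sides by genuinely different arguments. First I would whiten: since $\tv$ is invariant under the invertible linear change of coordinates $x \mapsto \Sigma_1^{-1/2} x$, we have $\tv(\gauss(0,\Sigma_1), \gauss(0,\Sigma_2)) = \tv(\gauss(0, I), \gauss(0, \wt\Sigma))$ with $\wt\Sigma = \Sigma_1^{-1/2}\Sigma_2\Sigma_1^{-1/2}$. Diagonalizing the symmetric matrix $\wt\Sigma$ by a further orthogonal rotation (which fixes $\gauss(0,I)$) reduces the problem to comparing the product distributions $\prod_{i} \gauss(0,1)$ and $\prod_{i} \gauss(0,\lambda_i)$, where $\lambda_1, \ldots, \lambda_n > 0$ are the eigenvalues of $\Sigma_1^{-1}\Sigma_2$. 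Writing $\nu_i = \lambda_i - 1$, the governing scale becomes $F = \sqrt{\sum_i \nu_i^2}$, the Frobenius norm of the symmetric whitened difference $\wt\Sigma - I$ (which agrees with $\Norm{\Sigma_1^{-1}\Sigma_2 - I}_F$ when $\Sigma_1$ and $\Sigma_2$ commute); since $\tv$ is a function of the $\lambda_i$ alone, this is the intrinsic quantity the bound must track.

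For the upper bound (the constant $c_2$), I would pass through the squared Hellinger distance, which tensorizes cleanly over the independent coordinates: $H^2 = 1 - \prod_i (1 - H_i^2) \le \sum_i H_i^2$, where $H_i^2 = H^2(\gauss(0,1), \gauss(0,\lambda_i))$. A one-dimensional computation gives the closed form $H_i^2 = 1 - \sqrt{2\sqrt{\lambda_i}/(1 + \lambda_i)}$ together with the estimate $H_i^2 = \cO(\nu_i^2)$. Combining this with the standard inequality $\tv \le \sqrt{2}\, H$ yields $\tv \le c \cdot F$; obtaining the sharp constant $3/2$ requires tracking the one-dimensional constants and treating the small-$\nu_i$ regime separately from extreme eigenvalues.

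The lower bound (the constant $c_1$) is the crux, and it cannot be obtained from Hellinger: near the diagonal one only gets $\tv \ge H^2 \gtrsim F^2$, which is far too weak when $F$ is small, since we need $\tv \gtrsim F$. Instead I would use a quadratic test statistic together with the data-processing inequality. Take $S(x) = \sum_i \nu_i\, x_i^2$; under $\gauss(0,I)$ it has mean $\sum_i \nu_i$ and variance $2\sum_i \nu_i^2 = 2F^2$, while under the alternative its mean shifts by $\sum_i \nu_i^2 = F^2$. Thus the two laws of $S$ are separated by $\Theta(F)$ standard deviations; since $S$ is a weighted sum of independent $\chi^2_1$ variables, a central-limit/anti-concentration argument shows that a mean shift of $\Theta(F)$ standard deviations forces the one-dimensional distributions of $S$ to differ in $\tv$ by $\gtrsim \min\{1, F\}$, and data processing transfers this lower bound back to $\tv(\gauss(0,\Sigma_1),\gauss(0,\Sigma_2))$.

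The main obstacle is precisely this lower bound: making the anti-concentration step quantitative with a dimension-free constant (in particular controlling that the variance of $S$ under the alternative, $2\sum_i \nu_i^2 \lambda_i^2$, stays comparable to $2F^2$ in the relevant regime), and then stitching together the small-$F$ regime, where $\tv \asymp F$ grows linearly, with the large-$F$ regime, where $\tv$ saturates near $1$ and the stated bound is understood with the capping $\min\{1, F\}$ implicit. Pinning down the exact universal constants $1/100$ and $3/2$ thereafter is careful but routine bookkeeping of the one-dimensional estimates in both directions.
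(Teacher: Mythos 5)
First, a point of comparison: the paper itself gives no proof of this statement --- it is imported verbatim as a Fact from the cited reference \cite{devroye2018total}, so the only meaningful benchmark is the proof in that reference. Your reduction and your upper bound are sound and consistent with it: $\tv$ is invariant under $x \mapsto \Sigma_1^{-1/2}x$ and under rotations, so the problem becomes $\prod_i \gauss(0,1)$ versus $\prod_i \gauss(0,\lambda_i)$, and Hellinger tensorization then gives $\tv \leq \cO(F)$ with $F = (\sum_i \nu_i^2)^{1/2}$, $\nu_i = \lambda_i - 1$. You are also right that the statement must be read with the cap $\min\{1,F\}$ and with the symmetrized norm $\Norm{\Sigma_1^{-1/2}\Sigma_2\Sigma_1^{-1/2} - I}_F$, as in the original theorem: as literally written, with $\Norm{\Sigma_1^{-1}\Sigma_2 - I}_F$ and no cap, the lower bound is false --- the ratio can tend to $0$ both for distant covariances (since $\tv \leq 1$) and for ill-conditioned $\Sigma_1$, because the Frobenius norm of the non-symmetric matrix can exceed $(\sum_i \nu_i^2)^{1/2}$ by an unbounded factor. (This imprecision is harmless where the paper applies the Fact, since there the covariances are near-identity and essentially commuting.)

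The genuine gap is in your lower bound, and it is not ``routine bookkeeping.'' Your plan --- quadratic statistic $S = \sum_i \nu_i x_i^2$, data processing, then a Berry--Esseen/CLT step converting a mean shift of $\Theta(F)$ standard deviations into $\tv \gtrsim \min\{1,F\}$ --- breaks down exactly when a few coordinates carry most of the Frobenius mass. The Berry--Esseen error for $S$ is of order $\sum_i |\nu_i|^3/F^3$, which is $\Theta(1)$ in that regime (and the law of $S$ genuinely is far from Gaussian there, so this is not an artifact of the bound), while the fallback of projecting onto the largest coordinate only yields $\tv \gtrsim \min\{1,\max_i |\nu_i|\}$. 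In the intermediate regime $F^2 \ll \max_i|\nu_i| \ll F$ (e.g.\ roughly $1/\nu$ coordinates of common size $\nu$) both arguments lose a polynomial factor, and bridging it is the real content of the cited proof. A clean, dimension-free fix avoids the CLT entirely: take $Z = \log(d\cQ/d\cP)$, again a Gaussian quadratic form; a direct computation shows its fourth central moment is at most a constant times the square of its second central moment, uniformly over weight profiles, so H\"older gives $\E\Abs{Z - \E Z} \geq (\E(Z-\E Z)^2)^{3/2}/(\E(Z-\E Z)^4)^{1/2} \geq cF$; then $\tv = \frac{1}{2}\E_{\cP}\Abs{e^{Z}-1} \geq \frac{1}{2}\E\Brac{1 - e^{-\Abs{Z}}} \geq \frac{1}{2}\E\Abs{Z} - \frac{1}{4}\E Z^2 \geq cF/2 - \cO(F^2)$, which is $\Omega(F)$ once $F$ is below a small constant; the remaining cases (large $F$, or some $\lambda_i$ far from $1$) reduce to a single coordinate or a sub-product. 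With that substitution your outline becomes a complete proof; as proposed, the central step would fail.
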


\begin{theorem}
\label{thm:tv-lowerbound}
Given samples from a $n$-fold Gaussian product distribution $P$, learning a $\widehat{P}$ such that in $\tv(P,\widehat{P})=O(\epsilon)$ with success probability 2/3 needs $\Omega(n\epsilon^{-2})$ samples in general.
\end{theorem}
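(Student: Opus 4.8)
The plan is to prove this via a standard packing argument combined with Fano's inequality, exploiting the characterization of $\tv$ in \cref{fact:tvgauss}. Since a Gaussian product distribution is exactly a centered multivariate Gaussian $\gauss(0,\Sigma)$ with diagonal $\Sigma$, learning $P$ reduces to learning its $n$ coordinate variances. First I would fix a baseline $\Sigma_0 = I$ and build a family of hypotheses indexed by a binary code. Concretely, using a Gilbert--Varshamov code $\mathcal{C} \subseteq \{-1,+1\}^n$ with $\Abs{\mathcal{C}} = 2^{\Omega(n)}$ codewords and pairwise Hamming distance at least $n/4$, I would associate to each $u \in \mathcal{C}$ the diagonal covariance $\Sigma_u = \mathrm{diag}(1 + \delta u_1, \ldots, 1 + \delta u_n)$ and the product Gaussian $P_u = \gauss(0,\Sigma_u)$, where $\delta = \Theta(\eps/\sqrt{n})$ is a perturbation scale to be tuned.

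Next I would carry out the two opposing estimates that make the argument work. For separation, note that $\Sigma_u^{-1}\Sigma_v - I$ is diagonal with $i$-th entry $\delta(v_i - u_i)/(1 + \delta u_i)$, which vanishes when $u_i = v_i$ and has magnitude $\Theta(\delta)$ otherwise; since $u,v$ differ in at least $n/4$ coordinates, $\Norm{\Sigma_u^{-1}\Sigma_v - I}_F = \Theta(\delta\sqrt{n}) = \Theta(\eps)$, so \cref{fact:tvgauss} gives $\tv(P_u,P_v) = \Theta(\eps)$ for all $u \neq v$. Choosing the hidden constant in $\delta$ large enough, this makes the hypotheses pairwise at least $3K\eps$ apart, where $K$ is the constant hidden in the ``$O(\eps)$'' error target; a learner achieving $\tv(P,\wh P) \leq K\eps$ with probability $2/3$ could then identify the true index by rounding $\wh P$ to the nearest $P_u$. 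For the information bound, I would instead use the closed-form univariate Gaussian KL, $\kl(\gauss(0,1+\delta u_i), \gauss(0,1+\delta v_i)) = O(\delta^2)$, which holds termwise by a Taylor expansion; summing over the $n$ coordinates gives $\kl(P_u, P_v) = O(n\delta^2) = O(\eps^2)$ for every pair.

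Finally I would invoke Fano's inequality. Drawing $m$ i.i.d.\ samples, the mutual information between a uniformly random index $U \in \mathcal{C}$ and the sample is at most $m \cdot \max_{u,v} \kl(P_u,P_v) = O(m\eps^2)$, while identifying $U$ with probability $2/3$ forces this mutual information to be $\Omega(\log\Abs{\mathcal{C}}) = \Omega(n)$. Combining the two yields $m = \Omega(n\eps^{-2})$, as claimed. I expect the main obstacle to be reconciling the two scalings simultaneously: the separation grows like $\sqrt{(\text{number of differing coordinates})}\cdot\delta$ whereas the KL grows like $(\text{number of differing coordinates})\cdot\delta^2$, so the square-root gap is precisely what lets a single choice $\delta = \Theta(\eps/\sqrt{n})$ deliver $\Theta(\eps)$ TV separation and only $O(\eps^2)$ KL at the same time. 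Care is also needed to ensure the code has both exponentially many codewords and constant relative distance (both are used), and that the separation constant dominates the adversary's error constant $K$ so that rounding correctly recovers the index.
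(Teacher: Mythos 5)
Your proposal is correct and follows essentially the same route as the paper's proof: a packing of product Gaussians indexed by a binary code with variance perturbations of scale $\Theta(\eps/\sqrt{n})$, pairwise KL bounded by $O(\eps^2)$, Fano's inequality for the identification lower bound, and \cref{fact:tvgauss} plus the triangle inequality to convert a TV-accurate learner into an identifier. The only differences are cosmetic (a $\{-1,+1\}$ code with symmetric perturbations instead of $\{0,1\}$, and rounding $\wh{P}$ to the nearest hypothesis in TV rather than via the Frobenius-norm criterion), and your write-up actually spells out the KL and separation computations more explicitly than the paper does.
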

\begin{proof}
Let $\mathcal{C}\subseteq \{0,1\}^n$ be a set with the following properties:
1) $|\mathcal{C}|=2^{\Theta(n)}$ and
2) every $i\neq j\in \mathcal{C}$ have a Hamming distance $\Theta(n)$.
Existence of such a set is well-known.
We create a class of Gaussian product distributions $\mathcal{P}_{\mathcal{C}}$ based on $\mathcal{C}$ as follows.
For each $s\in \mathcal{C}$, we define a distribution $P_{s}\in \mathcal{P}_{\mathcal{C}}$ such that if the $i$-th bit of $s$ is 0, we use the distribution $\gauss(0,1)$ for the $i$-th component of $P_s$; else if the $i$-th bit is 1, we use the distribution $\gauss(0,1+\frac{\epsilon}{\sqrt{n}})$.
Then for any $P_s\neq P_t$, $\kl(P_s,P_t)=O(\epsilon^2)$.
Fano's inequality tells us that guessing a random distribution from $\mathcal{P}_{\mathcal{C}}$ correctly with 2/3 probability needs $\Omega(n\epsilon^{-2})$ samples.

\cref{fact:tvgauss} tells us that for any $P_s\neq P_t \in \mathcal{P}_{\mathcal{C}}$, $\tv(P_s,P_t)\ge c_3\epsilon$ for some constant $c_3$. 
Consider any algorithm for learning a random distribution $P=\gauss(0,\Sigma)$ from $\mathcal{P}_{\mathcal{C}}$ in $\tv$-distance at most $c_4\epsilon$ for a sufficiently small constant $c_4$.
Let the learnt distibution be $\wh{P}=\gauss(0,\wh{\Sigma})$.
Due to triangle inequality,~\cref{fact:tvgauss}, and an appropriate choice of $c_4$, $P$ must be the unique distribution from $\mathcal{P}_{\mathcal{C}}$ satisfying $||\wh{\Sigma}^{-1}\Sigma-I||_F\le c_5\epsilon$ for an appropriate choice of $c_5$.
We can find this unique distribution by computing  $||\wh{\Sigma}^{-1}\Sigma'-I||_F$ for every covariance matrix $\Sigma'$ from $\mathcal{P}_{\mathcal{C}}$ and guess the random distribution correctly.
Hence, the lower-bound follows.
\end{proof}

Now, we present the lower-bound for learning general Bayes nets. 

\begin{theorem}
\label{thm:kl-lowerbound}
For any $0<\epsilon<1$ and $n,d$ such that $d \le n/2$, there exists a DAG $G$ over $[n]$ of in-degree $d$ such that learning a Gaussian Bayes net $\wh{P}$ on $G$ such that $\tv(P,\widehat{P})\le \epsilon$ with success probability 2/3 needs $\Omega(nd\epsilon^{-2})$ samples in general.
\end{theorem}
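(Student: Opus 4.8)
The plan is to mirror the packing-plus-Fano argument of \cref{thm:tv-lowerbound}, but to enlarge the packing by a factor of $d$ in the exponent by encoding information into the \emph{edge coefficients} instead of the root variances. I would fix $G$ so that nodes $1,\dots,d$ are parentless sources (each an independent $\gauss(0,1)$) and each of the remaining $n-d \ge n/2$ nodes has precisely $\{1,\dots,d\}$ as its parents; the in-degree is exactly $d$ as required. All noise variances are set to $1$, so the only free parameters are the $(n-d)d \ge nd/2$ coefficients, which I collect into a matrix $W \in \R^{(n-d)\times d}$. To build the family $\cP_{\cC}$, I take a binary code $\cC \subseteq \Brace{0,1}^{(n-d)d}$ from the Gilbert--Varshamov bound with $\Abs{\cC} = 2^{\Theta(nd)}$ and pairwise Hamming distance $\Theta(nd)$, and map each codeword to a coefficient matrix $W$ whose entries lie in $\Brace{0,\gamma}$, with the scale $\gamma = c\,\epsilon/\sqrt{nd}$ fixed below.

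The two quantities to control are the pairwise $\kl$ (which must be small) and the pairwise $\tv$ (which must be large). For the KL upper bound I would invoke the decomposition \cref{eq:decomposition}: since two nets $P_s, P_t$ share the structure and all variances, every root term vanishes and each child $j$ contributes $\frac{\Delta_j^\top M_j \Delta_j}{2}$, where $\Delta_j$ is the corresponding row of $W_t - W_s$ and $M_j = I_d$ because the parents are independent standard Gaussians. Hence $\kl(P_s,P_t) = \tfrac12 \Norm{W_t-W_s}_F^2 = \tfrac{\gamma^2}{2}\,\mathrm{Ham}(s,t) \le \tfrac{\gamma^2 nd}{2} = O(\epsilon^2)$ after substituting $\gamma$. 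For the $\tv$ lower bound I would use the explicit SEM form of the covariance: writing $A_s$ for the nilpotent coefficient matrix whose unique nonzero block is $W_s$ (so $A_s^2=0$), one has $\Sigma_s = (I+A_s)(I+A_s)^\top$ and $\Sigma_s^{-1} = (I-A_s)^\top(I-A_s)$, and a direct block computation shows that the bottom-left block of $\Sigma_s^{-1}\Sigma_t - I$ equals $\Delta W := W_t - W_s$. Therefore $\Norm{\Sigma_s^{-1}\Sigma_t - I}_F \ge \Norm{\Delta W}_F = \gamma\sqrt{\mathrm{Ham}(s,t)} = \Theta(\epsilon)$, and \cref{fact:tvgauss} yields $\tv(P_s,P_t) \ge c_1\Norm{\Sigma_s^{-1}\Sigma_t - I}_F = \Omega(\epsilon)$.

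With a packing that is $\Omega(\epsilon)$-separated in $\tv$ yet has $O(\epsilon^2)$ pairwise KL, I would finish exactly as in \cref{thm:tv-lowerbound}: any algorithm returning $\wh P$ with $\tv(P,\wh P) \le c_4\epsilon$ for a small enough constant $c_4$ identifies the true codeword by rounding $\wh P$ to the nearest member of $\cP_{\cC}$ (by the triangle inequality and the separation, this member is unique and equals $P$). Thus the learner solves the identification problem, and Fano's inequality with $\log\Abs{\cC} = \Theta(nd)$ and $\max_{s\ne t}\kl(P_s,P_t) = O(\epsilon^2)$ forces $m = \Omega(nd\,\epsilon^{-2})$.

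I expect the main obstacle to be the $\tv$ lower bound, i.e.\ certifying that the packing remains $\Omega(\epsilon)$-separated. The subtlety is that $\Sigma_s^{-1}\Sigma_t - I$ also carries off-diagonal and quadratic-in-$W$ correction terms; I must verify that the clean linear block $\Delta W$ dominates---which it does, since each coefficient is only $O(\epsilon/\sqrt{nd})$, rendering the correction terms lower order---so that \cref{fact:tvgauss} delivers a genuine $\Omega(\epsilon)$ separation rather than being washed out. By contrast, the KL bound via \cref{eq:decomposition}, the existence of the code, and the Fano step are all routine.
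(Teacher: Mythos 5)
Your proposal is correct and follows the same overall architecture as the paper's proof: the same bipartite DAG ($d$ independent sources feeding each of the $n-d$ sinks), bits encoded into edge coefficients at scale $\Theta(\epsilon/\sqrt{nd})$, the KL computed via \cref{eq:decomposition} (giving $\tfrac12\Norm{a-b}_2^2$ since the parents' covariance is the identity), the $\tv$-separation certified through \cref{fact:tvgauss}, and the conclusion via Fano plus a triangle-inequality identification step, exactly as in \cref{thm:tv-lowerbound}. Where you differ is in two executional choices, both of which are simplifications. First, you build the packing with a single Gilbert--Varshamov code on all $(n-d)d$ coefficient bits at once, whereas the paper uses a two-stage construction (a code $\cC\subseteq\{0,1\}^d$ per node, then a Tur\'an-type pruning of the $\Abs{\cC}^{n-d}$ product to force at least $(n-d)/2$ nodes to carry differing patterns); the two yield the same packing parameters $(2^{\Theta(nd)}$ points, pairwise Hamming distance $\Theta(nd))$, but yours avoids the clique-counting claim entirely. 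Second, for the separation you work with the full $n\times n$ covariances and use the exact identity that the sink-by-source block of $\Sigma_s^{-1}\Sigma_t - I$ is $W_t - W_s$, whereas the paper passes to marginals on the differing coordinates and analyzes the analogous block there; again both work. One remark: the obstacle you flag at the end is not actually an obstacle. Writing $\Sigma_s^{-1}=(I-A_s)^\top(I-A_s)$ and $\Sigma_t=(I+A_t)(I+A_t)^\top$ in block form, the bottom-left block of $\Sigma_s^{-1}\Sigma_t-I$ is \emph{exactly} $W_t-W_s$ with no quadratic corrections (those appear only in the other blocks), and since the Frobenius norm of a matrix dominates that of any submatrix, the off-block terms can only increase $\Norm{\Sigma_s^{-1}\Sigma_t-I}_F$; no domination or perturbation argument is needed.
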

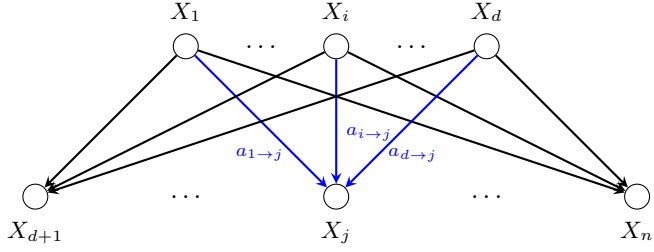
\begin{figure}[t]
\centering
\begin{tikzpicture}
\node[draw, circle, minimum size=5pt] at (-2,0) (top1) {};
\node[circle, minimum size=5pt] at (-1,0) (top2) {$\ldots$};
\node[draw, circle, minimum size=5pt] at (0,0) (top3) {};
\node[circle, minimum size=5pt] at (1,0) (top4) {$\ldots$};
\node[draw, circle, minimum size=5pt] at (2,0) (top5) {};
\node[draw, circle, minimum size=5pt] at (-4,-2) (bot1) {};
\node[circle, minimum size=5pt] at (-2,-2) (bot2) {$\ldots$};
\node[draw, circle, minimum size=5pt] at (0,-2) (bot3) {};
\node[circle, minimum size=5pt] at (2,-2) (bot4) {$\ldots$};
\node[draw, circle, minimum size=5pt] at (4,-2) (bot5) {};

\draw[thick, ->, -stealth] (top1) -- (bot1);
\draw[thick, ->, -stealth, blue] (top1) -- node[left, pos=0.75]{\scriptsize $a_{1 \rightarrow j}$} (bot3);
\draw[thick, ->, -stealth] (top1) -- (bot5);
\draw[thick, ->, -stealth] (top3) -- (bot1);
\draw[thick, ->, -stealth, blue] (top3) -- node[right, pos=0.6]{\scriptsize $a_{i \rightarrow j}$} (bot3);
\draw[thick, ->, -stealth] (top3) -- (bot5);
\draw[thick, ->, -stealth] (top5) -- (bot1);
\draw[thick, ->, -stealth, blue] (top5) -- node[right, pos=0.75]{\scriptsize $a_{d \rightarrow j}$} (bot3);
\draw[thick, ->, -stealth] (top5) -- (bot5);

\node[above=1pt of top1] {\small $X_1$};
\node[above=1pt of top3] {\small $X_i$};
\node[above=1pt of top5] {\small $X_d$};
\node[below=1pt of bot1] {\small $X_{d+1}$};
\node[below=1pt of bot3] {\small $X_j$};
\node[below=1pt of bot5] {\small $X_n$};

\end{tikzpicture}
\caption{Bipartite DAG on $n$ vertices with maximum degree $d$.
For $i \in \{1, \ldots, d\}$, $X_i = \eta_i$ where $\eta_i \sim N(0,1)$.
For $j \in \{d+1, \ldots, n\}$, $X_j = \eta_j + \sum_{i=1}^d a_{i \rightarrow j} X_i$  where $\eta_j \sim N(0,1)$.
Furthermore, each $X_j$ is associated with a $d$-bit string and each coefficients $a_{1 \rightarrow j}, \ldots, a_{d \rightarrow j}$ is either $\frac{1}{\sqrt{d(n-d)}}$ or $\frac{1+\epsilon}{\sqrt{d(n-d)}}$, depending on the $i^{th}$ bit in the associated $d$-bit string.}
\label{fig:lbdag}
\end{figure}
Let $\mathcal{C}\subseteq \{0,1\}^d$ be a set with the following properties:
1) $|\mathcal{C}|=2^{\Theta(d)}$ and
2) every $i\neq j \in \mathcal{C}$ have a Hamming distance $\Theta(d)$.
Existence of such a set is well-known.
We define a class of distributions $\mathcal{P}_{\mathcal{C}}$ based on $\mathcal{C}$ and the graph $G$ shown in \cref{fig:lbdag} as follows.
Each vertex of each distribution in $\mathcal{P}_{\mathcal{C}}$ has a $\gauss(0,1)$ noise, and hence no learning is required for the noises.
Each coefficient $a_{i\rightarrow j}$ takes one of two values $\left\{\frac{1}{\sqrt{d(n-d)}},\frac{1+\epsilon}{\sqrt{d(n-d)}}\right\}$ corresponding to bits $\{0,1\}$ respectively.
For each $s\in \mathcal{C}$, we define $A_s$ to be the vector of coefficients corresponding to the bit-pattern of $s$ as above.
We have $2^{\Theta(d)}$ possible bit-patterns, which we use to define each conditional probability $(X_i\mid X_1,X_2,\dots,X_d)$.
Then, we have a class $\mathcal{Q}_{\mathcal{P}}$ of $|\mathcal{C}|^{(n-d)}$ distributions for the overall Bayes net.
We prune some of the distributions to get the desired subclass $\mathcal{P}_{\mathcal{C}}\subseteq \mathcal{Q}_{\mathcal{C}}$, such that $\mathcal{P}_{\mathcal{C}}$ is the largest-sized subset with any pair of distributions in the subset differing in at least $(n-d)/2$ bit-patterns (out of $(n-d)$ many) for the $(X_i\mid X_1,X_2,\dots,X_d)$'s. 

\begin{claim}
$|\cP_{\cC}|\ge 2^{\Theta(d(n-d))}$.
\end{claim}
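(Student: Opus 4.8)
The plan is to recognize that this claim, despite the elaborate setup, is purely a sphere-packing (Gilbert--Varshamov) counting statement; the minimum-distance property of $\cC$ is irrelevant here, and only its cardinality $q := \Abs{\cC} = 2^{\Theta(d)}$ will be used. First I would fix the right encoding of $\cQ_{\cC}$. Each distribution in $\cQ_{\cC}$ is specified by choosing, independently for every bottom vertex $X_j$ with $j \in \{d+1, \ldots, n\}$, one of the $\Abs{\cC}$ bit-patterns that determines the incoming coefficients $a_{1 \rightarrow j}, \ldots, a_{d \rightarrow j}$. Hence $\cQ_{\cC}$ is in bijection with the set of length-$N$ strings over an alphabet of size $q$, where $N := n-d$, and under this bijection the phrase ``two distributions differ in at least $(n-d)/2$ bit-patterns'' is exactly the statement that the two strings are at Hamming distance at least $N/2$ over that alphabet. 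Thus $\cP_{\cC}$ is, by definition, a maximum-cardinality $q$-ary code of length $N$ and minimum distance $N/2$, and the claim reduces to a standard lower bound on the size of such a code.

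Next I would run the greedy (maximal-code) argument. A code to which no further string can be added without violating the minimum-distance requirement must have the property that every one of the $q^N$ strings lies within Hamming distance $N/2 - 1$ of some codeword; otherwise that string could be added. Covering all strings by the balls of radius $N/2 - 1$ centered at the codewords gives $\Abs{\cP_{\cC}} \cdot V_q\Paren{N, N/2} \geq q^N$, where $V_q(N,r) = \sum_{i=0}^{r} \binom{N}{i}(q-1)^i$ denotes the volume of a radius-$r$ Hamming ball. This yields the Gilbert--Varshamov lower bound
\[
\Abs{\cP_{\cC}} \;\geq\; \frac{q^N}{V_q\Paren{N, N/2}}.
\]

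Then I would bound the ball volume crudely but sufficiently. Using $\sum_{i=0}^{N/2} \binom{N}{i} \leq 2^N$ together with $(q-1)^i \leq q^{N/2}$ for every $i \leq N/2$, I get $V_q(N, N/2) \leq 2^N \cdot q^{N/2}$, and therefore
\[
\Abs{\cP_{\cC}} \;\geq\; \frac{q^{N/2}}{2^N} \;=\; 2^{\,N\Paren{\frac{1}{2}\log_2 q \,-\, 1}}.
\]
Since $\log_2 q = \Theta(d)$, once $d$ exceeds a fixed absolute constant the exponent $\frac{1}{2}\log_2 q - 1$ is itself $\Theta(d)$, so the right-hand side is $2^{\Theta(dN)} = 2^{\Theta(d(n-d))}$. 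The matching upper bound is immediate from $\Abs{\cP_{\cC}} \leq \Abs{\cQ_{\cC}} = q^N = 2^{O(d(n-d))}$, which completes the $\Theta$ estimate.

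The only delicate point — and the step I would treat as the main obstacle — is ensuring that the alphabet is large enough that the linear-in-$N$ packing overhead $2^N$ does not swamp the $q^{N/2}$ gain; that is, that $\tfrac12\log_2 q - 1 = \Theta(d)$ rather than being driven non-positive. This is exactly where $\log_2 q = \Theta(d)$ is needed: the $\Theta(d)$ term dominates the additive constant once $d$ is bounded below by a constant, and for the remaining $O(1)$-size values of $d$ one has $d(n-d) = \Theta(n)$ and the bound can be absorbed into the $\Theta$ directly.
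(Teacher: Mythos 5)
Your proof is correct and is in substance the same as the paper's: the paper extracts the large code by applying Tur\'an's theorem to the graph on all $\ell^{\,n-d}$ coefficient-strings with edges between pairs at Hamming distance at least $(n-d)/2$, which produces exactly the Gilbert--Varshamov-type bound $q^N/V_q(N,N/2)$ that you derive from maximality of the code, and the paper's ball-volume estimate $N-r\le m\cdot(4(\ell-1))^{0.5m}$ is the same crude bound as your $2^N q^{N/2}$. Both arguments likewise share the same fine print — the exponent stays $\Theta(d(n-d))$ only once $\log_2 q=\Theta(d)$ exceeds a fixed constant so the $2^{-N}$ packing loss is dominated — a caveat you flag explicitly and the paper leaves implicit.
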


\begin{proof}
Let $\ell=|\cC|=2^{\Theta(d)}$ and $m=n-d$.
Then, there are $N=\ell^m$ possible coefficient-vectors/distributions in $\cQ_{\cC}$.
We create an undirectred graph $G$ consisting of a vertex for each coefficient, and edges between any pair of vertices differing in at least $m/2$ bit-patterns.
Note that $G$ is $r$-regular for $r={m\choose 0.5m}(\ell-1)^{0.5m}+\dots+{m\choose 0.5m+j}(\ell-1)^{0.5m+j}+\dots+(\ell-1)^m$.
Turan's theorem says that there is a clique of size $\alpha=(1-\frac{r}{N})^{-1}$. We define $\cP_{\cC}$ to be the vertices of this clique.

To show that $\alpha$ is as large as desired, it suffices to show $N-r\le N/2^{\Theta(dm)}$.
The result follows by noting $N-r=1+{m\choose 1}(\ell-1)+\dots+{m\choose j}(\ell-1)^{j}+\dots+{m\choose 0.5m}(\ell-1)^{0.5m}\le m\cdot(4(\ell-1))^{0.5m}$.
\end{proof}

We also get for any two distributions $P_a,P_b$ in this model with the coefficients $a,b$ respectively, $\kl(P_a,P_b)=\frac{1}{2}||a-b||_2^2$ from \eqref{eq:decomposition}.
Hence, any pair of $\mathcal{P}_{\mathcal{C}}$ has $\kl=\Theta(\epsilon^2)$.
Then, Fano's inequality tells us that given a random distribution $P\in \mathcal{P}_{\mathcal{C}}$, it needs at least $\Omega(d(n-d)\epsilon^{-2})$ samples to guess the correct one with 2/3 probability.
We need the following fact about the $\tv$-distance among the members of $\cP_{\cC}$.

\begin{claim}
Let $P_a, P_b \in \cP_{\cC}$ be two distinct distributions (i.e.\ $P_a \neq P_b$) with coefficient-vectors $a,b$ respectively.
Then, $\tv(P_a,P_b) \in \Theta(\epsilon)$.
\end{claim}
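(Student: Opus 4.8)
The plan is to sandwich $\tv(P_a,P_b)$ between two $\Theta(\eps)$ quantities: an upper bound coming from the KL bound already established just above, and a matching lower bound coming from \cref{fact:tvgauss}. For the upper bound, recall we have shown $\kl(P_a,P_b) = \Theta(\eps^2)$; Pinsker's inequality then immediately gives $\tv(P_a,P_b) \le \sqrt{\kl(P_a,P_b)/2} = O(\eps)$, so no further work is needed there.

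The substance is in the lower bound. Both $P_a$ and $P_b$ are non-degenerate centered multivariate Gaussians (a linear SEM with independent $\gauss(0,1)$ noises produces a centered, full-rank joint Gaussian), so I would invoke \cref{fact:tvgauss} to get $\tv(P_a,P_b) \ge c_1 \Norm{\Sigma_a^{-1}\Sigma_b - I}_F$, where $\Sigma_a,\Sigma_b$ are the respective covariance matrices. It therefore suffices to prove $\Norm{\Sigma_a^{-1}\Sigma_b - I}_F = \Omega(\eps)$. I would order the variables with the $d$ source nodes first and the $n-d$ sink nodes last, and collect a distribution's coefficients into a matrix $U \in \R^{d \times (n-d)}$ with $U_{i,j} = a_{i \to d+j}$. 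A direct computation from the structural equations shows
\[
\Sigma = \begin{pmatrix} I & U \\ U^\top & I + U^\top U \end{pmatrix}, \qquad \Sigma^{-1} = \begin{pmatrix} I + UU^\top & -U \\ -U^\top & I \end{pmatrix}.
\]
Writing $U_a, U_b$ for the two coefficient matrices and multiplying blocks, the bottom-left block of $\Sigma_a^{-1}\Sigma_b$ equals $(-U_a^\top)\, I + I\, U_b^\top = (U_b - U_a)^\top$, while the corresponding block of $I$ is zero. Hence the bottom-left block of $\Sigma_a^{-1}\Sigma_b - I$ is exactly $D^\top$, where $D = U_b - U_a$.

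Since the Frobenius norm of a matrix is at least that of any of its blocks, $\Norm{\Sigma_a^{-1}\Sigma_b - I}_F \ge \Norm{D}_F$, and it remains to show $\Norm{D}_F = \Theta(\eps)$. Each nonzero entry of $D$ has magnitude $\frac{\eps}{\sqrt{d(n-d)}}$, and by construction $P_a$ and $P_b$ differ in at least $(n-d)/2$ of the $d$-bit patterns, each differing pair having Hamming distance $\Theta(d)$ by the property of $\cC$; together with the trivial upper bound of $d$ differing coordinates per pattern, the number of nonzero entries of $D$ is $\Theta(d(n-d))$. Thus $\Norm{D}_F^2 = \Theta(d(n-d)) \cdot \frac{\eps^2}{d(n-d)} = \Theta(\eps^2)$, so $\Norm{D}_F = \Theta(\eps)$ and hence $\tv(P_a,P_b) \ge c_1 \Norm{D}_F = \Omega(\eps)$. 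Combining with the upper bound yields $\tv(P_a,P_b) = \Theta(\eps)$.

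The one step requiring care is the block inversion of $\Sigma$; this is where I expect the main (though routine, thanks to the bipartite structure) bookkeeping to live, and I would double-check the precision-matrix identity by verifying $\Sigma\,\Sigma^{-1} = I$ block-by-block before trusting the bottom-left computation. Notably, only that single block of $\Sigma_a^{-1}\Sigma_b - I$ is needed for the lower bound, so I can sidestep any operator-norm estimates on $U_a,U_b$ that a full evaluation of all four blocks would otherwise demand.
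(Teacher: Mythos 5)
Your proposal is correct and takes essentially the same route as the paper: Pinsker's inequality (via $\kl(P_a,P_b)=\Theta(\eps^2)$) for the $\cO(\eps)$ upper bound, and \cref{fact:tvgauss} combined with the observation that the source--sink block of $\Sigma_a^{-1}\Sigma_b - I$ equals the difference of the coefficient matrices, whose $\Theta(d(n-d))$ nonzero entries of magnitude $\eps/\sqrt{d(n-d)}$ give the $\Omega(\eps)$ lower bound. The only cosmetic difference is that you work with the full $n$-dimensional covariances and verify the precision-matrix blocks explicitly, whereas the paper first passes to the marginal on the sources and the differing sinks before doing the same block computation.
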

\begin{proof}
By Pinsker's inequality, we have $\tv(P_a,P_b) \in \cO(\epsilon)$.
Here we show $\tv(P_a,P_b) \in \Omega(\epsilon)$.
Let $n-d=m$.
By construction, $a$ and $b$ differ in $m'\ge m/2$ conditional probabilities.
Let $a'\subseteq a,b'\subseteq b$ be the coefficient-vectors on the coordinates where they differ. Let $P_{a'},\Sigma_{a'}$ and $P_{b'},\Sigma_{b'}$ be the corresponding marginal distributions on $(m'+d)$ variables, and their covariance matrices.
In the following, we show $||\Sigma_{a'}^{-1}\Sigma_{b'}-I||_F=\Omega(\epsilon)$. This proves the claim from Fact~\ref{fact:tvgauss}.

Let
$
M_{a'}=
\begin{bmatrix}
0_{m'\times m'} & 0_{m'\times d}\\
A_{d\times m'} & 0_{d\times d}
\end{bmatrix}
$
be the adjacency matrix for $P_a$, where the sources appear last in the rows and columns and in the matrix $A$, each $A_{ij}\in \{\frac{1}{\sqrt{md}},\frac{1+\epsilon}{\sqrt{md}}\}$ denote the coefficient from source $i$ to sink $j$.
Similarly, we define $M_{b'}$ using a coefficient matrix $B_{d\times m'}$. Let $\{A_i: 1\le i\le m'\}$ and $\{B_i: 1\le i\le m'\}$ denote the columns of $A$ and $B$.
Then for every $i$, $A_i$ and $B_i$ differ in at least $\Theta(d)$ coordinates by construction.
 
By definition
$
\Sigma_{b'}=
\begin{bmatrix}
\bullet & B^T\\
B & I_{d\times d}
\end{bmatrix}
$
and
$
\Sigma_{a'}^{-1}=
\begin{bmatrix}
I_{m'\times m'} & -A^T\\
-A & \bullet
\end{bmatrix}
$, where $\bullet$ corresponds to certain matrices not relevant to our discussion\footnote{The missing (symmetric) submatrix of $\Sigma_{b'}$ is the identity matrix added with the entries $\langle B_i,B_j\rangle$.
The missing (symmetric) submatrix of $\Sigma_{a'}^{-1}$ is the identity matrix added with the inner products of the rows of $A$.}.
Let $J=\Sigma_{a'}^{-1}\Sigma_{b'}=
\begin{bmatrix}
\bullet & X_{m'\times d}\\
\bullet & \bullet
\end{bmatrix}
$.
It can be checked that $X_{ij}=(B_{i}(j)-A_{i}(j))$ for every $1\le i\le m'$ and $m'+1\le j \le m'+d$.
Now for every $i$, each of the $\Theta(d)$ places that $A_i$ and $B_i$ differ, $X_{ij}\in\frac{\pm\epsilon}{\sqrt{md}}$.
Hence, their total contribution in $||J-I||_F^2= \Omega(\epsilon^2)$.  
\end{proof}
\begin{proof}[Proof of \cref{thm:kl-lowerbound}]
Consider any algorithm which learns a random distribution $P=\gauss(0,\Sigma)$ from $\mathcal{P}_{\mathcal{C}}$ in $\tv$ distance $c_3\epsilon$, for a small enough constant $c_3$.
Let the learnt distribution be $\wh{P}=\gauss(0,\wh{\Sigma})$.
Then, from Fact~\ref{fact:tvgauss}, and triangle inequality of $\tv$, only the unique distribution $P$ with $\Sigma'=\Sigma$ would satisfy $||\wh{\Sigma}^{-1}\Sigma'-I||_F\le c_4\epsilon$ for every covariance matrix $\Sigma'$ from $\cP_{\cC}$, for an appropriate choice of $c_4$.
This would reveal the random distribution, hence the lower-bound follows.
\end{proof}

\section{Experiments}
\label{sec:experiments}

\paragraph{General Setup}

For our experiments, we explored both polytree networks (generated using random Pr\"{u}fer sequences) and $G(n,p)$ Erd\H{o}s-R\'enyi graphs with bounded \emph{expected} degrees (i.e.\ $p = d/n$ for some bounded degree parameter $d$) using the Python package \texttt{networkx} \cite{hagberg2008exploring}.
Our non-zero edge weights are uniformly drawn from the range $(-2,-1] \cup [1, 2)$.
Once the graph is generated, the linear i.i.d.\ data $X \in \mathbb{R}^{m \times n}$ (with $n$ variables and sample size $m \in \{1000, 2000, \ldots, 5000\}$) is generated by sampling the model $X = B^TX + \eta$, where $\eta \sim \gauss(0, I_{n\times n})$ and $B$ is a strictly upper triangular matrix.\footnote{We do not report the results over the varied variance synthetic data, because their performance are close to the performance of the equal variance synthetic data.}
We report KL divergence between the ground truth and our learned distribution using \cref{eq:decomposition}, averaged over 20 random repetitions.
All experiments were conducted on an Intel Core i7-9750H 2.60GHz CPU.

\paragraph{Algorithms}
The algorithms used in our experiments are as follows: Graphical Lasso \cite{friedman2008sparse}, MLE (empirical) estimator, CLIME \cite{cai2011constrained}, \texttt{LeastSquares}, \texttt{BatchAvgLeastSquares}, \texttt{BatchMedLeastSquares}, \texttt{CauchyEstTree}, and \texttt{CauchyEst}. Specifically, we use \texttt{BatchAvg\_LS+}$x$ and \texttt{BatchMed\_LS+}$x$ to represent the \texttt{\justify BatchAvgLeastSquares} and \texttt{BatchMedLeastSquares} algorithms respectively with a batch size of $p+x$ at each node, where $p$ is the number of parents of that node.

\paragraph{Synthetic data}

\cref{fig:exp_well_rt} compares the KL divergence between the ground truth and our learned distribution over 100 variables between the eight estimators mentioned above.
The first three estimators are for undirected graph structure learning. For this reason, we are not using \cref{eq:decomposition} but the common equation in
\cite[page 13]{duchi2007derivations}
for calculating the KL divergence between multivariate Gaussian distributions.
\cref{fig:all_a} and \cref{fig:all_b} shows the results on ER graphs while \cref{fig:all_c} shows the results for random tree graphs.
The performances of MLE and CLIME are very close to each other, thus are overlapped in \cref{fig:all_a}.
In figure \cref{fig:all_b}, we take a closer look at the results from \cref{fig:all_a} for \texttt{LeastSquares}, \texttt{BatchMedLeastSquares}, 
\texttt{BatchAvgLeastSquares}, \texttt{CauchyEstTree}, and \texttt{CauchyEst} estimators for a clear comparison.
In our experiments, we find that the latter five outperform the GLASSO, CLIME and MLE (empirical) estimators.
With a degree 5 ER graph, \texttt{CauchyEst} performs better than \texttt{CauchyEstTree}, while \texttt{LeastSquares} performs best.
In our experiments for our random tree graphs with in-degree 1 (see \cref{fig:all_c}, we find that the performances between \texttt{CauchyEstTree} and \texttt{CauchyEst} are very close to each other and their plots overlap.

In our experiments, \texttt{CauchyEst} outperforms \texttt{CauchyEstTree} when the $G(n,p)$ graph is generated with a higher degree parameter $d$ (e.g.\ $d > 5$) and the resultant graph is unlikely to yield a polytree structure.

\begin{figure*}
\centering    
\subfigure[Eight algorithms evaluated on ER graph with $d=5$]{\label{fig:all_a}\includegraphics[width=75mm]{\detokenize{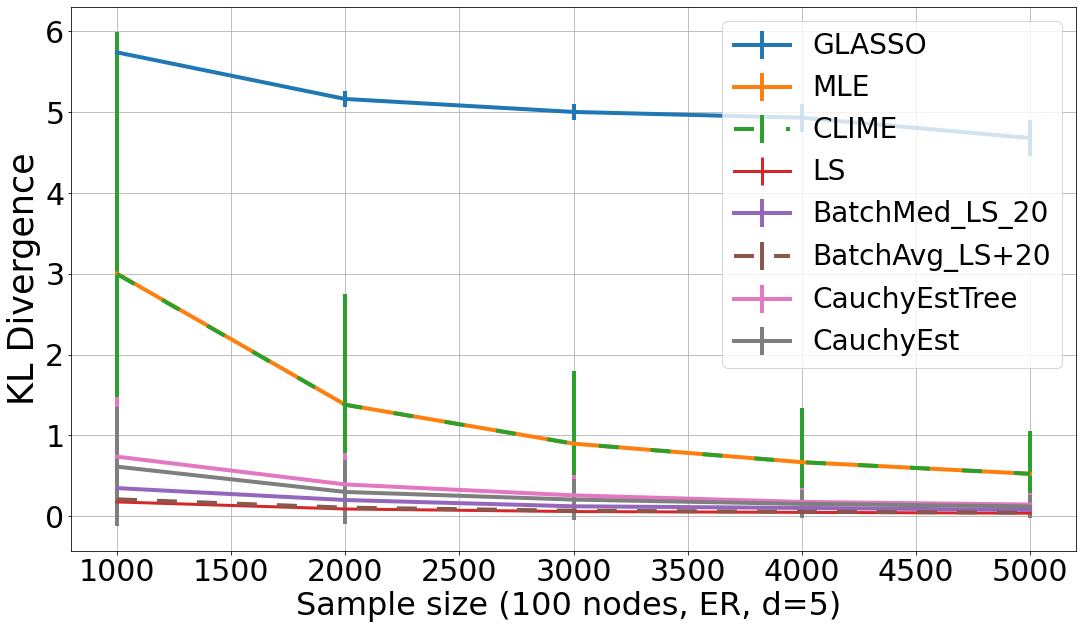}}}
\subfigure[A closer look at some of the algorithms in the plot of \cref{fig:all_a}]{\label{fig:all_b}\includegraphics[width=77mm]{\detokenize{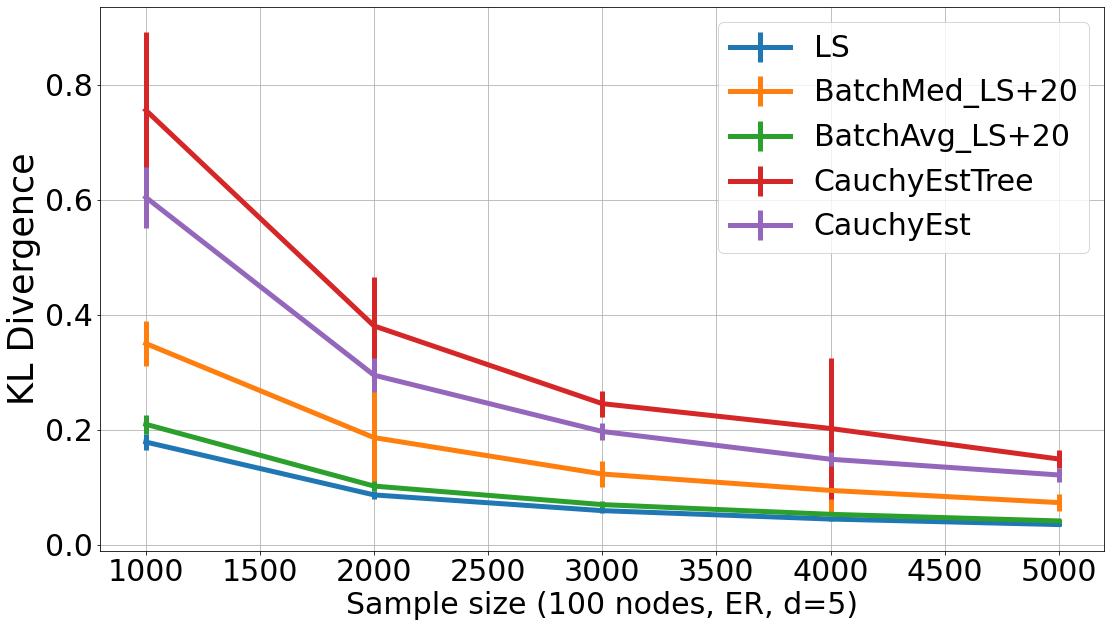}}}
\subfigure[A closer look at some of the algorithms evaluated on random tree graphs]{\label{fig:all_c}\includegraphics[width=77mm]{\detokenize{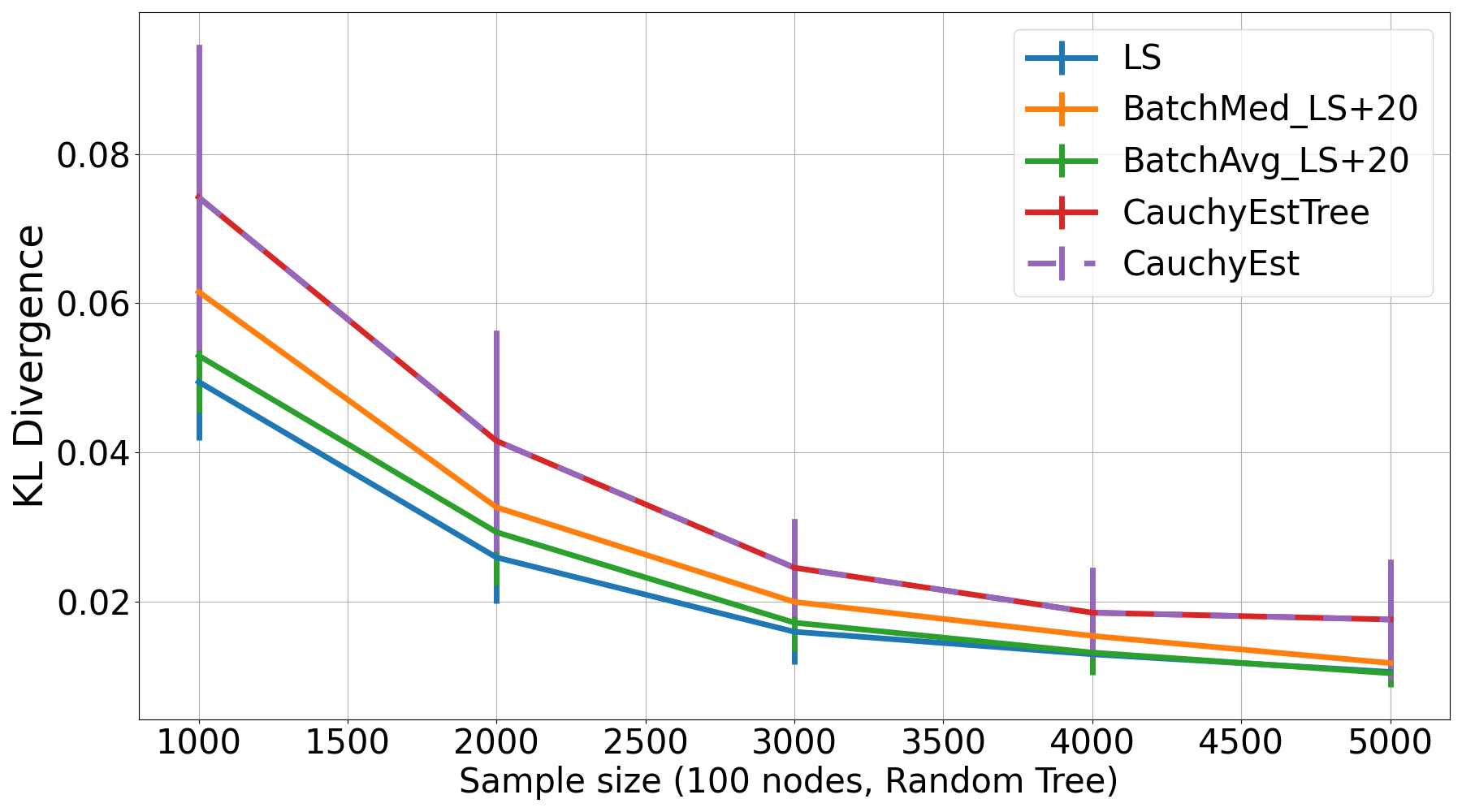}}}
\caption{Experiment on well-conditioned uncontaminated data}
\vskip -0.2cm
\label{fig:exp_well_rt}
\end{figure*}

\paragraph{Real-world datasets}

We also evaluated our algorithms on four real Gaussian Bayesian networks from R package \texttt{bnlearn} \cite{scutari2009learning}.
The \textbf{ECOLI70} graph provided by \cite{schafer2005shrinkage} contains 46 nodes and 70 edges.
The \textbf{MAGIC-NIAB} graph from \cite{scutari2014multiple} contains 44 nodes and 66 edges. 
The \textbf{MAGIC-IRRI} graph contains 64 nodes and 102 edges, and the \textbf{ARTH150} \cite{opgen2007correlation} graph contains 107 nodes and 150 edges.
Experimental results in \cref{fig:bnlearn} show that the error for \texttt{LeastSquares} is smaller than \texttt{CauchyEst} and \texttt{CauchyEstTree} for all the above datasets.

\begin{figure*}
\centering    
\subfigure[Ecoli70 46 nodes]{\label{fig:ecoli70}\includegraphics[width=75mm]{\detokenize{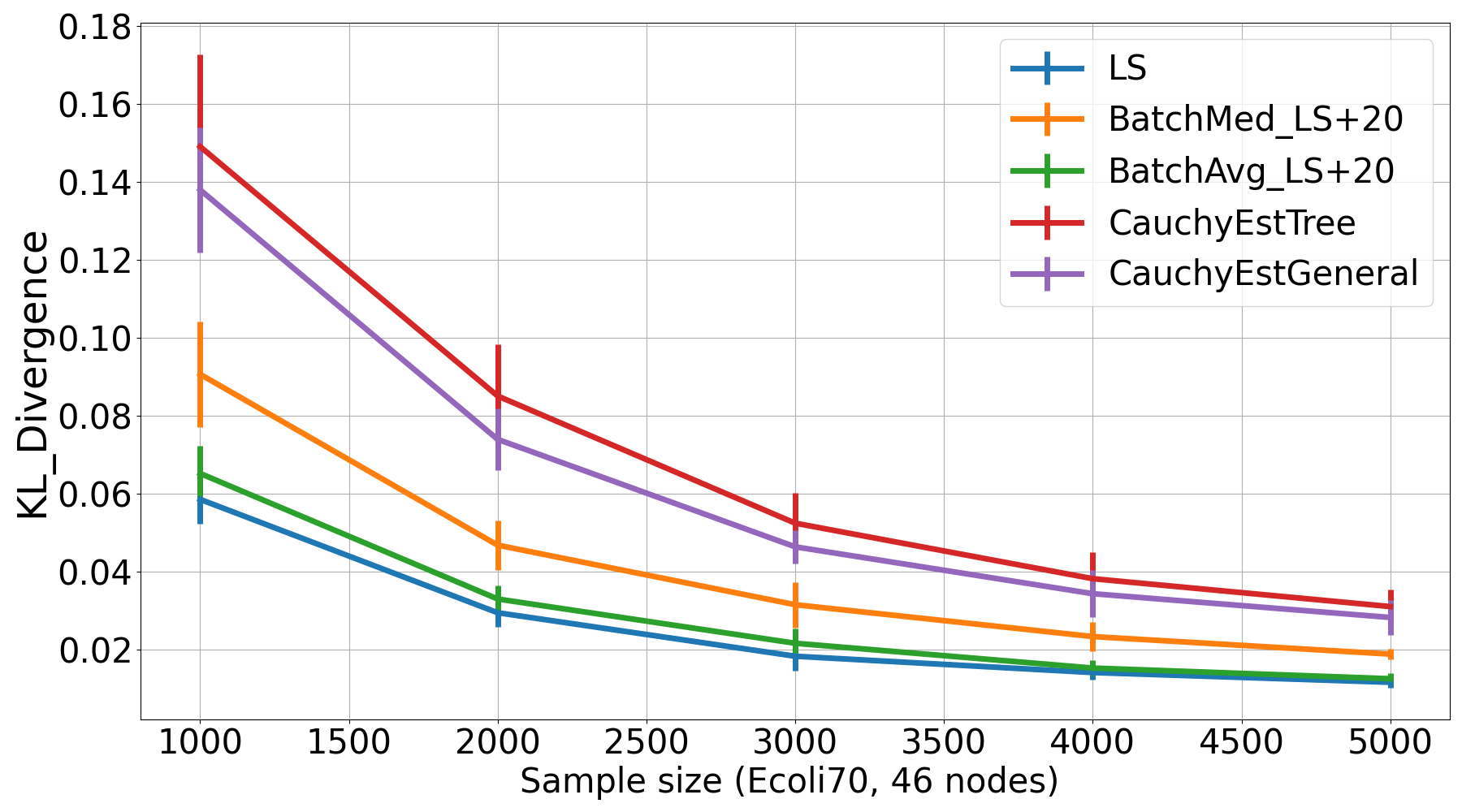}}}
\hspace{2mm}
\subfigure[Magic-niab 44 nodes]{\label{fig:magic-niab}\includegraphics[width=75mm]{\detokenize{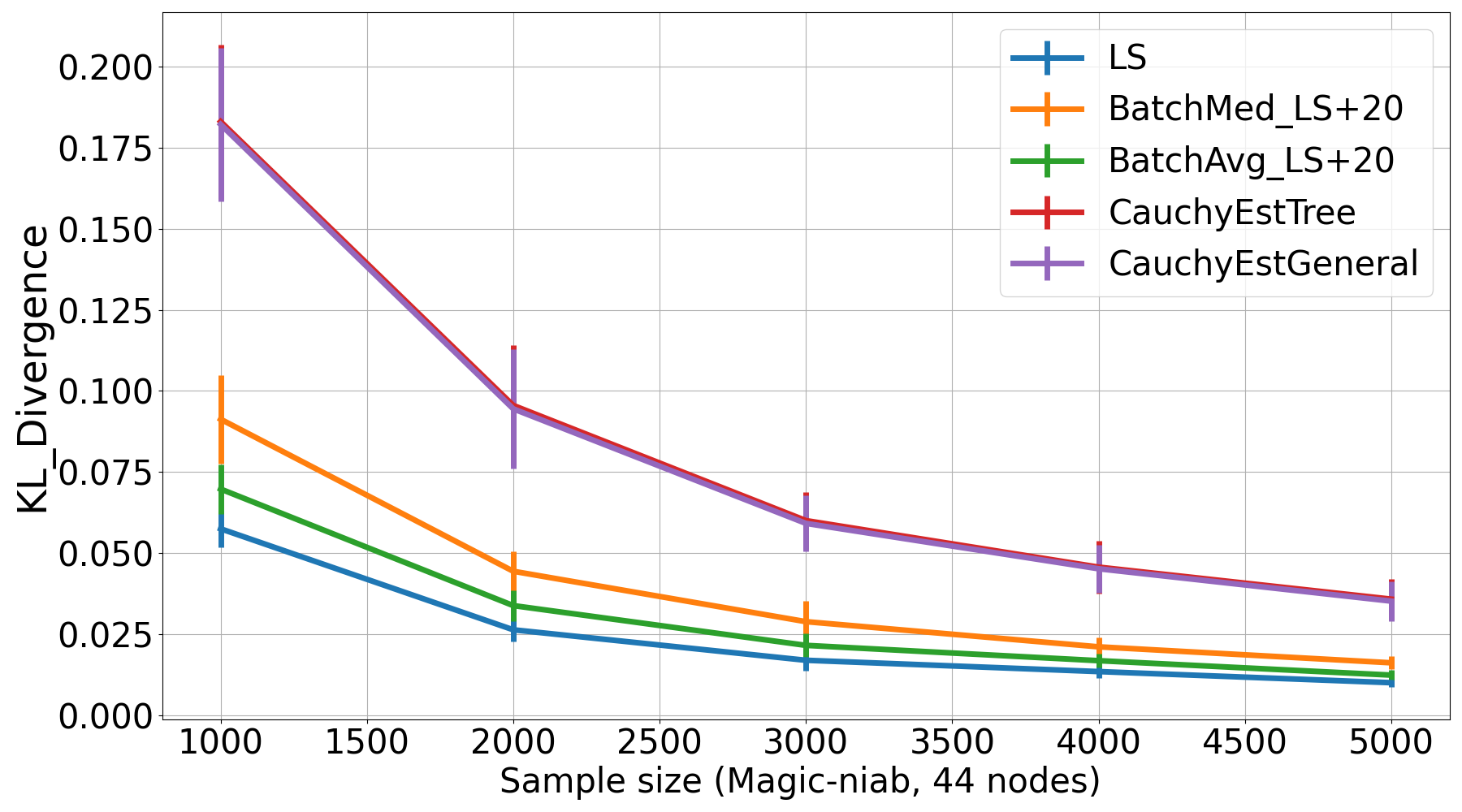}}}
\subfigure[Magic-irri 64  nodes]{\label{fig:magic-irri}\includegraphics[width=75mm]{\detokenize{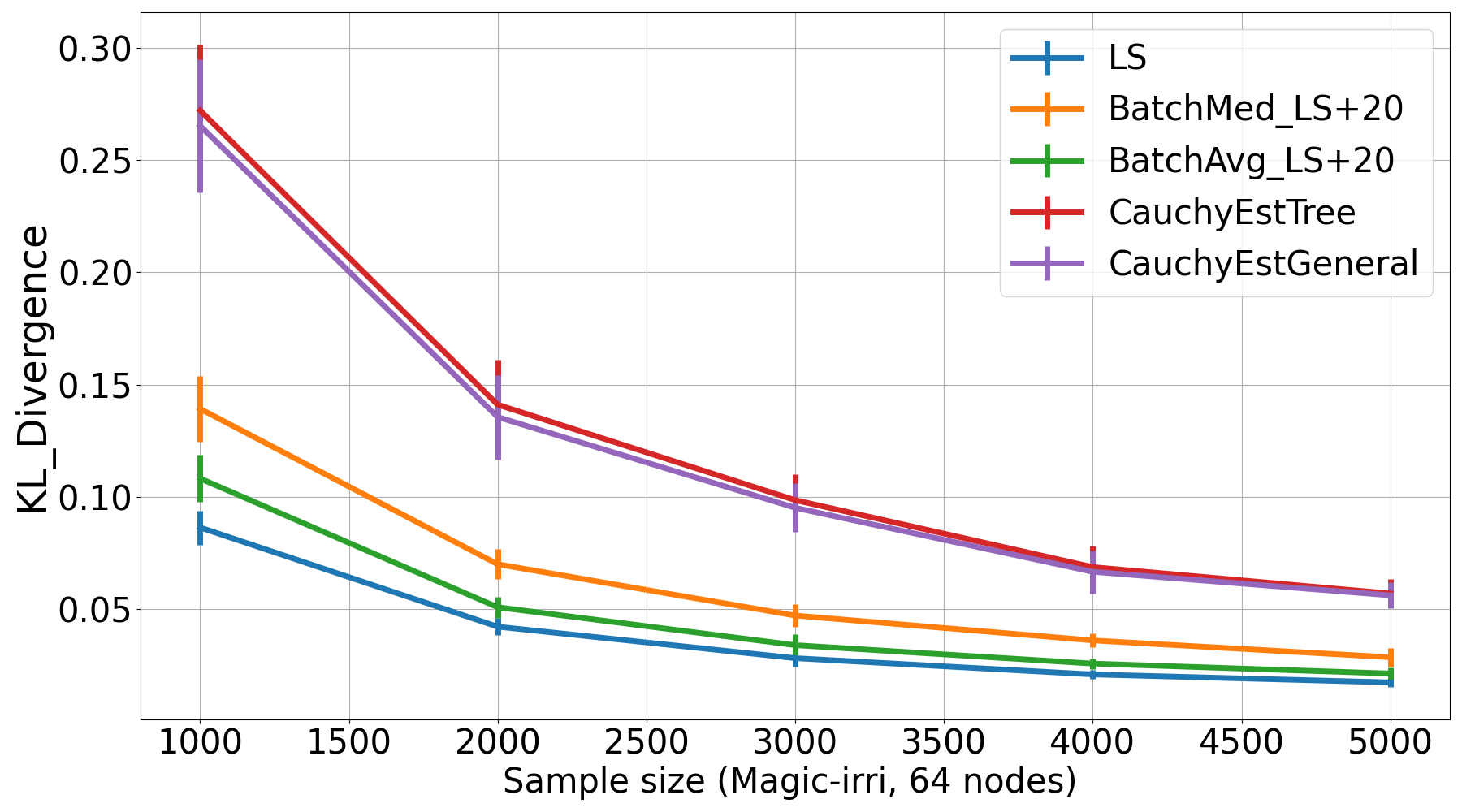}}}
\hspace{2mm}
\subfigure[Arth150 107 nodes]{\label{fig:arth150}\includegraphics[width=75mm]{\detokenize{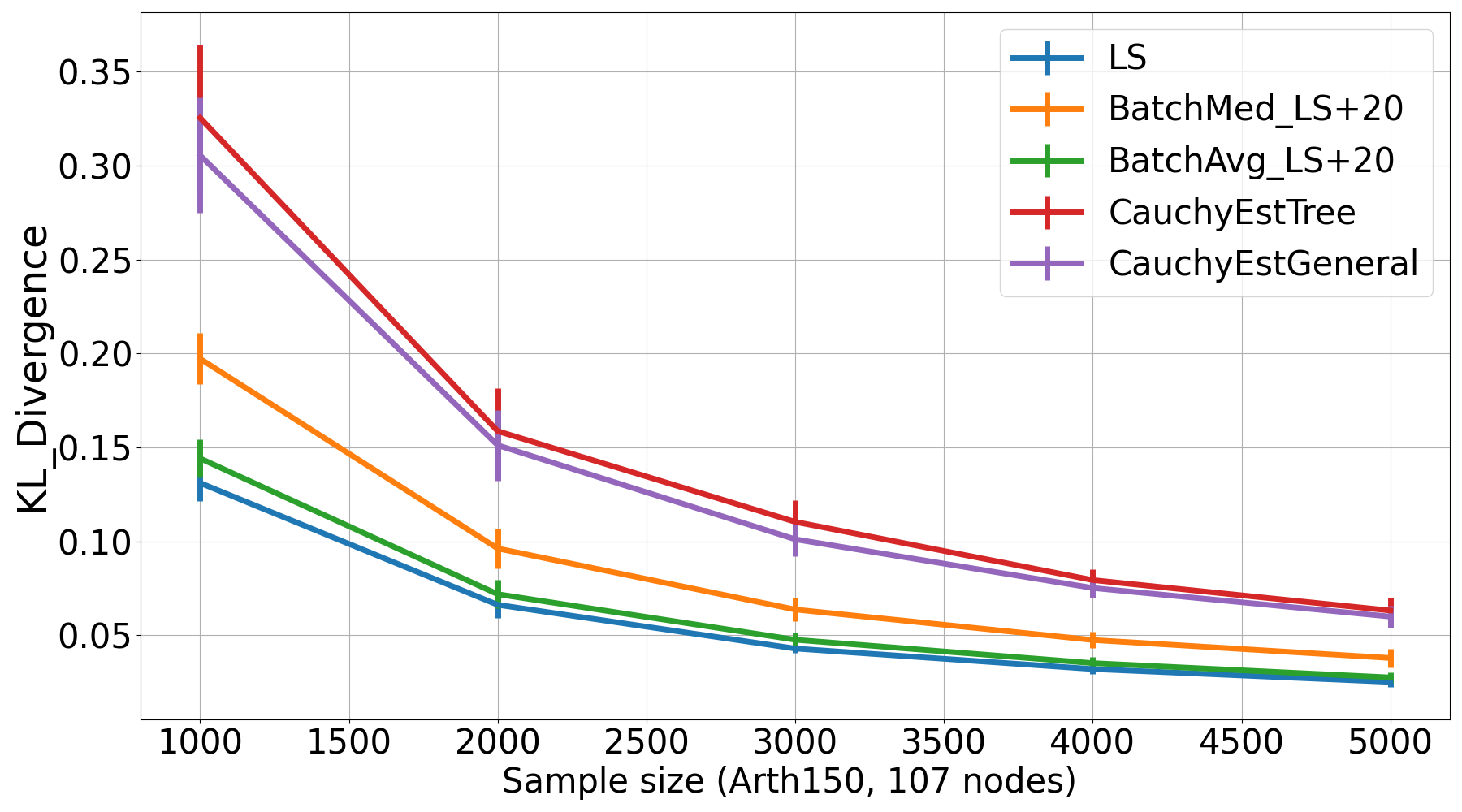}}}
\caption{Experiment results over bnlearn real graph}
\vskip -0.2cm
\label{fig:bnlearn}
\end{figure*}

\paragraph{Contaminated synthetic data}
The contaminated synthetic data is generated in the following way: we randomly choose $5\%$ samples with 5 nodes to be contaminated from the well-conditioned data over $n = 100$ node graphs.
The well-conditioned data has a $\gauss(0,1)$ noise for every variable, while the small proportion of the contaminated data has either $\gauss(1000,1)$ or $\cau(1000,1)$.
In our experiments in \cref{fig:exp_noisy_rt} and \cref{fig:exp_noisy_er}, \texttt{CauchyEst}, \texttt{CauchyEstTree}, and \texttt{BatchMedLeastSquares} outperform \texttt{BatchAvgLeastSquares} and \texttt{LeastSquares} by a large margin. With more than 1000 samples, \texttt{BatchMedLeastSquares} with a batch size of $p+20$ performs similar to \texttt{CauchyEst} and \texttt{CauchyEstTree}, but performs worse with less samples. 
When comparing the performance between \texttt{LeastSqures} and \texttt{\justify BatchAvgLeastSquares} over either a random tree or a ER graph, the experiment in \cref{fig:noisy_rt_a} based on a random tree graph shows that \texttt{LeastSqures} performs better than  \texttt{BatchAvgLeastSquares} when sample size is smaller than 2000, while \texttt{BatchAvgLeastSquares} performs relatively better with more samples. 
Experiment results in \cref{fig:noisy_b} based on ER degree 5 graphs is slightly different from \cref{fig:noisy_rt_a}. In \cref{fig:noisy_b}, \texttt{BatchAvgLeastSquares} performs better than \texttt{LeastSqures} by a large margin. 
Besides, we can also observe that the performances of \texttt{CauchyEst}, \texttt{CauchyEstTree}, and \texttt{BatchMedLeastSquares} are better than the above two estimators and are consistent over different types of graphs. 
For all five algorithms, we use the median absolute devation for robust variance recovery~\cite{huber2004robust} in the contaminated case only (see \cref{algo:mad} in Appendix).

This is because both \texttt{LeastSquares} and \texttt{BatchAvgLeastSquares} use the sample covariance (of the entire dataset or its batches) in the coefficient estimators for the unknown distribution. The presence of a small proportion of outliers in the data can have a large distorting influence on the sample covariance, making them sensitive to atypical observations in the data. On the contrary, our \texttt{CauchyEstTree} and \texttt{BatchMedLeastSquares} estimators are developed using the sample median and hence are resistant to outliers in the data. 

\begin{figure*}
\centering    
\subfigure[all algorithms]{\label{fig:noisy_rt_a}\includegraphics[width=75mm]{\detokenize{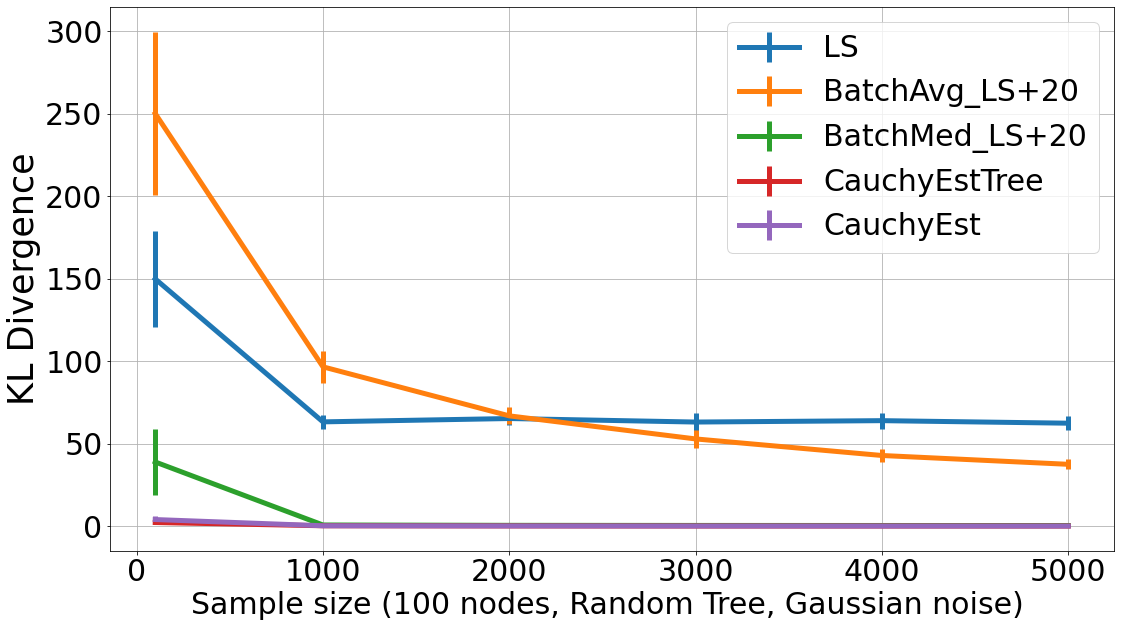}}}
\hspace{2mm}
\subfigure[BatchMed\_LS, CauchyEst, and CauchyEstTree]{\label{fig:noisy_rt_b}\includegraphics[width=75mm]{\detokenize{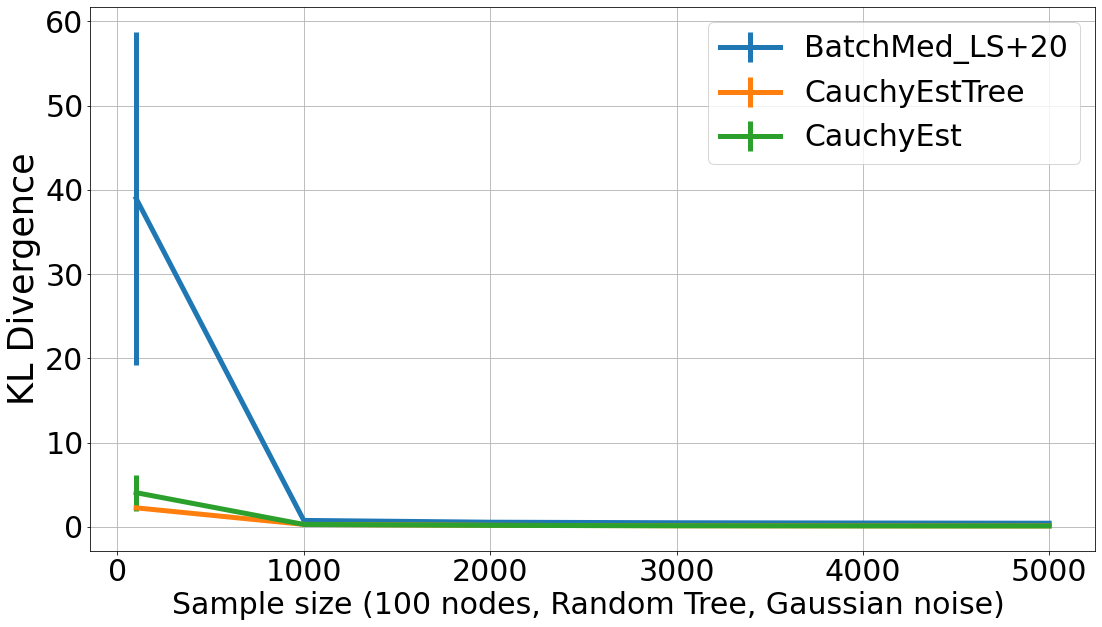}}}
\caption{Experiment results on contaminated data (random tree with Gaussian noise)}
\label{fig:exp_noisy_rt}
\end{figure*}

\begin{figure*}
\centering    
\subfigure[all algorithms]{\label{fig:noisy_b}\includegraphics[width=75mm]{\detokenize{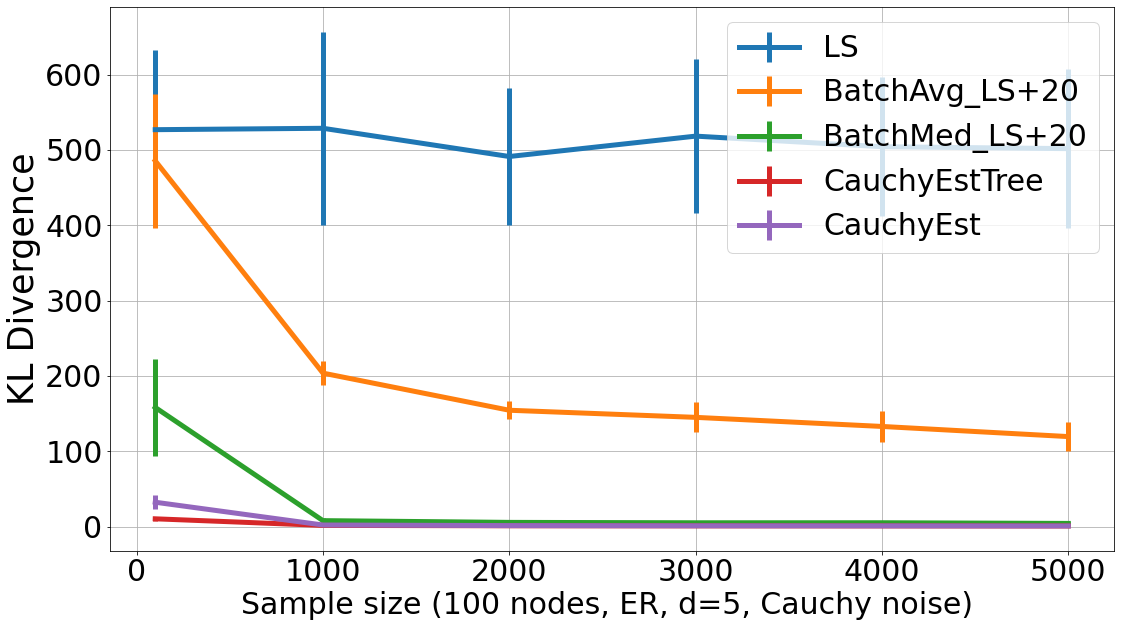}}}
\hspace{2mm}
\subfigure[BatchMed\_LS, CauchyEst, and CauchyEstTree]{\label{fig:noisy_c}\includegraphics[width=75mm]{\detokenize{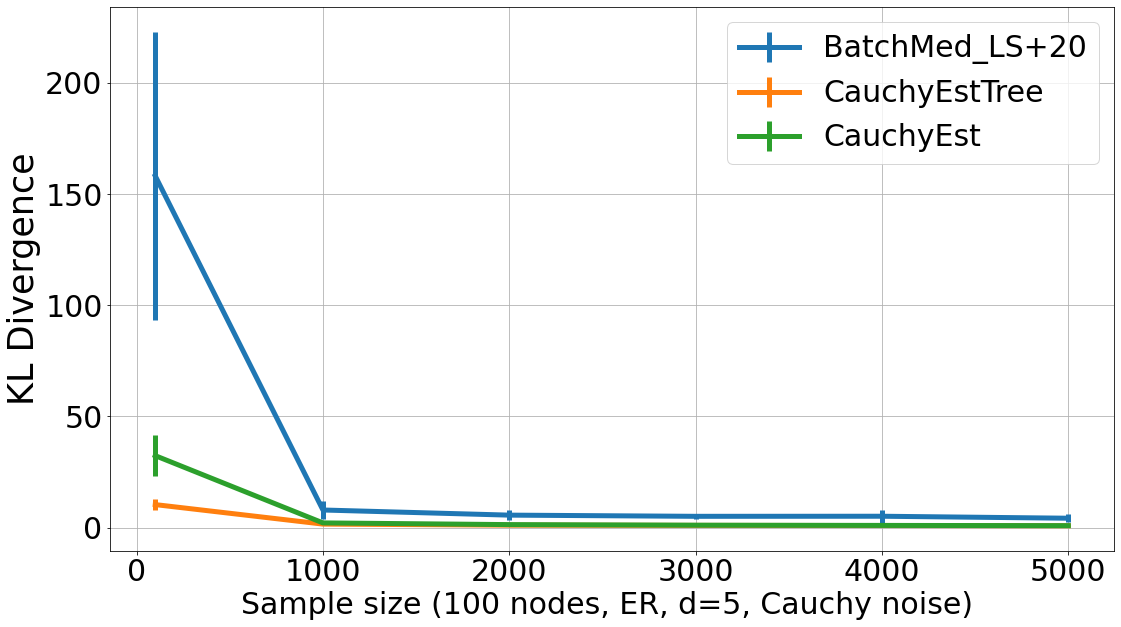}}}
\caption{Experiment results on contaminated data (ER graph with Cauchy noise)}
\vskip -0.2cm
\label{fig:exp_noisy_er}
\end{figure*}

\paragraph{Contaminated real-world datasets}
To test the robustness of the real data in the contaminated condition, we manually contaminate $5\%$ samples of 5 nodes from observational data in \textbf{ECOLI70} and \textbf{ARTH150}. The results are reported in \cref{fig:ecoli70_noisy} and \cref{fig:arth_noisy}. In our experiments, \texttt{CauchyEst} and \texttt{CauchyEstTree} outperforms \texttt{BatchAvgLeastSquares} and \texttt{LeastSquares} by a large margin, and therefore are stable in both contaminated and well-conditioned case. Besides, note that different from the well-conditioned case, here \texttt{CauchyEstTree} performs slightly better than \texttt{CauchyEst}. This is because the Cholesky decomposition used in \texttt{CauchyEst} estimator takes contaminated-data into account. 

\begin{figure*}
\centering    
\subfigure[Ecoli70, 5/46 noisy nodes]{\label{fig:ecoli70_noisy1}\includegraphics[width=75mm]{\detokenize{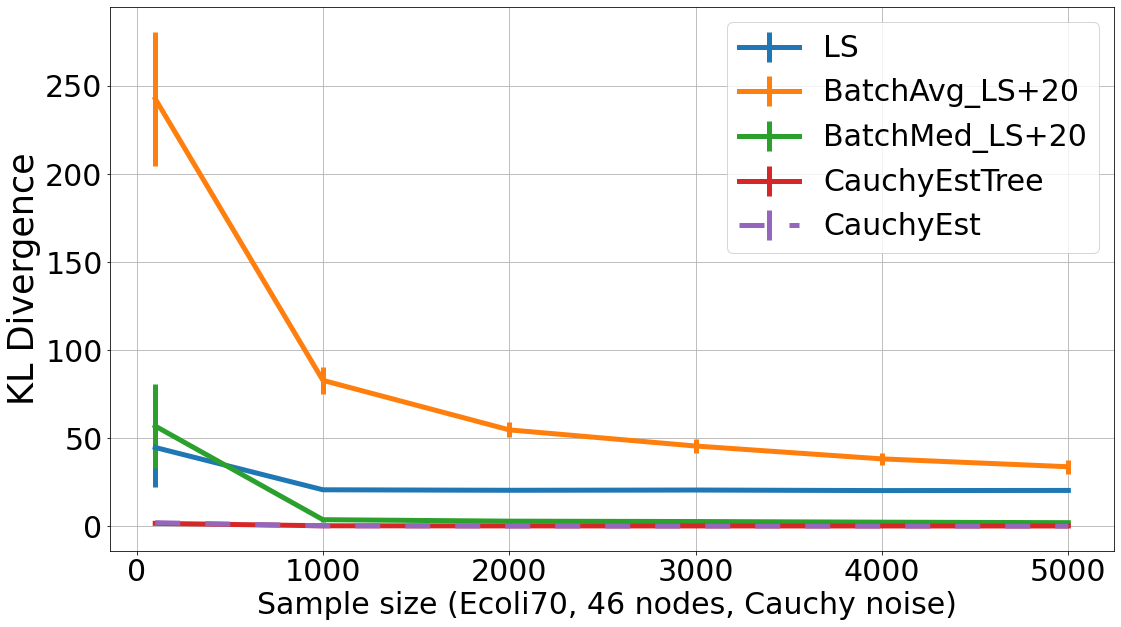}}}
\hspace{2mm}
\subfigure[CauchyEst and CauchyEstTree]{\label{fig:ecoli70_screen}\includegraphics[width=75mm]{\detokenize{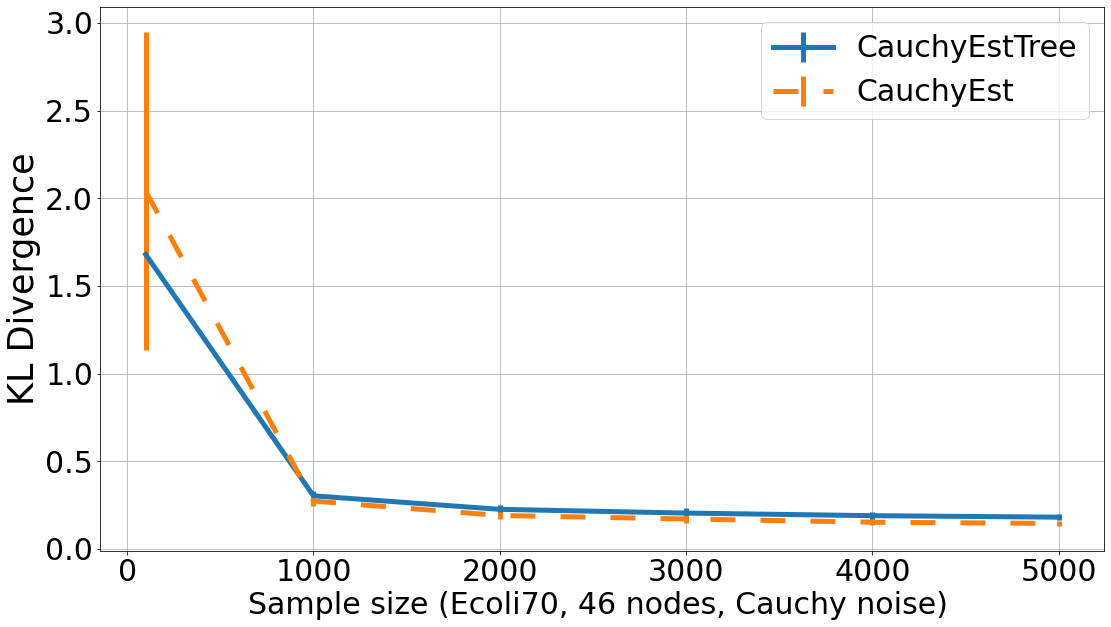}}}
\caption{Ecoli70 under contaminated condition}
\vskip -0.2cm
\label{fig:ecoli70_noisy}
\end{figure*}

\begin{figure*}
\centering    
\subfigure[Arth150, 5/107 noisy nodes]{\label{fig:arth150_noisy}\includegraphics[width=75mm]{\detokenize{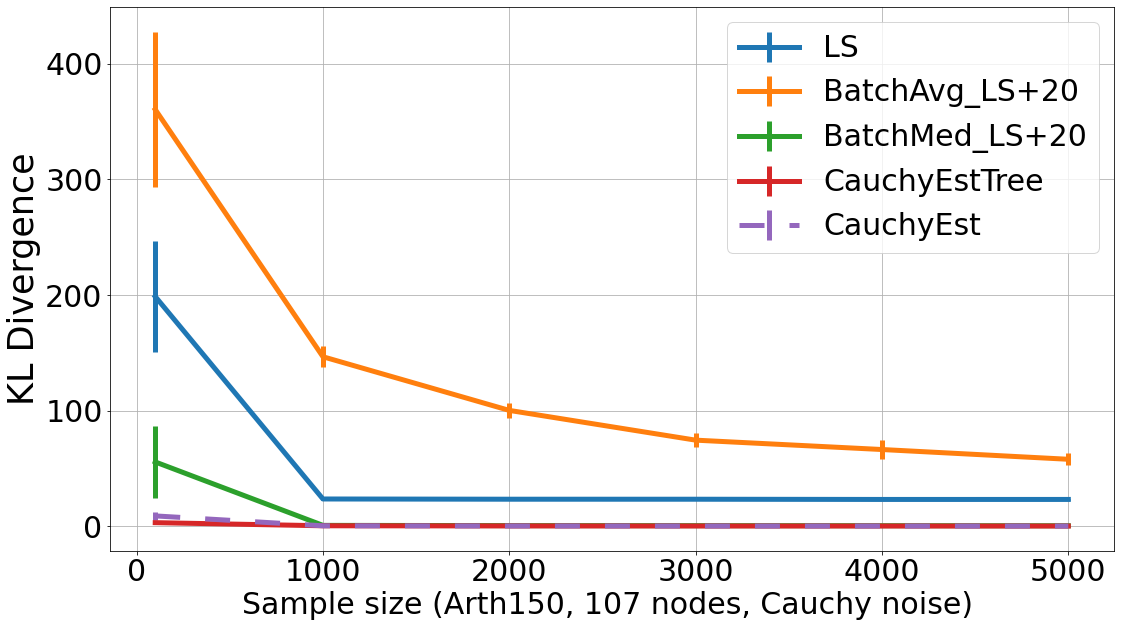}}}
\hspace{2mm}
\subfigure[CauchyEst and CauchyEstTree]{\label{fig:arth150_screen}\includegraphics[width=75mm]{\detokenize{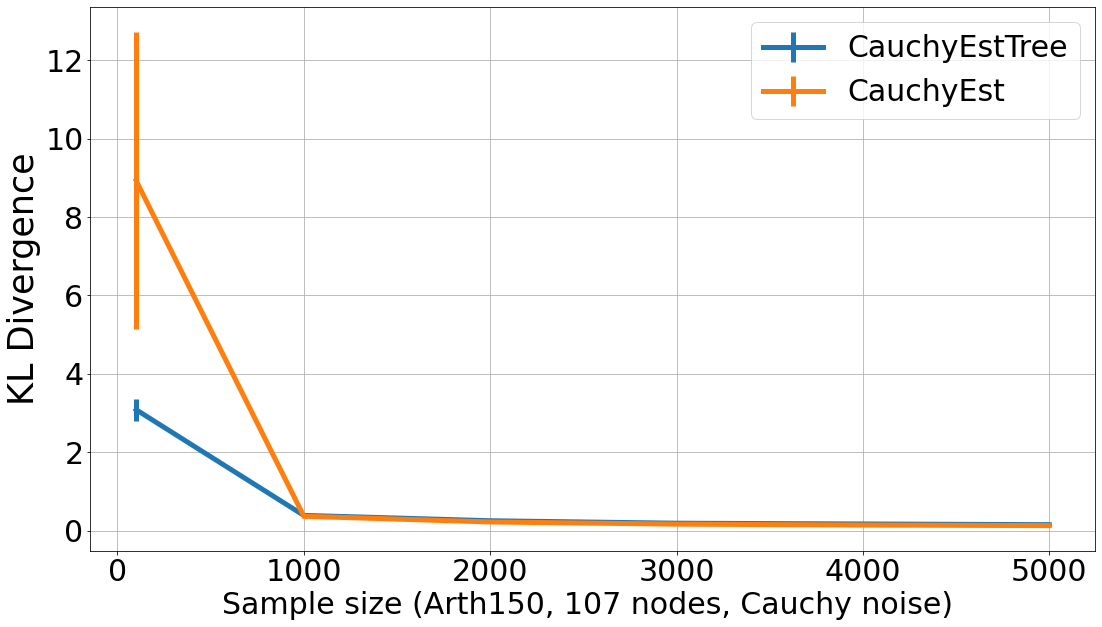}}}
\caption{Arth150 under contaminated condition}
\vskip -0.2cm
\label{fig:arth_noisy}
\end{figure*}

\paragraph{Ill-conditioned synthetic data}
The ill-conditioned data is generated in the following way: we classify the node sets $V$ into either well-conditioned or ill-conditioned nodes. The well-conditioned nodes have a $ \gauss(0, 1)$ noise, while
ill-conditioned nodes have a $ \gauss(0, 10^{-20})$ noise. In our experiments, we choose 3 ill-conditioned nodes over 100 nodes. Synthetic data is sampled from either a random tree or a Erd\H{o}s R\'enyi (ER) model with an expected number of neighbors $d = 5$. Experiments over ill-conditioned Gaussian Bayesian networks through 20 random repetitions are presented in \cref{fig:exp_ill}. For the ill-conditioned settings, we sometimes run into numerical issues when computing the Cholesky decomposition of the empirical covariance matrix $\hat{M}$ in our \texttt{CauchyEst} estimator. 
Thus, we only show the comparision results between \texttt{LeastSquares}, \texttt{BatchAvgLeastSquares}, \texttt{BatchMedLeastSquares}, and \texttt{CauchyEstTree}. 
Here also, the error for \texttt{LeastSquares} decreases faster than the other three estimators. The performance of \texttt{BatchMedLeastSquares} is worse than \texttt{BatchAvgLeastSquares} but slightly better than \texttt{CauchyEstTree} estimator.

\begin{figure*}
\centering    
\subfigure[ER graph, $d=5$]{\label{fig:ill_a}\includegraphics[width=80mm]{\detokenize{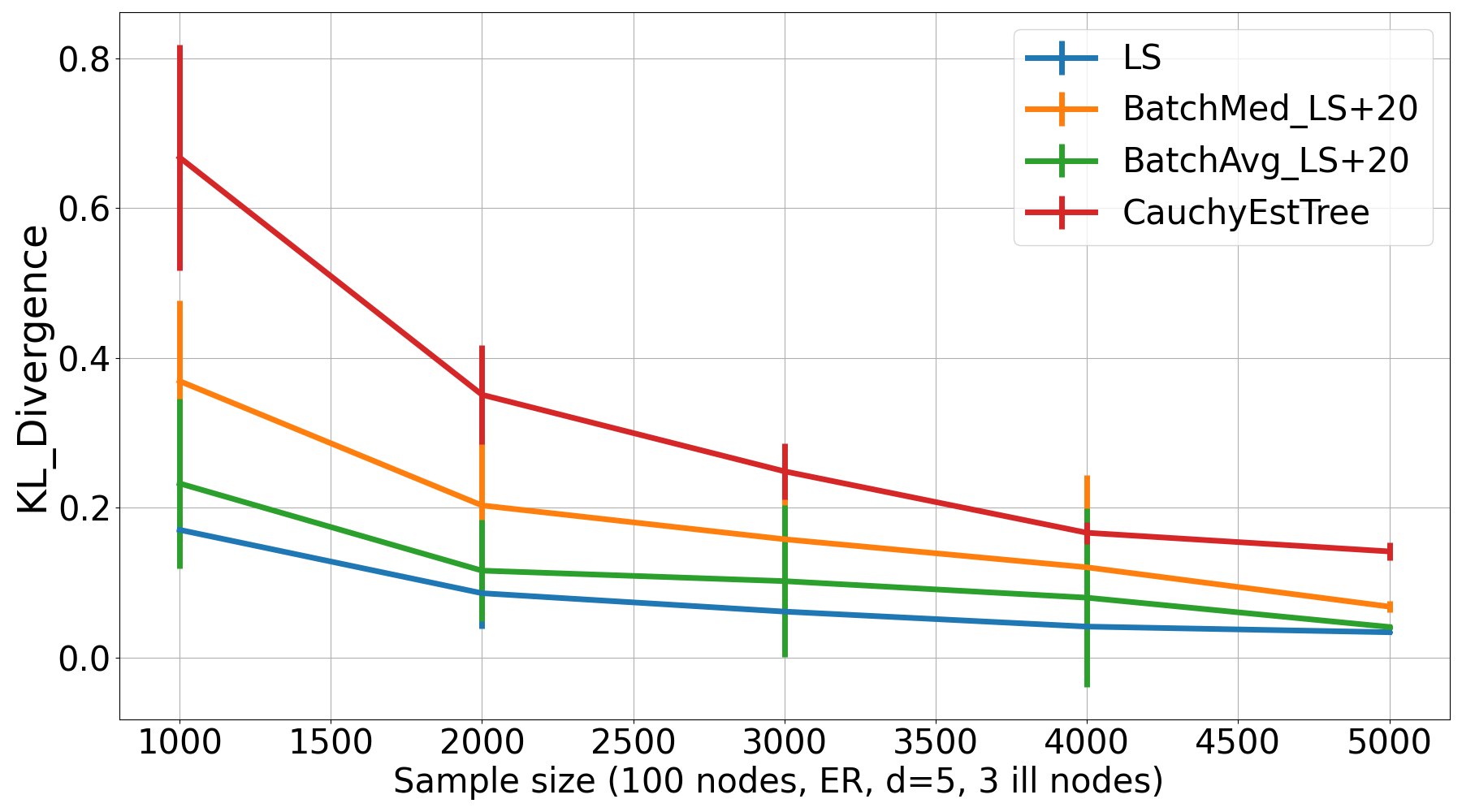}}}
\hspace{2mm}
\subfigure[Random tree]{\label{fig:ill_b}\includegraphics[width=80mm]{\detokenize{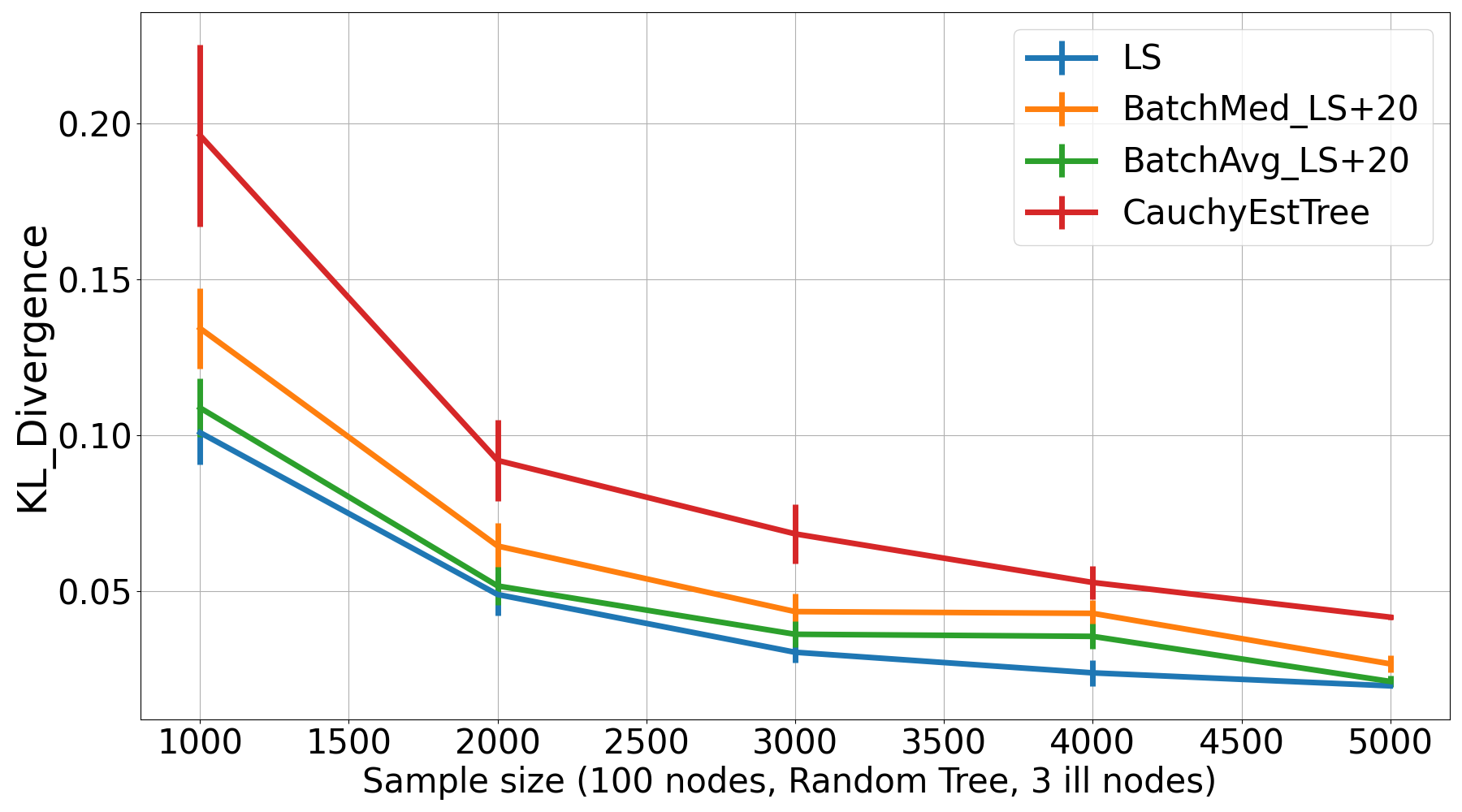}}}
\caption{Experiment results on ill-conditioned data}
\vskip -0.2cm
\label{fig:exp_ill}
\end{figure*}

\paragraph{Agnostic Learning} 
Our theoretical results treat the case that the data is realized by a distribution consistent with the given DAG. In this section, we explore learning of non-realizable inputs, so that there is a non-zero KL divergence between the input distribution $P$ and any distribution consistent with the given DAG.

We conduct agnostic learning experiments by fitting a random sub-graph of the ground truth graph. To do this, we first generate a 100-node ground truth graph $G$, either a random tree with 4 random edges removed or a random ER graph with 9 random edges removed. Our algorithm will try to fit the data from the original Bayes net on $G$ on the edited graph $G^*$. \cref{fig:exp_agn} reports the KL divergence learned over our five estimators. 
We find that \texttt{BatchAvgLeastSquares} estimator performs slightly better than all other estimators in both cases.

\begin{figure*}
\centering    
\subfigure[Random tree: 4 edges removed]{\label{fig:agn_a}\includegraphics[width=80mm]{\detokenize{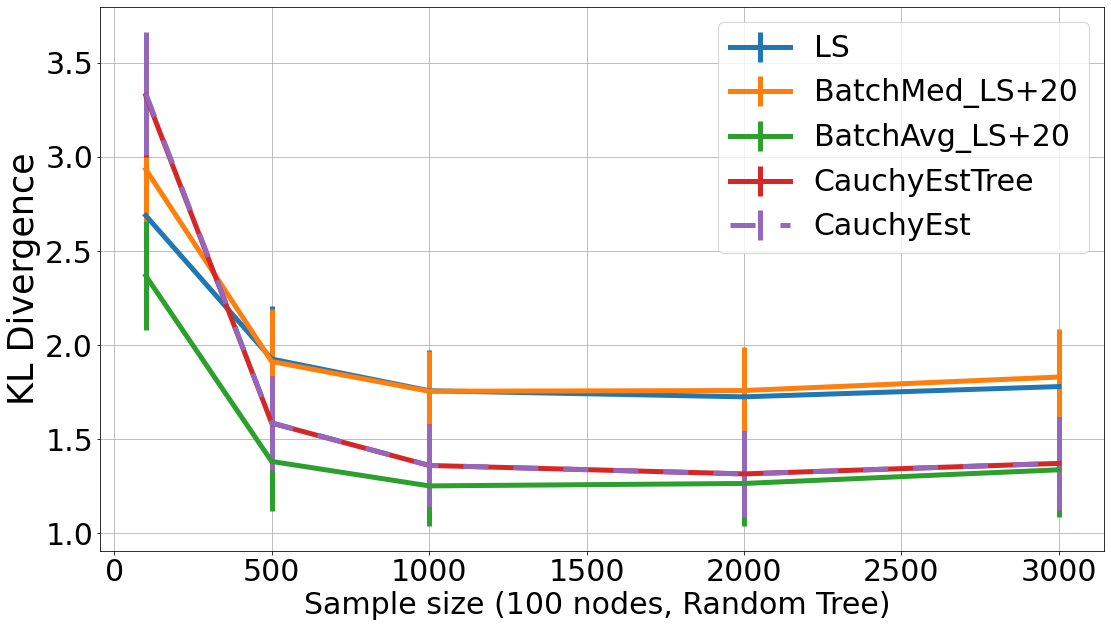}}}
\hspace{2mm}
\subfigure[ER graph, d=5: 9 edges removed]{\label{fig:agn_b}\includegraphics[width=80mm]{\detokenize{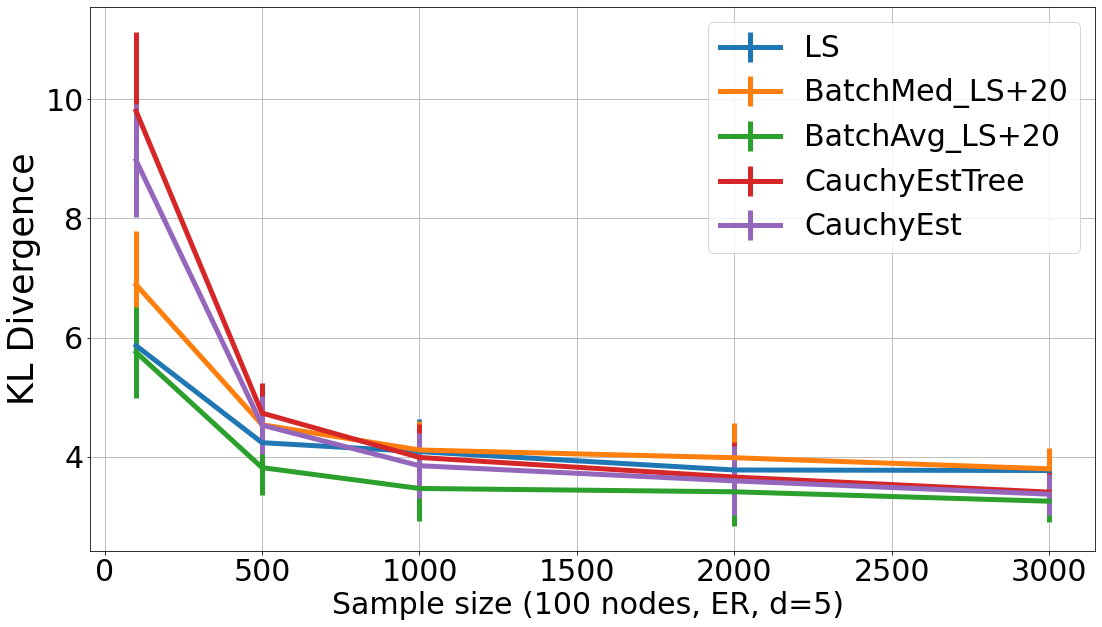}}}
\caption{Agnostic learning}
\vskip -0.2cm
\label{fig:exp_agn}
\end{figure*}

\paragraph{Effect of changing batch size}

\begin{figure*}
\centering    
\subfigure[ER graph, $d=2$]{\label{fig:mean_er_a}\includegraphics[width=80mm]{\detokenize{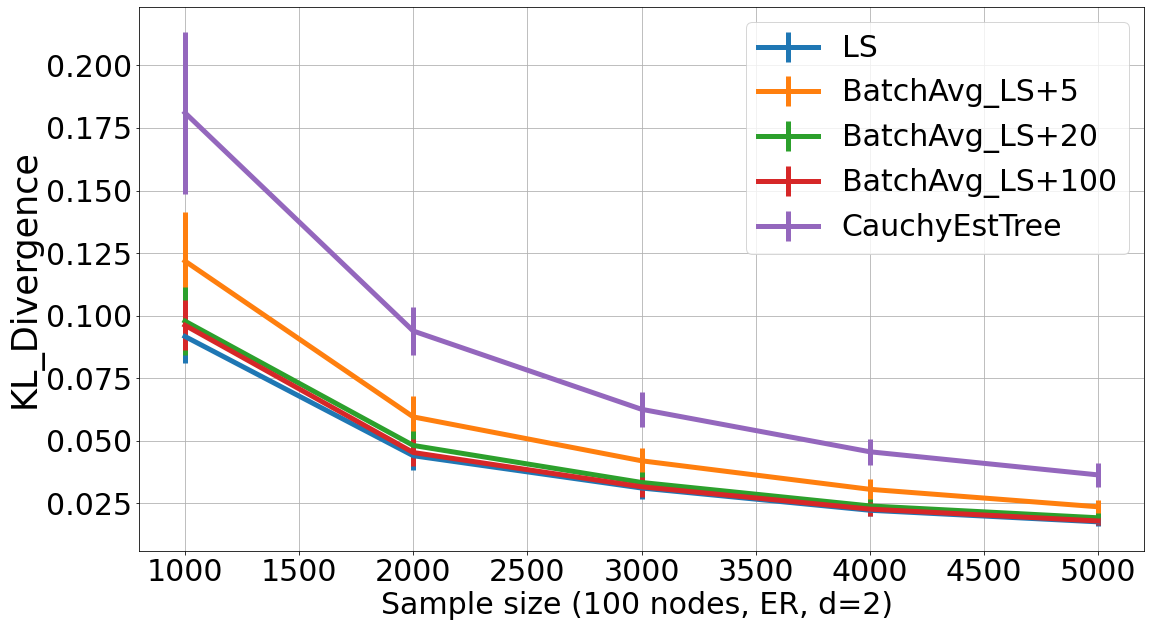}}}
\hspace{2mm}
\subfigure[ER graph, $d=5$]{\label{fig:mean_er_b}\includegraphics[width=80mm]{\detokenize{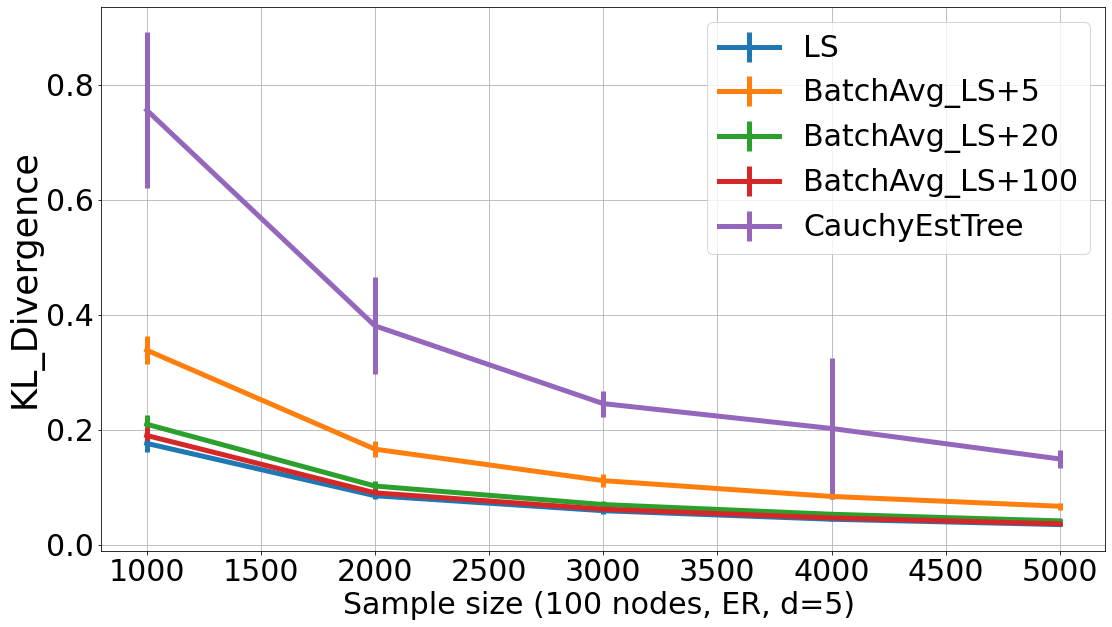}}}
\caption{Effect of changing batch size over Batch Average}
\vskip -0.2cm
\label{fig:batch_mean}
\end{figure*}

\begin{figure*}
\centering    
\subfigure[ER graph, $d=2$]{\label{fig:median_er_a}\includegraphics[width=80mm]{\detokenize{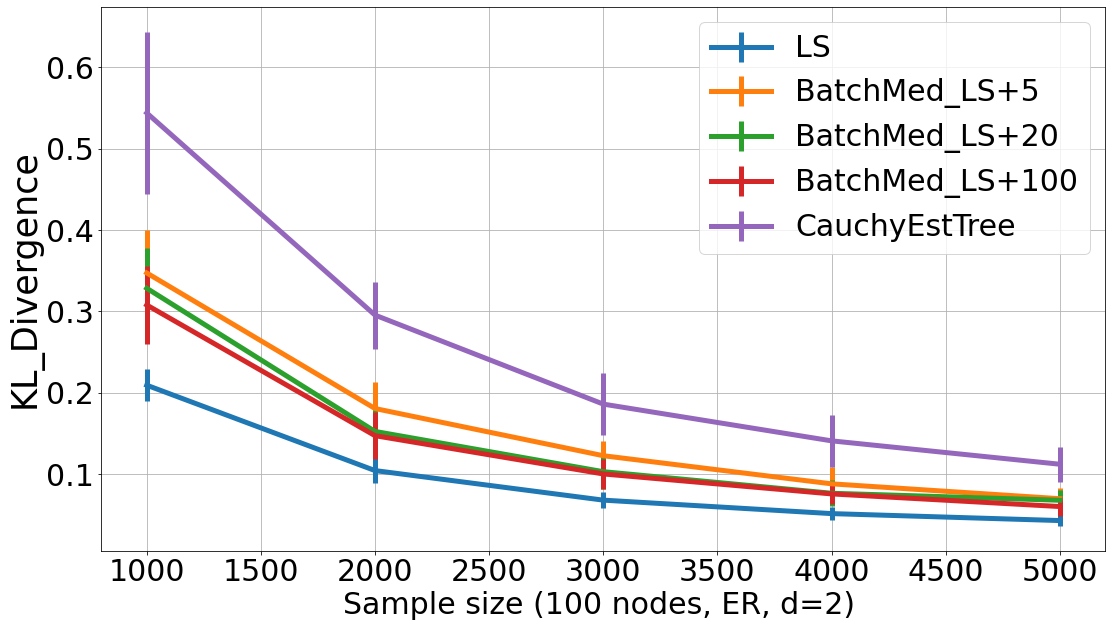}}}
\hspace{2mm}
\subfigure[ER graph, $d=5$]{\label{fig:median_er_b}\includegraphics[width=80mm]{\detokenize{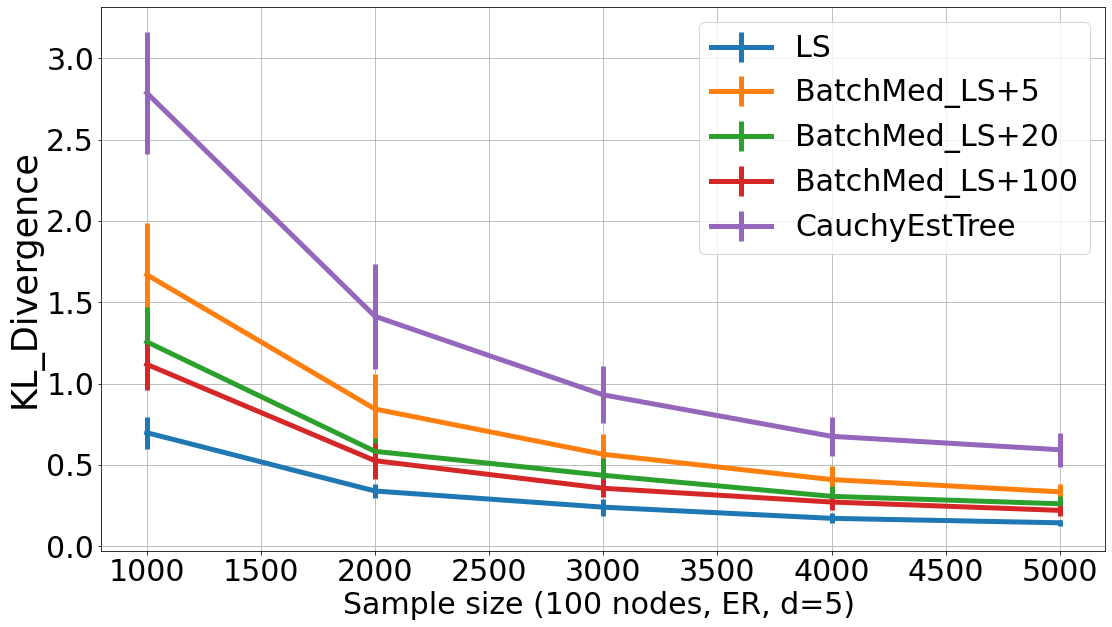}}}
\caption{Effect of changing batch size over Batch Median (ER)}
\vskip -0.2cm
\label{fig:batch_median}
\end{figure*}

Next, we experiment the trade off between the batch-size (eg. batch size = [5, 20, 100]\footnote{We use batch size up to 100 to ensure there are enough batch size from simulation data, so that either mean and median can converge.}) and the KL-divergence of our \texttt{BatchAvgLeastSquares} and \texttt{\justify BatchMedLeastSquares} estimators in detail. As shown in \cref{fig:batch_mean} and \cref{fig:batch_median}, when batch size increases, the results are closer to the \texttt{LeastSquares} estimator. In other words, \texttt{LeastSquares} can be seen as a special case of either \texttt{BatchAvgLeastSquares} or \texttt{BatchMedLeastSquares} with one batch only. Thus, when batch size increases, the performances of \texttt{BatchAvgLeastSquares} and \texttt{BatchMedLeastSquares} are closer to the \texttt{LeastSquares} estimator. On the contrary, the \texttt{CauchyEstTree} estimator can be seen as the estimator with a batch size of $p$. Therefore, with smaller batch size (eg. batch size $= p+5$),  \texttt{BatchAvgLeastSquares} and \texttt{BatchMedLeastSquares} performs closer to the \texttt{CauchyEstTree} estimator. 

\paragraph{Runtime comparison}
We now give the amount of time spent by each algorithm to process a degree-5 ER graph on 100 nodes with 500 samples.
\texttt{LeastSquares} algorithm takes 0.0096 seconds, \texttt{BatchAvgLeastSquares} with a batch size of $p+20$ takes 0.0415 seconds, \texttt{BatchMedLeastSquares} with a batch size of $p+20$ takes 0.0290 seconds, \texttt{CauchyEstTree} takes 0.6063 seconds, and \texttt{CauchyEst} takes 0.6307 seconds.
The timings given above are representative of the relative running times of these algorithms across different graph sizes.

\paragraph{Takeaways}
In summary, the \texttt{LeastSquares} estimator performs the best among all algorithms on uncontaminated datasets (both real and synthetic) generated from Gaussian Bayesian networks. This holds even when the data is ill-conditioned. However, if a fraction of the samples are contaminated, \texttt{CauchyEst}, \texttt{CauchyEstTree} and \texttt{BatchMedLeastSquares} outperform \texttt{LeastSquares} and \texttt{BatchAvgLeastSquares} by a large margin under different noise and graph types. If the data is not generated according to the input graph (i.e., the non-realizable/agnostic learning setting), then \texttt{BatchAvgLeastSquares}, \texttt{CauchyEst}, and \texttt{CauchyEstTree} have a better tradeoff between the error and sample complexity than the other algorithms, although we do not have a formal explanation.

\subsubsection*{Acknowledgements}
This research/project is supported by the National Research Foundation, Singapore under its AI Singapore Programme (AISG Award No: AISG-PhD/2021-08-013).

\bibliographystyle{alpha}
\bibliography{refs}
\newpage
\appendix

\section{Details on decomposition of KL divergence}
\label{sec:decomposition-details}

In this section, we provide the full derivation of \cref{eq:decomposition}.

For notational convenience, we write $x$ to mean $(x_1, \ldots, x_n)$, $\pi_i(x)$ to mean the values given to parents of variable $X_i$ by $x$, and $\mathcal{P}(x)$ to mean $\mathcal{P}(X_1 = x_1, \ldots, X_n = x_n)$.
Observe that
\begin{align*}
&\; \kl(\mathcal{P}, \mathcal{Q})\\
=&\; \int_{x} \mathcal{P}(x) \log \Paren{\frac{\mathcal{P}(x)}{\mathcal{Q}(x)}} dx\\
=&\; \int_{x} \mathcal{P}(x) \log \Paren{\frac{\Pi_{i=1}^n \mathcal{P}(x_i \mid \pi_i(x))}{\Pi_{i=1}^n \mathcal{Q}(x_i \mid \pi_i(x))}} dx && \text{$(\star)$}\\
=&\; \sum_{i=1}^n \int_{x} \mathcal{P}(x) \log
\Paren{\frac{\mathcal{P}(x_i \mid \pi_i(x))}{\mathcal{Q}(x_i \mid \pi_i(x))}} dx\\
=&\; \sum_{i=1}^n \int_{x_i, \pi_i(x)} \mathcal{P}(x_i, \pi_i(x)) \log
\Paren{\frac{\mathcal{P}(x_i \mid \pi_i(x))}{\mathcal{Q}(x_i \mid \pi_i(x))}} d x_i d \pi_i(x) && \text{Marginalization}
\end{align*}
where $(\star)$ is due to the Bayesian network decomposition of joint probabilities.
Let us define
\[
\cp(\alpha^*_i, \wh{\alpha}_i)
= \int_{x_i, \pi_i(x)} \mathcal{P}(x_i, \pi_i(x)) \log
\Paren{\frac{\mathcal{P}(x_i \mid \pi_i(x))}{\mathcal{Q}(x_i \mid \pi_i(x))}} d x_i d \pi_i(x)    
\]
where each $\wh{\alpha}_i$ and $\alpha^*_i$ represent the parameters that relevant to variable $X_i$ from $\wh{\alpha}$ and $\alpha^*$ respectively.
Under this notation, we can write
$
\kl(\mathcal{P}, \mathcal{Q})
= \sum_{i=1}^n \cp(\alpha^*_i, \wh{\alpha}_i)
$.

Fix a variable of interest $Y$ with parents $X_1, \ldots, X_p$, each with coefficient $c_i$, and variance $\sigma^2$.
That is, $Y = \eta_y + \sum_{i=1}^p c_i X_i$ for some $\eta_y \sim N(0, \sigma^2)$ that is independent of $X_1, \ldots, X_p$.
By denoting $X = x$ (i.e.\ $X_1 = x_1, \ldots, X_p = x_p$) and $c = (c_1, \ldots, c_p)$, we can write the conditional distribution density of $Y$ as
\[
\Pr\Paren{y \mid x, c, \sigma}
= \frac{1}{\sigma \sqrt{2 \pi}} \exp \Paren{-\frac{1}{2 \sigma^2} \cdot \Paren{y - \sum_{i=1}^p c_i X_i}^2}
\]

We now analyze $\cp(\alpha^*_y, \wh{\alpha}_y)$ with respect to the our estimates $\wh{\alpha}_y = (\wh{A}, \wh{\sigma}_y)$ and the hidden true parameters $\alpha^*_y = (A, \sigma_y)$, where $\wh{A} = (\wh{a}_{y \leftarrow 1}, \ldots, \wh{a}_{y \leftarrow p})$ and $A = (a_{y \leftarrow 1}, \ldots, a_{y \leftarrow p})$.

With respect to variable $Y$, we see that
\begin{align*}
&\; \cp(\alpha^*_y, \wh{\alpha}_y)\\
= &\; \int_{x, y} \mathcal{P}(x, y) \ln \Paren{\frac{\frac{1}{\sigma_y \sqrt{2 \pi}} \exp \Paren{-\frac{1}{2 \sigma^2} \cdot \Paren{y - \sum_{i=1}^p a_{y \leftarrow i} X_i}^2}}{\frac{1}{\wh{\sigma}_y \sqrt{2 \pi}} \exp \Paren{-\frac{1}{2 \wh{\sigma}_y^2} \cdot \Paren{y - \sum_{i=1}^p \wh{a}_{y \leftarrow i} X_i}^2}}} dy \ dx\\
= &\; \ln \Paren{\frac{\wh{\sigma}_y}{\sigma_y}} - \frac{1}{2 \sigma_y^2} \cdot \E_{x,y} \Paren{y - \sum_{i=1}^p a_{y \leftarrow i} X_i}^2 + \frac{1}{2 \wh{\sigma}_y^2} \cdot \E_{x,y} \Paren{y - \sum_{i=1}^p \wh{a}_{y \leftarrow i} X_i}^2\\
= &\; \ln \Paren{\frac{\wh{\sigma}_y}{\sigma_y}} - \frac{1}{2 \sigma_y^2} \cdot \E_{x,y} \Paren{y - A^\top X}^2 + \frac{1}{2 \wh{\sigma}_y^2} \cdot \E_{x,y} \Paren{y - \wh{A}^\top X}^2\\
\end{align*}

By defining $\Delta = \wh{A} - A$, we can see that for any instantiation of $y, X_1, \ldots, X_p$,
\begin{align*}
&\; \Paren{y - \wh{A}^\top X}^2\\
= &\; \Paren{y - (\Delta + A)^\top X}^2 && \text{By definition of $\Delta$}\\
= &\; \Paren{(y - A^\top X) - \Delta^\top X}^2\\
= &\; (y - A^\top X)^2 - 2 (y - A^\top X)(\Delta^\top X) + \Paren{\Delta^\top X}^2\\
= &\; (y - A^\top X)^2 - 2 \Paren{y \Delta^\top X - A^\top X \Delta^\top X} + \Paren{\Delta^\top X}^2\\
= &\; (y - A^\top X)^2 - 2 \Paren{y X^\top \Delta - A^\top X X^\top \Delta} + \Delta^\top XX^\top \Delta && \text{Since $\Delta^\top X$ is just a number}
\end{align*}

Denote the covariance matrix with respect to $X_1, \ldots, X_p$ as $M \in \mathbb{R}^{p \times p}$, where $M_{i,j} = \E\Brac{X_i X_j}$.
Then, we can further simplify $\cp(\alpha^*_y, \wh{\alpha}_y)$ as follows:
\begin{align*}
&\; \cp(\alpha^*_y, \wh{\alpha}_y)\\
= &\; \ln \Paren{\frac{\wh{\sigma}_y}{\sigma_y}} - \frac{1}{2 \sigma_y^2} \cdot \E_{x,y} \Paren{y - A^\top X}^2 + \frac{1}{2 \wh{\sigma}_y^2} \cdot \E_{x,y} \Paren{y - \wh{A}^\top X}^2 && \text{From above}\\
= &\; \ln \Paren{\frac{\wh{\sigma}_y}{\sigma_y}} - \frac{1}{2 \sigma_y^2} \cdot \E_{x,y} \Paren{y - A^\top X}^2\\
&\; + \frac{1}{2 \wh{\sigma}_y^2} \cdot \E_{x,y} \Brac{(y - A^\top X)^2 - 2 \Paren{y X^\top \Delta - A^\top X X^\top \Delta} + \Delta^\top XX^\top \Delta} && \text{From above}\\
= &\; \ln \Paren{\frac{\wh{\sigma}_y}{\sigma_y}} - \frac{1}{2 \sigma_y^2} \cdot \E_{x,y} \eta_y^2 + \frac{1}{2 \wh{\sigma}_y^2} \cdot \E_{x,y} \Brac{\eta_y^2 - 2 \Paren{\eta_y X^\top \Delta} + \Delta^\top XX^\top \Delta} && \text{$(\dag)$}\\
= &\; \ln \Paren{\frac{\wh{\sigma}_y}{\sigma_y}} - \frac{1}{2 \sigma_y^2} \cdot \sigma_y^2 + \frac{1}{2 \wh{\sigma}_y^2} \cdot \Brac{\sigma_y^2 - 0 + \Delta^\top M \Delta} && \text{$(\ast)$}\\
= &\; \ln \Paren{\frac{\wh{\sigma}_y}{\sigma_y}} - \frac{1}{2} + \frac{\sigma_y^2}{2 \wh{\sigma}_y^2} + \frac{\Delta^\top M \Delta}{2 \wh{\sigma}_y^2}\\
= &\; \ln \Paren{\frac{\wh{\sigma}_y}{\sigma_y}} + \frac{\sigma_y^2 - \wh{\sigma}_y^2}{2 \wh{\sigma}_y^2} + \frac{\Delta^\top M \Delta}{2 \wh{\sigma}_y^2}
\end{align*}
where $(\dag)$ is because $y = \eta_y + A^\top X$ while $(\ast)$ is because $\eta_y \sim N(0, \sigma_y^2)$, $\E_{x,y} \Paren{\eta_y X^\top \Delta} = \E_{x,y} \eta_y \cdot \E_{x,y} X^\top \Delta = 0$, and $\E_{x,y} \Delta^\top XX^\top \Delta = \Delta^\top (\E_{x,y} XX^\top) \Delta = \Delta^\top M \Delta$.

In conclusion, we have
\begin{equation}
\kl(\mathcal{P}, \mathcal{Q})
= \sum_{i=1}^n \cp(\alpha^*_i, \wh{\alpha}_i)
= \sum_{i=1}^n \ln \Paren{\frac{\wh{\sigma}_i}{\sigma_i}} + \frac{\sigma_i^2 - \wh{\sigma}_i^2}{2 \wh{\sigma}_i^2} + \frac{\Delta_i^\top M_i \Delta_i}{2 \wh{\sigma}_i^2}
\end{equation}
where $M_i$ is the covariance matrix associated with variable $X_i$, $\alpha_i^* = (A_i, \sigma_i)$ is the coefficients and variance associated with variable $X_i$, $\alpha_i = (\wh{A}_i, \wh{\sigma}_i)$ are the estimates for $\alpha_i^*$, and $\Delta_i = \wh{A}_i - A_i$.

\section{Deferred proofs}
\label{sec:deferred-proofs}

This section provides the formal proofs that were deferred in favor for readability.
For convenience, we will restate the statements before proving them.

The next two lemmata \cref{lem:gtg} and \cref{lem:Getanorm} are used in the proof of \cref{lem:least-squares-single}, which is in turn used in the proof of \cref{thm:two-phased-leastsquares}.
We also use \cref{lem:gtg} in the proof of \cref{lem:BatchAvgLeastSquares-single}.
The proof of \cref{lem:gtg} uses the standard result of \cref{lem:smallest-singular-value}.

\begin{lemma}[\cite{rudelson2009smallest}; Theorem 6.1 and Equation 6.10 in \cite{wainwright2019high}]
\label{lem:smallest-singular-value}
Let $\ell \geq d$ and $G \in \R^{\ell \times d}$ be a matrix with i.i.d.\ $N(0,1)$ entries.
Denote $\sigma_{\min}(G)$ as the smallest singular value of $G$.
Then, for any $0 < t < 1$, we have
$
\Pr\Paren{ \sigma_{\min}(G) \geq \sqrt{\ell}(1-t) - \sqrt{d} }
\leq \exp\Paren{ -\ell t^2/2 }
$.
\end{lemma}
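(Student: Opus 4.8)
The plan is to combine two standard ingredients from high-dimensional probability: the Gaussian concentration inequality for Lipschitz functions of i.i.d.\ standard normals, and Gordon's comparison inequality for the expected smallest singular value. First I would write the smallest singular value as the variational quantity $\sigma_{\min}(G) = \min_{v \in S^{d-1}} \Norm{Gv}$, where $S^{d-1}$ is the unit sphere in $\R^d$. Viewing $G$ as a point in $\R^{\ell d}$ equipped with the Euclidean (Frobenius) norm, the map $G \mapsto \sigma_{\min}(G)$ is $1$-Lipschitz: taking $v$ to be the minimizing unit vector for $G'$, we have $\sigma_{\min}(G) - \sigma_{\min}(G') \le \Norm{Gv} - \Norm{G'v} \le \Norm{(G-G')v} \le \Norm{G - G'}_F$, and the reverse inequality by symmetry. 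Since the $\ell d$ entries of $G$ are i.i.d.\ $N(0,1)$, the Gaussian concentration inequality then yields, for any $s > 0$, the one-sided estimate $\Pr\Paren{\sigma_{\min}(G) \le \E\Brac{\sigma_{\min}(G)} - s} \le \exp\Paren{-s^2/2}$.

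It remains to control the expectation. Here I would invoke Gordon's inequality (a consequence of the Slepian--Gordon Gaussian comparison lemma), which gives $\E\Brac{\sigma_{\min}(G)} \ge \sqrt{\ell} - \sqrt{d}$ for $\ell \ge d$; this is precisely the content underlying Theorem~6.1 and Equation~6.10 of \cite{wainwright2019high}. Substituting this lower bound into the concentration estimate and choosing $s = t\sqrt{\ell}$ for $0 < t < 1$, the threshold becomes $\E\Brac{\sigma_{\min}(G)} - t\sqrt{\ell} \ge \sqrt{\ell}(1-t) - \sqrt{d}$ while the exponent becomes $-s^2/2 = -\ell t^2/2$. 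Consequently $\sigma_{\min}(G) \ge \sqrt{\ell}(1-t) - \sqrt{d}$ holds with probability at least $1 - \exp\Paren{-\ell t^2/2}$; equivalently, the event that $\sigma_{\min}(G)$ lies at or beyond the threshold $\sqrt{\ell}(1-t) - \sqrt{d}$ carries all but $\exp\Paren{-\ell t^2/2}$ of the probability mass, which is the quantitative tail relating $\sigma_{\min}(G)$ to this threshold. This is exactly the form used downstream in \cref{lem:gtg}, where it certifies $\Norm{(G^\top G)^{-1}}_{op} = \sigma_{\min}(G)^{-2}$ is small with high probability.

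The main obstacle is the expectation lower bound rather than the concentration step, which is essentially automatic once Lipschitzness is noted. Gordon's inequality is the nontrivial input: it compares the Gaussian min--max process $\min_{u}\max_{v} u^\top G v$ (whose value, after restricting $u, v$ to the appropriate unit spheres, equals $\sigma_{\min}(G)$) against a decoupled process whose expectation is computable as $\sqrt{\ell} - \sqrt{d}$ via the standard estimates $\sqrt{k-1} \le \E\Norm{g} \le \sqrt{k}$ for a $k$-dimensional standard Gaussian vector $g$. One subtlety I would watch is that Gaussian concentration is cleanest around the mean, so if a median-based version of Gordon's bound were used instead, I would pass between mean and median using the same concentration inequality, since they differ by an $O(1)$ amount absorbed into the threshold; because the exponent $-\ell t^2/2$ is what we need, this translation does not alter the final form. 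With the expectation bound in hand, the substitution $s = t\sqrt{\ell}$ closes the argument.
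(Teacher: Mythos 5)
Your proof is correct and is precisely the standard argument behind the cited result: the paper offers no proof of this lemma, deferring to Theorem 6.1 and Equation 6.10 of \cite{wainwright2019high}, whose derivation is exactly your combination of the $1$-Lipschitz property of $\sigma_{\min}$ with respect to the Frobenius norm, Gaussian concentration for Lipschitz functions, and Gordon's comparison bound $\E\Brac{\sigma_{\min}(G)} \geq \sqrt{\ell} - \sqrt{d}$, followed by the substitution $s = t\sqrt{\ell}$. One remark worth recording: what you (correctly) derive is the lower-tail bound $\Pr\Paren{\sigma_{\min}(G) \leq \sqrt{\ell}(1-t) - \sqrt{d}} \leq \exp\Paren{-\ell t^2/2}$, whereas the lemma as printed has the inequality inside the probability reversed (evidently a typo), and your reading is the one actually used downstream in the proof of \cref{lem:gtg}.
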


\gtg*
\begin{proof}[Proof of \cref{lem:gtg}]
Observe that $G^\top G$ is symmetric, thus $(G^\top G)^{-1}$ is also symmetric and the eigenvalues of $G^\top G$ equal the singular values of $G^\top G$.
Also, note that event that $G^\top G$ is singular has measure 0.\footnote{Consider fixing all but one arbitrary entry of $G$. The event of this independent $N(0,1)$ entry making $\det(G^\top G) = 0$ has measure 0.}

By definition of operation norm, $\Norm{(G^\top G)^{-1}}$ equals the square root of \emph{maximum} eigenvalue of
\[
((G^\top G)^{-1})^\top ((G^\top G)^{-1}) = ((G^\top G)^{-1})^2,
\]
where the equality is because $(G^\top G)^{-1}$ is symmetric.
Since $G^\top G$ is invertible, we have $\|(G^\top G)^{-1}\| = 1/\|G^\top G\|$, which is equal to the inverse of \emph{minimum} eigenvalue $\lambda_{\min}(G^\top G)$ of $G^\top G$, which is in turn equal to the square of minimum singular value $\sigma_{\min}(G)$ of $G$.

Therefore, the following holds with probability at least $1 - \exp\Paren{ -k c_1^2/2 }$:
\[
\Norm{\Paren{G^\top G}^{-1}}
= \frac{1}{\Norm{G^\top G}}
= \frac{1}{\lambda_{\min}(G^\top G)}
= \frac{1}{\sigma_{\min}^2(G)}
\leq \frac{1}{\Paren{\sqrt{k}(1 - c_1) - \sqrt{d}}^2}
\leq \frac{1}{\Paren{1 - 2c_1}^2 k}
\]
where the second last inequality is due to \cref{lem:smallest-singular-value} and the last inequality holds when $k \geq d / c_1^2$.
\end{proof}

\Getanorm*
\begin{proof}[Proof of \cref{lem:Getanorm}]
Let us denote $g_r \in \mathbb{R}^{k}$ as the $r^{th}$ row of $G^\top$.
Then, we see that $\|G^\top \eta\|_2^2 = \sum_{r=1}^{p} \langle g_r, \eta \rangle^2$.
For any row $r$, we see that $\langle g_r, \eta \rangle = \|\eta\|_2 \cdot \langle g_r, \eta / \|\eta\|_2 \rangle$.
We will bound values of $\|\eta\|_2$ and $| \langle g_r, \eta / \|\eta\|_2 \rangle |$ separately.

It is well-known (e.g.\ see \cite[Lemma 2]{jin2019short}) that the norm of a Gaussian vector concentrates around its mean.
So, $\Pr\Paren{\|\eta\|_2 \leq 2 \sigma\sqrt{k}} \leq 2 \exp\Paren{ - 2k}$.
Since $g_r \sim N(0, I_{k})$ and $\eta$ are independent, we see that $\langle g_r, \eta / \|\eta\|_2 \rangle \sim N(0,1)$.
By standard Gaussian bounds, we have that $\Pr\Brac{ | \langle g_r, \eta / \|\eta\|_2 \rangle | \geq c_2 } \leq \exp\Paren{ - c_2^2 / 2}$.

By applying a union bound over these two events, we see that $\| \langle g_r, \eta \rangle \| < 2 \sigma c_2 \sqrt{k}$ for any row with probability $2 \exp\Paren{ - 2k} + \exp\Paren{ - c_2^2 / 2}$.
The claim follows from applying a union bound over all $p$ rows.
\end{proof}

\cauchymedianconvergence*
\begin{proof}[Proof of \cref{lem:cauchy-median-convergence}]
Let $S_{> \tau} = \sum_{i=1}^m \mathbbm{1}_{X_i > \tau}$ be the number of values that are larger than $\tau$, where $\E[\mathbbm{1}_{X_i > \tau}] = \Pr(X \geq \tau)$.
Similarly, let $S_{< -\tau}$ be the number of values that are smaller than $-\tau$.
If $S_{> \tau} < m/2$ \emph{and} $S_{< -\tau} < m/2$, then we see that $\med\Brace{X_1, \ldots, X_m} \in [-\tau, \tau]$.

For a random variable $X \sim \cau(0,1)$, we know that $\Pr(X \leq x) = 1/2 + \arctan(x) / \pi$.
For $0 < \tau < 1$, we see that $\Pr(X \geq \tau) = 1/2 - \arctan(\tau) / \pi \leq 1/2 - \tau / 4$.
By additive Chernoff bounds, we see that
\[
\Pr\Paren{S_{> \tau} \geq \frac{m}{2}}
\leq \exp\Paren{ -\frac{2 m^2 \tau^2}{16 m}}
= \exp\Paren{ -\frac{m \tau^2}{8} }
\]
Similarly, we have $\Pr\Paren{S_{< -\tau} \geq m/2} \leq \exp\Paren{ -m \tau^2/8 }$.
The claim follows from a union bound over the events $S_{> \tau} \geq m/2$ and $S_{< -\tau} \geq m/2$.
\end{proof}

\ignore{
\twophased*
\begin{proof}[Proof of \cref{thm:two-phased}]
The total sample complexity of \cref{alg:recovery-two-phased} is $m = m_1 + m_2$.

{\color{red}Update for batchleastsquares}

By \cref{thm:least-squares} and \cref{thm:cauchyest}, we know the following:
\yuhao{lots of question mark in this section}
Both \texttt{LeastSquares} and \texttt{CauchyEst} recover estimates $\wh{A}_1, \ldots, \wh{A}_n$ such that
\[
\Pr\Paren{ \forall i \in [n], \Abs{\Delta_i^\top M_i \Delta_i} \geq \sigma_i^2 \cdot \frac{\epsilon p_i}{n d_{avg}} } \leq \delta
\]
For \texttt{LeastSquares}, $m_1 = \frac{32n d_{avg}}{\epsilon} \cdot \ln\Paren{\frac{3n}{\delta}}$ samples suffices.
For \texttt{CauchyEst}, $m_1 = \frac{8 n d_{avg} d}{\epsilon} \log\Paren{\frac{2n}{\delta}}$ samples suffices.
Note that for polytrees, $n d_{avg} \leq n-1$.

Using the estimated coefficients from \texttt{LeastSquares} or \texttt{CauchyEst}, \cref{thm:recovery-variance} tells us that $m_2 = \frac{32n d_{avg}}{\epsilon} \log\Paren{\frac{2n}{\delta}}$ samples suffice to recover estimates $\wh{\sigma}_1, \ldots, \wh{\sigma}_n$ such that
\[
\Pr\Paren{\forall i \in [n], \Paren{1 - \sqrt{\frac{\epsilon p_i}{n d_{avg}}}} \cdot \sigma_i^2 \leq \wh{\sigma}_i^2 \leq \Paren{1 + \sqrt{\frac{\epsilon p_i}{n d_{avg}}}} \cdot \sigma_i^2} \geq 1 - \delta
\]

By \cref{thm:decomposition-implication}, we will recover an induced probability distribution $\mathcal{Q}$ such that $d_{KL}(\mathcal{P}, \mathcal{Q}) \leq 3\epsilon$.
\end{proof}}

\section{Median absolute deviation}
\label{sec:mad}

In this section, we give a pseudo-code of the well-known Median Absolute Deviation (MAD) estimator (see~\cite{huber2004robust} for example) which we use for the component-wise variance recovery in the contaminated setting.
The scale factor, $1/\Phi^{-1}(3/4) \approx 1.4826$ below, is needed to make the estimator unbiased.

\begin{algorithm}[htbp]
\caption{\texttt{MAD}: Variance recovery in the contaminated setting}
\label{algo:mad}
\begin{algorithmic}[1]
    \State \textbf{Input}: Contaminated samples $\{x_1,x_2,\dots,x_m\}$ from a univariate Gausssian
    \State $\wh{\mu}\gets \med\Brace{x_1,x_2,\dots,x_m}$.
    \State $\wh{\sigma}\gets 1.4826\cdot\med\Brace{|x_1-\wh{\mu}|,|x_2-\wh{\mu}|,\dots,|x_m-\wh{\mu}|}$.
    \State \Return $\wh{\sigma}$
\end{algorithmic}
\end{algorithm}

\end{document}